  \providecommand\BibTeX{{%
    \normalfont B\kern-0.5em{\scshape i\kern-0.25em b}\kern-0.8em\TeX}}}
\newcommand{\rref}[2][]{\prettyref{#2}}
\definecolor{semblue}{rgb}{0,0,0.7}
\definecolor{vgreen}{rgb}{.1,.5,0}%
\definecolor{vdarkgreen}{rgb}{.06,.3,0}%
\definecolor{vred}{rgb}{.7,0,0}%
\definecolor{vblue}{rgb}{.1,.15,.62}%
\definecolor{vgray}{rgb}{.35,.35,.35}
\definecolor{darkishgray}{rgb}{.35,.35,.35}
\definecolor{vvblue}{rgb}{.14,.21,.868}%
\newtheorem{theorem}{Theorem}[section]
\newtheorem{proposition}[theorem]{Proposition}
\newtheorem{corollary}[theorem]{Corollary}
\newtheorem{lemma}[theorem]{Lemma}
\newtheorem{definition}[theorem]{Definition}
\theoremstyle{definition}
\newtheorem{example}[theorem]{Example}
\newtheorem{remark}[theorem]{Remark}
\newcommand{\R}{\mathbb{R}}
\newcommand{\N}{\mathbb{N}}
\newcommand{\Q}{\mathbb{Q}}
\newcommand{\V}{\mathbb{V}}
\newcommand{\IQ}{\mathbb{I}\mathbb{Q}}
\newcommand{\A}{\mathcal{A}}
\newcommand{\B}{\mathcal{B}}
\renewcommand{\S}{\mathbb{S}}
\renewcommand{\phi}{\varphi}
\newcommand{\eps}{\varepsilon}
\newcommand{\eval}[1]{\left\llbracket #1 \right\rrbracket}
\newcommand{\norm}[1]{\left\lVert#1\right\rVert}
\newcommand{\bebecomes}{\mathrel{::=}}
\newcommand{\alternative}{~|~}
\newcommand{\folr}{\text{FOL}_{\R}}
\newcommand{\leftrule}{L}
\newcommand{\rightrule}{R}
\newsavebox{\Rval}%
\sbox{\Rval}{$\scriptstyle\mathbb{R}$}
\newsavebox{\backiterateb}%
\sbox{\backiterateb}{$\scriptstyle\overleftarrow{\dibox{{}^*}}$}
\newcommand{\fvarA}{\phi}
\newcommand{\fvarB}{\psi}
\newcommand{\rfvar}{P}
\newcommand{\D}[1]{#1'}
\begin{document}

\renewcommand{\linferPremissSeparation}{\hspace{0.8cm}}

\title[Axiomatization of Compact Initial Value Problems: Open Properties]{Axiomatization of Compact Initial Value Problems:
\\ Open Properties}
\author{Andr\'e Platzer}
\author{Long Qian}
\email{platzer@kit.edu}
\email{longq@andrew.cmu.edu}
\orcid{0000-0001-7238-5710}
\orcid{0000-0003-1567-3948}
\affiliation{%
  \institution{Karlsruhe Institute of Technology}
  \city{Karlsruhe}
  \country{Germany}
  }
\affiliation{%
\institution{Carnegie Mellon University}
\city{Pittsburgh}
\country{USA}
}

\begin{abstract}
    This article proves the completeness of an axiomatization for \emph{initial value problems (IVPs)} with compact initial conditions and compact time horizons for bounded open safety, open liveness and existence properties.
    Completeness systematically reduces the proofs of these properties to a complete axiomatization for differential equation invariants.
    This result unifies symbolic logic and numerical analysis by a computable procedure that generates symbolic proofs with differential invariants for rigorous error bounds of numerical solutions to polynomial initial value problems.
    The procedure is modular and works for all polynomial IVPs with rational coefficients and initial conditions and symbolic parameters constrained to compact sets. Furthermore, this paper discusses generalizations to IVPs with initial conditions/symbolic parameters that are not necessarily constrained to compact sets, achieved through the derivation of fully symbolic axioms/proof-rules based on the axiomatization. 
    
\end{abstract}

\begin{CCSXML}
<ccs2012>
<concept>
<concept_id>10003752.10003753.10003765</concept_id>
<concept_desc>Theory of computation~Timed and hybrid models</concept_desc>
<concept_significance>500</concept_significance>
</concept>
<concept>
<concept_id>10003752.10003790.10003793</concept_id>
<concept_desc>Theory of computation~Modal and temporal logics</concept_desc>
<concept_significance>500</concept_significance>
</concept>
<concept>
<concept_id>10003752.10003790.10003806</concept_id>
<concept_desc>Theory of computation~Programming logic</concept_desc>
<concept_significance>500</concept_significance>
</concept>
\end{CCSXML}

\ccsdesc[500]{Mathematics of computing~Ordinary differential equations}
\ccsdesc[500]{Theory of computation~Timed and hybrid models}
\ccsdesc[500]{Theory of computation~Proof Theory}
\ccsdesc[500]{Theory of computation~Modal and temporal logics}
\ccsdesc[500]{Theory of computation~Programming logic}

\keywords{Differential equation axiomatization, invariants, differential dynamic logic}

\maketitle

\section{Introduction}

Differential equations and their analysis play a fundamental role in cyber-physical systems (CPS) correctness \cite{Alur_2015, DBLP:books/sp/Platzer18}.
Classically, the descriptive power of differential equations exceeds the analytic power of differential equations \cite{DBLP:journals/jacm/PlatzerT20}, since solutions of differential equations are usually significantly more complicated, not computable in closed form or less analyzable than the differential equations themselves.
That is why Henri Poincar\'e in 1881 called for the qualitative theory of differential equations \cite{Poincare81}, i.e., the study of differential equations directly via their differential equations rather than indirectly via their solutions.
The logical foundations of the qualitative theory of differential equation invariants have been discovered in a complete axiomatization of differential equation invariants \cite{DBLP:journals/jacm/PlatzerT20}.
In that axiomatization, every true (semialgebraic) invariant of a (polynomial) differential equation system can be proved effectively in differential dynamic logic \dL \cite{DBLP:conf/lics/Platzer12b,DBLP:journals/jar/Platzer17}, and every false invariant can be disproved, thereby leading to a purely logic-based proof-producing decision procedure.
But in CPS applications, even just finding invariants is challenging.
A CPS starts at an initial state within an initial region and follows a differential equation, where the question is whether it then always stays safe, which may still be far from an invariance question if the initial and safe region are very different.

This article, thus, studies the logical foundations of \emph{(compact) Initial Value Problems} (\emph{IVPs}).
In a (compact) IVP a (polynomial) differential equation on a compact time interval (with rational endpoints) starts from some initial value in a compact semialgebraic set.
The (semi)algebraic shape of those syntactic expressions ensures that the required concepts are definable in first-order logic of real arithmetic ($\folr$).
IVPs are one of the most fundamental problems studied in numerical analysis \cite{Kress_1998}.
Unlike in numerical algorithms for classical IVPs \cite{numerical_ivp}, however, the initial state is not given numerically as a single concrete vector of numbers such as $(0,4.2,-6)$, because those are typically not known when analyzing \emph{all} possible behavior of a CPS.
Instead, \emph{compact IVPs} generalize classical IVPs by supporting a compact initial region from which the symbolic initial state is selected nondeterministically.

This article proves completeness of \dL's axiomatization \cite{DBLP:journals/jar/Platzer17,DBLP:journals/jacm/PlatzerT20, DBLP:journals/fac/TanP21} for bounded open safety, open liveness and existence properties of compact IVPs such that every true such property can be proved.
Moreover these completeness theorems are effective, i.e., a direct computable procedure produces the \dL proofs based on \dL's effective axiomatization of differential equation invariants \cite{DBLP:journals/jacm/PlatzerT20}.
In order to achieve completeness and thereby complete Henri Poincar\'e's qualitative theory of differential equations for these properties of compact IVPs, this article will do something superficially frivolous: the completeness proofs will use solutions of IVPs, but ultimately of symbolic IVPs and only to guide the proofs of the required invariance properties of the IVPs. Besides, these solutions used for the guidance of the proofs will be approximate solutions only, not true solutions.
And, indeed, Henri Poincar\'e was still correct that both the true solution and their approximations are more complicated than the IVP, and that the indirect symbolic invariance proofs that this article's procedure constructs are both simpler and the key to the complete theory of IVPs.
In fact, one of the hard parts will be the need to prove that sufficient control can be exerted over the accumulating approximation errors to provide rigorous symbolic proofs with sufficiently small errors to justify every true bounded open safety, open liveness and existence property of a compact IVP.

While this article and its results are proof-theoretical in nature, they can also be viewed through a practically motivated angle. The problem of reachability analysis for ODEs and hybrid dynamical systems over a compact time horizon is an important area of study in the safety verification of CPS \cite{Alur_2015, DBLP:books/sp/Platzer18}, particularly for bounded model checking \cite{DBLP:conf/cdc/GiorgettiPB05}. Consequently, practical tools \cite{DBLP:conf/cav/ChenAS13, DBLP:journals/jar/Immler18, DBLP:conf/hybrid/BresolinCGSVG20} have been developed to tackle this problem, essentially computing interval enclosures of compact IVPs. Such procedures are all inherently based on numerical approximation techniques in contrast to the deductive, symbolic proof approach offered by $\dL$. 

In safety-critical applications however, the trustworthiness of such numerical approaches is challenging to justify rigorously. Even when the numerical approximations computed by such numerical procedures are mathematically rigorous (which is itself difficult to fully justify in a trustworthy fashion), subtle errors can still arise in the implementation of such algorithms. Even the verification of floating-point arithmetic has proven to be intricate and non-trivial \cite{DBLP:conf/arith/BoldoF07}. 

Deductive approaches based on symbolic proofs in contrast are much more trustworthy. Properties of dynamical systems are proved by applying a sequence of sound proof rules based on a small set of sound axioms \cite{Fulton_Mitsch_Quesel_Volp_Platzer_2015}. Certifying the correctness of such proofs only relies upon a small trusted core of the proof checker \cite{DBLP:conf/cpp/BohrerRVVP17}. Such deductive approaches are more symbolic in nature, seemingly orthogonal to numerical approximations and less capable in verifying inherently numerical properties of compact IVPs. On the contrary, this article crucially shows that this is not the case, numerical approximations and symbolic logic can be harmoniously integrated to obtain the best of both worlds - symbolically proving properties of dynamical systems using numerical approximations. Thus, the desired properties can be proven deductively in a trustworthy manner accompanied by a certifying proof, while not losing the computational capabilities of using numerical approximations. This article thereby unifies computation and deduction for compact IVPs. 

All in all, this article explores the \emph{proof theory} of compact IVPs, providing \emph{complete} reasoning principles for bounded open safety, open liveness and existence properties for compact IVPs by drawing upon both numerical algorithms and deductive verification techniques. 

The following presents an overview of the main results established in this article, first defining the basic notions needed. Let
\begin{align*}
    &x' = f(x)\\
    &x(0) \in \eval{C} \subset \R^n
\end{align*}
be an arbitrary IVP on a compact time horizon $[t_0, T]$ with rational endpoints, each component of $f(x) = (f_1(x), \cdots, f_n(x))$ is a rational polynomial in the ($n$-dimensional vectorial) variable $x$ and $\eval{C}$ is a non-empty compact subset of $\R^n$ defined via the $\folr$ formula $C(x)$ (i.e. $\eval{C} = \{x \in \R^n \mid \R \models C(x)\}$). The main contributions of the article concern the completeness of fragments of $\dL$, a brief explanation of the necessary fragment is given here and a more complete account of $\dL$ is provided in \rref{sec: prelim_dL}.

The fragments this article is concerned with comprises of $\dL$ formulas \cite{DBLP:journals/jar/Platzer08} of the following form, where $P, Q \in \folr$ and $t_0, T \in \Q$. 
\begin{align*}
    &P \land t = t_0 \rightarrow \dbox{x' = f(x), t' = 1 \& t \leq T}{Q}\\
    &P \land t  = t_0 \rightarrow \ddiamond{x' = f(x), t' = 1 \& t \leq T}{Q}
\end{align*}
Such formulas express safety/liveness properties of the flow induced by the differential equation $x' = f(x)$. If $\phi(x, t)$ denotes the corresponding flow function\footnote{The flow is assumed to be well-defined here for brevity, the complication of finite-time blow up is treated in \rref{sec: proving constraint}.} starting at $t = t_0$ (i.e. $\phi(x, t_0) = x$), then the formulas above correspond exactly to the following formulas that quantify over the flow:
\begin{align*}
    & \mforall x \big(P(x) \rightarrow \boldsymbol{\forall} t \in [t_0, T]\text{ } Q(\phi(x, t))\big)\\
    &\mforall x \big(P(x) \rightarrow \boldsymbol{\exists} t \in [t_0, T]\text{ } Q(\phi(x, t))\big)
\end{align*}
I.e. the first formula expresses the safety property that every trajectory starting in the set characterized by $P$ evolving on the time horizon $[t_0, T]$ remains in the safety region characterized by $Q$. Dually, the second formula expresses the liveness property that every trajectory starting in $P$ can reach the target set $Q$ by evolving on $[t_0, T]$. The following main results are established in this article:

\begin{enumerate}
    \item \textbf{Completeness for convergence}: Suppose the (compact) IVP admits a solution/flow $\phi(x, t)$ on the domain $\eval{C} \times [t_0, T]$ (i.e. $\phi(x, t_0) = x, \phi'(x, t) = f(\phi(x, t))$ for all $(x, t) \in \eval{C} \times [t_0, T]$), let $(p_n)_n \in C^0(\eval{C} \times [t_0, T], \R^n)$ be any sequence of definable approximants\footnote{Definable functions in $\folr$, which is exactly when each $p_n$ is a semialgebraic function over $\Q$. In particular, this includes polynomials in $\Q[x, t]$, see \rref{def: definable function} for details.} that converges uniformly to $\phi(x, t)$ in the space $C^{0}(\eval{C} \times [t_0, T], \R^n)$. For all $\eps \in \Q^+$, one can computably find some $k \in \N$ such that
    \[C(x) \land x = x_0 \land t = t_0 \rightarrow \dbox{x' = f(x), t' = 1 \& t \leq T}{\norm{x - p_k(x_0, t)}^2 \leq \eps^2}\]
    is a valid formula of $\dL$ where $x, x_0, t$ are \emph{symbolic variables}. In fact, we will show that this can be syntactically derived in $\dL$'s axiomatization. This formula is equivalent to the following sentence involving the true flow $\phi$ of the IVP as a function symbol 
    \[\boldsymbol{\forall} x_0 \in \eval{C}~\mforall t \in [t_0, T] \left(\norm{\phi(x_0, t) - p_k(x_0, t)}^2 \leq \eps^2\right)\]
    I.e. $p_k$ is an approximant of \emph{uniform error} at most $\eps$ for the true flow $\phi(x, t)$ on $\eval{C} \times [t_0, T]$. In other words, this formula along with its syntactic derivation provides a \emph{proof} of the accuracy of the approximant $p_k$. This establishes that $\dL$ is \emph{complete for convergence}. I.e. if a sequence $(p_n)_n \xrightarrow{n \to \infty} \phi$ converges in $C^0(\eval{C} \times [t_0, T], \R^n)$, then this convergence is provable in $\dL$, succinctly denoted as the following:

    \[\vDash (p_n)_n \xrightarrow{n \to \infty} \phi \qquad \implies \qquad \vdash (p_n)_n \xrightarrow{n \to \infty} \phi\]

    In particular, the definable approximants can be taken to be outputs of numerical solvers applied on the IVP, obtained via standard interpolation procedures (e.g. polynomials, splines). The above result shows that $\dL$ is capable of \emph{symbolically proving} the accuracy of numerical solvers. In contrast to ODE solvers that deal with rigorous numerics using one specific formally verified algorithm \cite{DBLP:journals/cnsns/KapelaMWZ21, DBLP:journals/jar/Immler18}, this result rather gives a procedure that decides if \emph{any} such numerical algorithm is correct from its \emph{outputs}, along with supporting formal proofs. Crucially this procedure does not rely on any particular ODE solver to be correct, it rather takes outputs of ODE solvers as inputs (represented by the sequence of approximants) and returns a certificate of correctness for the accuracy of the approximants in the form of a proof in $\dL$. 
    
    \item \textbf{Completeness of (compact) IVPs}: This article proves completeness of $\dL$'s axiomatization for bounded open safety, open liveness and existence properties of compact IVPs:
        \begin{itemize}
            \item \textbf{Completeness for bounded open safety}: Let $O(x)$ be a $\folr$ formula that characterizes a bounded open subset of $\R^n$. $\dL$ is complete for formulas of the form
            \[C(x) \land t = t_0 \rightarrow \dbox{x' = f(x), t' = 1 \& t \leq T}{O(x)}\]
            I.e. if all flows of the IVP starting anywhere in $\eval{C}$ always remains within the set of safe states characterized by $O(x)$ on the time horizon $[t_0, T]$, then this is provable in $\dL$. 

            \item \textbf{Completeness for open liveness}:
            Let $O(x)$ be a $\folr$ formula that characterizes an open subset of $\R^n$ (not necessarily bounded as stronger assumptions are placed on the flow instead), and suppose that the true flow $\phi : \eval{C} \times [t_0, T] \to \R^n$ is well-defined (i.e. does not exhibit finite-time blow up on $[t_0, T]$)\footnote{Such an assumption also suffices for arbitrary open safety properties.}. Then $\dL$ is complete for formulas of the form
            \[C(x) \land t = t_0 \rightarrow \ddiamond{x' = f(x), t' = 1 \& t \leq T}{O(x)}\]
            I.e. if a target state characterized by $O(x)$ is reachable from starting anywhere in $\eval{C}$ in the time horizon $[t_0, T]$ by following the IVP, then this is provable in $\dL$. 

            \item \textbf{Completeness for existence}: $\dL$ is complete for formulas of the form
            \[C(x) \land t = t_0 \rightarrow \ddiamond{x' = f(x), t' = 1}{t \geq T}\]
            I.e. if the solution exists for time at least $t \geq T$ for all initial conditions from $\eval{C}$, then this is provable in $\dL$.
        \end{itemize}
        By considering the case where $C(x) \equiv x = x_0$ defines a singleton, corresponding completeness results for IVPs with fixed initial conditions are obtained as a special case.  
    \item \textbf{Axioms/proof-rules for symbolic IVPs}: In proving completeness of existence for IVPs, we derive fundamental \emph{symbolic} axioms/proof-rules (\rref{thm:axioms for step liveness}) for deductive verification of symbolic IVPs on compact time horizons without placing constraints on the initial conditions. We prove symbolic derivations of the classical Picard-Lindel\"of theorem, the intermediate value theorem and the property that the solution to an IVP exists on some time horizon if and only if the solution has no finite time blow-up on that time horizon. Due to the fundamental nature of such axioms/proof-rules \cite{Walter_1998}, their derivations are of independent interest. 
\end{enumerate}

\section{Related Work}

The results presented in this article build upon the proof theory of dynamical systems using the framework of \emph{differential dynamic logic} ($\dL$) \cite{DBLP:conf/lics/Platzer12b,DBLP:conf/fm/SogokonJ15, DBLP:journals/jacm/PlatzerT20, DBLP:journals/fac/TanP21}.  This article establishes the first complete axiomatization for compact IVPs, showing that all true (bounded) open properties can be deduced completely from symbolic axioms/proof rules, in the spirit of Poincar\'e's qualitative theory of differential equations \cite{Poincare81}. Consequently, it is possible to deductively prove properties of compact IVPs with trustworthy symbolic logic whilst retaining the computational capabilities of numerical techniques. 

\noindent
{\bf{\emph{Computability of compact IVPs}}}: The computability of IVPs have been studied extensively \cite{DBLP:conf/cie/Graca018}. More recent works have shown interesting connections between computable ordinals and the solutions of discontinuous IVPs \cite{DBLP:conf/stacs/BournezG24}. In the context of IVP verification, such computability results can be viewed as the theoretical foundation of numerical techniques. Indeed, the statement that arbitrarily accurate numerical approximations can be computed for compact IVPs is a restatement of the result that solutions to compact IVPs are type-two computable (see \rref{sec: Computable analysis prelims} for details). However, the trustworthiness of such approaches are much more delicate and it is difficult to formalize requirements on the trustworthiness of numerical algorithms purely on the computability level. Such questions are more naturally expressed as \emph{provability} questions, which is exactly what this article addresses. 

In the same way that computability is the theoretical foundation for numerical techniques, provability is the theoretical foundation for symbolic deductive techniques. Such provability properties are generally more fine-grained and delicate compared to computability properties. As $\dL$ is computably axiomatized, any property it is complete for is trivially computably enumerable by searching through all possible proofs, while the converse implication is generally false. Numerical techniques are often viewed to be more scalable than deductive techniques for IVPs, but symbolic proofs enjoy a higher level of rigor and reliability. 

Nonetheless, the completeness results presented in this article precisely bridge this gap, showing that in the context of (open) properties of compact IVPs, provability and computability notions ``coincide'' - numerical approximations can be carried out entirely deductively in $\dL$ with symbolic proofs. There is no fundamental distinction between numerical and symbolic computations for compact IVPs. Properties that can be verified by numerical techniques with direct computations can also be verified deductively with logic, resulting in trustworthy proofs of such properties while enjoying the generality of numerical techniques. Furthermore, building upon works on the computability of IVPs \cite{DBLP:conf/cie/Graca018}, this article establishes a direct computable correspondence between valid (open) properties of compact IVPs and their proofs in $\dL$. 

\noindent
{\bf{\emph{Proof theory of compact IVPs}}}: The completeness results presented in this article applies to all (open) properties for compact IVPs, and does so in a computable fashion. In contrast to the results established in this article, earlier works either only prove relative completeness with some non-computable oracle \cite{DBLP:conf/lics/Platzer12b, DBLP:conf/lics/Platzer12a}, global completeness that cannot handle compact IVPs which are sensitive to their initial conditions \cite{DBLP:journals/jacm/PlatzerT20}, or does not achieve general completeness results \cite{DBLP:conf/fm/SogokonJ15, DBLP:journals/fac/TanP21}. To the best of our knowledge, this is the first result that establishes the provability of such properties of compact IVPs. 

In addition, this article also provides novel syntactic derivations of classical theorems in $\dL$ that are fundamental to the study of IVPs, allowing for the deductive verification of general symbolic IVPs beyond compact IVPs. In contrast with earlier works \cite{DBLP:journals/fac/TanP21, DBLP:conf/fm/SogokonJ15}, these axioms/proof rules focus on the case where the IVP is considered on a compact time horizon and crucially does not assume global existence of solutions. This situation is much more delicate as solutions of IVPs might exhibit finite time blow-ups. The derivations themselves are of independent interest, not only are the axioms/proof rules themselves fundamental in the study of IVPs, such derivations also improve upon earlier works (e.g. for \irref{ivt} \cite{Platzer_Tan_report}) where soundness was proven but no derivation was known. 

\noindent
{\bf{\emph{The Continuous Skolem Problem and limitations}}}: The Continuous Skolem Problem is a central problem in the theory of continuous dynamical systems \cite{DBLP:journals/tcs/BellDJB10}. Given an IVP $x' = f(x)$ with $x(0) = x_0 \in \Q^n$ and a vector $u \in \Q^n$, the Continuous Skolem Problem asks if the solution $x(t)$ reaches the hyperplane defined by $u$. I.e. if there exists some $t \geq 0$ such that $u^Tx(t) = 0$. The Bounded Continuous Skolem problem \cite[Open Problem~17]{DBLP:journals/tcs/BellDJB10} asks if such a $t$ exists in some pre-determined interval $[0, T]$ with $T \in \Q^+$. The decidability of both problems have been long-standing open problems, with partial progress being made in the case where the ODE is linear, i.e. $x' = Ax$ for $A \in \Q^{n \times n}$ \cite{DBLP:conf/icalp/ChonevOW16}. In the linear case, the Bounded Continuous Skolem problem was shown to be decidable assuming Schanuel's conjecture \cite[Theorem~7]{DBLP:conf/icalp/ChonevOW16}, a unifying conjecture in number theory which implies the decidability of the real exponential field \cite{real_exp_field_decidability}. Such problems have also been studied when $f(x)$ is allowed to be a polynomial \cite{DBLP:conf/cie/Hainry08}. In this setting, the decidability of the Bounded Continuous Skolem problem remains open. In the context of this article, such problems place inherent restrictions on possible generalizations of the new results presented here. The results established can be viewed as the form of ``(compact, (bounded) open)'', where the initial condition is required to come from a compact set and the post-condition is required to be (bounded) open. A natural generalization is to consider ``(compact, compact)'' where post-conditions are compact semialgebraic sets. However, such completeness results (if possible) are at least as hard as the Bounded Continuous Skolem problem for polynomial dynamical systems. This is because the Bounded Continuous Skolem problem is co-computably enumerable (co-c.e.): 
\[\boldsymbol{\exists} t \in [0, T]~u^Tx(t) = 0 \iff \min_{t \in [0, T]} \abs{u^T x(t)} = 0\]
and minimums of computable functions over compact sets are computable (\rref{thm: computable extreme value theorem}), therefore the second relation is co-c.e. At the same time, the reachability requirement can be naturally formulated in $\dL$ via:
\[x = x_0 \land t = 0 \rightarrow \ddiamond{x' = f(x), t' = 1 \& t \leq T}{u^Tx = 0}\]
Thus, if completeness results of the form "(compact, compact)" hold, then the Bounded Continuous Skolem problem would also be c.e. by searching through all possible proofs in $\dL$ (as $\dL$ is computably axiomatized \cite{DBLP:conf/lics/Platzer12a}), implying the decidability of the Bounded Continuous Skolem problem (independent of Schanuel's conjecture). Hence, generalizations of the results in this article by relaxing the topological constraints on the post-conditions are likely to be challenging and difficult. 

\noindent
{\bf{\emph{Reachability computation of dynamical systems}}}: The problem of computing interval enclosures for hybrid/continuous dynamical systems shows up frequently in the safety verification of CPS. Practical implementations \cite{DBLP:conf/cav/ChenAS13, DBLP:conf/cav/FrehseGDCRLRGDM11, DBLP:journals/cnsns/KapelaMWZ21, DBLP:journals/jar/Immler18, DBLP:conf/hybrid/BresolinCGSVG20} exist to carry out such computations, based on numerical approximations. The correctness of these will depend on both the correctness of the underlying mathematical theory of such approximations and the correctness of the implementation details, both of which are prone to errors. Attempts in improving the reliability of such procedures focus on the formal verification of such algorithms (e.g. \cite{DBLP:journals/cnsns/KapelaMWZ21, DBLP:journals/jar/Immler18} in the continuous case), where the numerical algorithm implemented is formally verified to be mathematically sound. These formal verifications are inherently dependent on the specific algorithm used, and modifications to the algorithm require corresponding complex modifications to the proof of correctness, in addition to the possibility of implementation errors. The lack of compositionality implies that it is non-trivial to combine different algorithms harmoniously. This is in particular highlighted by the fact that, to the best of our knowledge, there currently does not exist a formally verified algorithm computing the interval enclosure of hybrid dynamical systems. More fundamentally, such approaches are providing formal verifications of the \emph{algorithm} used to compute the approximations, inducing potential error when transforming from the abstract algorithm verified to the actual implementation executed. This is in contrast to logic-based deductive approaches where every verified property has a certifying \emph{syntactic proof} which can be independently checked. 

The results presented in this article provide a complementary possibility through $\dL$: Such verifications can \emph{all} be carried out deductively with sound axioms/proof rules, therefore the correctness of the approximations can be trusted as certified by their corresponding symbolic proofs. Numerical approximations can be computed deductively, ending up with compositional proofs in $\dL$ that can be used in symbolic proofs of safety of the overall hybrid dynamical system. In particular, the completeness results (e.g. \rref{thm: effective proofs without upper bounds}) are agnostic to how the numerical approximants were computed. Therefore potentially unreliable approximation algorithms can be used in computations as the computed approximants can always be symbolically proven to be accurate if they truly are accurate. In contrast with the formal verification of numerical algorithms, the deductive approach certifies the correctness of outputs with corresponding \emph{proofs} that can be checked with proof checkers such as \KeYmaeraX \cite{Fulton_Mitsch_Quesel_Volp_Platzer_2015, DBLP:conf/cpp/BohrerRVVP17}. The aim of the article is not to argue for the superiority of symbolic techniques over numerical ones, but rather that such approaches are in fact intimately related and it is possible to simultaneously achieve the strengths of both approaches at once as shown by the completeness results. 

\section{Preliminaries}
\label{sec: preliminaries}
We give a self-contained overview of the computable analysis and differential dynamic logic needed for the paper. More details on computable analysis, computability theory \cite{Weihrauch_2000, Soare_2016} and differential dynamic logic \cite{DBLP:journals/jar/Platzer08} can be found in the corresponding references. 

\subsection{Differential Dynamic Logic}
\label{sec: prelim_dL}
This section provides a brief review of differential dynamic logic ($\dL$) and its axiomatization, fixing some notation conventions along the way. This article focuses on the continuous fragment of $\dL$.

\subsubsection{Syntax}
Terms in $\dL$ are formed by the following grammar, where $\V$ denotes the set of all variables, $x \in \V$ is a variable and $c \in \Q$ is a rational constant. Equivalently, terms are polynomials over $\V$ with rational coefficients.
\[p, q\bebecomes x\alternative c\alternative p + q\alternative p\cdot q\]

Prior works \cite{DBLP:journals/jacm/PlatzerT20} make it possible to consider $\dL$ with an expanded language that includes familiar mathematical functions such as $\exp, \sin, \cos$. Such expansions will not be considered in this article due to the subtle concerns regarding computability of such expanded functions.

$\dL$ formulas have the following grammar, where ${\sim} \in \{=, \neq, \geq, >, \leq, <\}$ is a comparison relation and $\alpha$ is a system of differential equations ($\dL$ allows for $\alpha$ to be from the more general class of \emph{hybrid programs} \cite{DBLP:journals/jar/Platzer08}, which is not needed here)
\begin{align*}
    \phi, \psi &\bebecomes p {\sim} q \alternative \phi \land \psi \alternative \phi \lor \psi \alternative \neg \phi \alternative \forall x \phi \alternative \exists x \phi \alternative 
\ddiamond{\alpha}{\phi} \alternative \dbox{\alpha}{\phi}\\
\alpha &\bebecomes \cdots \alternative x' = f(x) \& Q
\end{align*}
In this paper, we will only be dealing with the case $\alpha \equiv x' = f(x) \& Q$, where $x' = f(x)$ represents an autonomous system of ODEs $x_1' = f_1(x), \cdots, x_n' = f_n(x)$ and $x = (x_1, \cdots, x_n)$ is understood to be vectorial. $Q$ here refers to some $\dL$ formula known as the \emph{domain constraint}. Intuitively, this restricts the region for which the ODE $x' = f(x)$ is allowed to evolve. In contrast with some of the earlier works \cite{DBLP:journals/fac/TanP21, DBLP:journals/jacm/PlatzerT20}, the domain constraint $Q$ is in general allowed to be any $\dL$ formula in this paper, resulting in ``rich-test'' $\dL$ \cite{DBLP:journals/jar/Platzer17, DBLP:journals/jar/Platzer08, DBLP:conf/lics/Platzer12b}. 

Lastly, we state some conventions that are used throughout this paper. For terms and formulas that appear in contexts involving ODEs $x' = f(x)$, it is sometimes needed to restrict the variables that they can refer to. When such cases arise, we will indicate such free variables by explicitly writing them as arguments. For example, $p()$ means that the term $p$ cannot refer to any bound variable of the ODE $x' = f(x)$. In contrast, $P(x)$ (or just $P$) indicates that all the variables may be referred to as free variables. Such variable dependencies can be made formal and rigorous through $\dL$'s uniform substitution calculus \cite{DBLP:journals/jar/Platzer17}.

\subsubsection{Semantics}
A state $\omega$ is a mapping $\omega : \V \to \R$ that assigns a value to every variable. We denote $\S$ as the set of all such states. For a term $p$, its semantics in state $\omega \in \S$ written as $\eval{p}$ is the real value obtained by evaluating the polynomial $p$ at the state $\omega$. For a $\dL$ formula $\phi$, its semantics $\eval{\phi}$ is defined to be the set of all states $\omega \in \S$ such that $\omega \models \phi$, i.e the formula $\phi$ is true in $\omega$. The semantics of first-logical connectives are defined as expected, e.g. $\eval{\phi \lor \psi} = \eval{\phi} \cup \eval{\psi}$. For ODEs $\alpha \equiv x' = f(x) \& Q$, the semantics for $\dbox{\alpha}{\phi}$ and $\ddiamond{\alpha}{\phi}$ are defined as follows. For the given ODE $\alpha$ and any state $\omega \in \S$, let $\Phi_\omega : [0, T) \to \S$ be the solution to $x' = f(x)$ extended maximally to the right with $0 < T \leq \infty$ and $\Phi_\omega(0) = \omega$. We then have:
\begin{align*}
    &\omega \in \eval{\dbox{\alpha}{\phi}} \text{ iff for all $0 \leq \tau < T$ such that $\Phi_\omega(\xi) \models Q$ for all $0 \leq \xi \leq \tau$, we have $\Phi_\omega(\tau) \models \phi$}\\
    &\omega \in \eval{\ddiamond{\alpha}{\phi}} \text{ iff there exists some $0 \leq \tau < T$ such that $\Phi_\omega(\xi) \models Q$ for all $0 \leq \xi \leq \tau$ and $\Phi_\omega(\tau) \models \phi$}
\end{align*}
Intuitively, the formula $\dbox{\alpha}{\phi}$ expresses a \emph{safety} property, that $\phi$ holds along all flows of the ODE $x' = f(x)$ that remain inside the domain constraint $Q$. Similarly, the formula $\ddiamond{\alpha}{\phi}$ expresses a \emph{liveness} property, that there is some flow along $x' = f(x)$ staying within $Q$ eventually reaching a state where $\phi$ is true. 

Finally, a formula $\phi$ is said to be valid if $\eval{\phi} = \S$, i.e. it is true in all states. For a formula $I$, we say it is a \emph{differential invariant} of the ODE $x' = f(x) \& Q$ if the formula $I \rightarrow [x' = f(x) \& Q]I$ is valid. One important fact that is used throughout this article is that $\dL$ is (effectively) complete for differential invariants in $\folr$ \cite{DBLP:journals/jacm/PlatzerT20}. In other words, if $I$ is a differential invariant, then one can effectively find a syntactic proof of $I \rightarrow [x' = f(x) \& Q]I$.

\subsubsection{Proof calculus}
The derivations in this paper are presented in a standard, classical sequent calculus with the usual rules for manipulating logical connectives and sequents. For a \emph{sequent} $\Gamma \vdash \phi$, its semantics is equivalent to the formula $\left(\bigwedge_{\psi \in \Gamma} \psi\right) \rightarrow \phi$, and the sequent is called valid if its corresponding formula is valid. For a sequent $\Gamma \vdash \phi$, formulas $\Gamma$ are called antecedents, and $\phi$ the succedent. We mark completed proof branches with $*$ in a sequent proof, and since $\R$ has a decidable theory via quantifier elimination \cite{Tarski_1948}, statements that follow from real arithmetic are proven with the rule \irref{qear}. An axiom (schema) is called \emph{sound} iff all of its instances are valid, and a proof rule is sound if the validity of all its premises entail the validity of its conclusion. Axioms and proof rules are \emph{derivable} if they can be proven from $\dL$ axioms and proof rules via the aforementioned sequent calculus. Derivable axioms are automatically sound due to the soundness of $\dL$'s axiomatization \cite{DBLP:journals/jar/Platzer08, DBLP:journals/jacm/PlatzerT20}.

This article uses a fragment of the base axiomatization of $\dL$ \cite{DBLP:conf/lics/Platzer12b} (focusing on the continuous case) along with an extended axiomatization developed in prior works used to handle ODE invariants and liveness properties \cite{DBLP:journals/jacm/PlatzerT20, DBLP:journals/fac/TanP21}. A complete list of the axioms used is provided in \rref{app: dL axiomatization}.

An important feature of the axiomatization used is that it is complete for all differential invariants \cite{DBLP:journals/jacm/PlatzerT20}. Since this will be used extensively throughout the paper, this fact is explicitly stated below.

\begin{theorem}[Completeness of Differential Invariants \cite{DBLP:journals/jacm/PlatzerT20}]
    \label{thm:diff inv completeness}
    $\dL$ is complete for differential invariants. For all $\folr$ formulas $P, Q$ and ODE $x' = f(x)$, construct the corresponding $\dL$ formula as follows
    \[P \rightarrow \dbox{x' = f(x) \& Q}{P}\]
    If the formula is valid, then one can effectively find a proof of it in $\dL$. We will make use of this result with the following derived proof rule:
    
    \begin{calculus}
    \cinferenceRule[dinv|dInv]{quantifier elimination real arithmetic}
        {\linferenceRule[sequent]
          {\lclose}
          {\lsequent[g]{}{P \rightarrow \dbox{x' = f(x) \& Q}{P}}}
            \qquad\qquad\qquad\qquad\qquad
        }
        {$\text{If}~ P \rightarrow \dbox{x' = f(x) \& Q}P ~\text{is valid}$}
    \end{calculus}
    
\end{theorem}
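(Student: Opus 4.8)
The statement is essentially a repackaging of the main completeness theorem of Platzer and Tan \cite{DBLP:journals/jacm/PlatzerT20} into the convenient one-line rule \irref{dinv}, so the plan is: first invoke that external result for semialgebraic invariants of polynomial ODEs, checking that its hypotheses are met verbatim here; then observe that once validity of $P \rightarrow \dbox{x' = f(x) \& Q}{P}$ effectively yields a $\dL$ proof of it, the rule \irref{dinv} --- which has no premises and whose side condition is precisely ``$P \rightarrow \dbox{x' = f(x) \& Q}{P}$ is valid'' --- is by definition a derived (and effective) rule.

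For the first step I would recall the shape of the cited result. A $\folr$ formula defines exactly a semialgebraic subset of $\R^n$ by Tarski--Seidenberg, and each component of $f$ is a polynomial with rational coefficients, so the triple $(P, x' = f(x), Q)$ is exactly of the kind handled in \cite{DBLP:journals/jacm/PlatzerT20}: there one shows that invariance of a semialgebraic set along a polynomial vector field, restricted to a semialgebraic evolution domain $Q$, is equivalent to a first-order real-arithmetic condition extracted by a terminating computation on higher Lie derivatives (the differential-radical / Gr\"obner basis characterization), and that this condition --- decidable by quantifier elimination over $\R$, i.e.\ by \irref{qear} --- is in turn converted into an actual $\dL$ derivation built from the differential-invariant axioms and rules (differential weakening, cut, ghost, and the refined rules for atomic and semialgebraic invariants) collected in \rref{app: dL axiomatization}. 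The domain constraint $Q$ adds no essential difficulty, since the invariant is only required to be maintained while the flow stays in $Q$, and this is exactly how the cited procedure treats it, via a differential cut with $Q$.

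For the second step, to present \irref{dinv} as a derived rule I would show that for each instance whose side condition holds there is a $\dL$ proof of the conclusion $P \rightarrow \dbox{x' = f(x) \& Q}{P}$; this is literally the content of the first step, and because that construction is algorithmic the rule is effective: given $P$, $f$, $Q$ with the formula valid, the proof is computed. Soundness of the rule is then immediate from soundness of $\dL$'s axiomatization together with the side condition.

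The hard part here is bookkeeping rather than mathematics: being careful to confirm that the base-plus-extended axiomatization adopted in this paper (\rref{app: dL axiomatization}) contains every ODE axiom and rule that the Platzer--Tan procedure invokes, and noting that the theorem is deliberately stated only for $\folr$ (hence semialgebraic) $P$ and $Q$ --- the general rich-test domain constraints permitted elsewhere in the paper are not claimed here, because the completeness of \cite{DBLP:journals/jacm/PlatzerT20} is a statement about semialgebraic data. No new invariant-search or error-accumulation argument enters; this theorem is the black box on which the remaining constructions of the paper rest.
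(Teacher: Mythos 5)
Your proposal is correct and takes the same approach as the paper: this theorem is stated purely as a citation of the effective completeness result of Platzer--Tan, and the paper gives no independent proof, exactly as you do — invoke the external result, check that $\folr$ formulas and rational-polynomial vector fields are precisely the class it handles, and note that the derived rule \irref{dinv} is a direct repackaging of that effective derivability. Your extra unpacking of the Lie-derivative/Gr\"obner machinery and of the role of the domain constraint is accurate background, though the paper itself treats the result as a black box.
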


\rref{thm:diff inv completeness} will be utilized frequently to obtain syntactic proofs by first reducing the goals down to some differential invariant, and then proving the validity of this invariant semantically. This completeness is effective, so computability properties are preserved by appealing to \rref{thm:diff inv completeness}.

\subsection{Computability and Computable analysis}
\label{sec: Computable analysis prelims}
The completeness properties established in this article are \emph{effective}. Not only are valid formulas provable, there is a direct (computable) correspondence between the valid formulas and their proofs. I.e., there is a computable algorithm taking valid formulas as inputs and outputting corresponding proofs in $\dL$.\footnote{As $\dL$'s axiomatization is effective, completeness automatically implies such an algorithm by searching through all proofs. However, this paper establishes a direct correspondence rather than resorting to the brute-force search.} To achieve the desired completeness results effectively, it is necessary to utilize the computability-theoretic properties of IVPs, which are framed in the language of \emph{computable analysis}. The following provides the required background on computable analysis, under the standard framework of \emph{Type Two Theory of Effectivity} (TTE) \cite{Weihrauch_2000}.  

\begin{definition}[Name]
    Let $x \in \R$ be any real number, a $\it{name}$ for $x$ is a sequence of rationals $(q_i)_i \subseteq \Q$ such that
    \[\boldsymbol{\forall} i \in \N~(\abs{q_i - x} < 2^{-i})\]
    This definition naturally extends to $\R^n$ by requiring names to reside in $\Q^n$ and using the standard Euclidean norm. For $x \in \R^n$, we denote the set of all names of $x$ as $\Gamma(x)$.
\end{definition}

For a fixed real number $x \in \R^n$, one should think of its names as the ``descriptions'' of $x$. We then define $x$ to be computable if it exhibits a computable description. 

\begin{definition}[Type-two computable number]
    \label{def: computable real}
    Let $x \in \R^n$ be any real number, $x$ is \emph{Type-Two computable} if it has a computable name. i.e. there is some computable sequence $(q_i)_i \subseteq \Q^n$ that is a name for $x$. 
\end{definition}

Intuitively, this means that a real $x \in \R^n$ is (Type-Two)computable if and only if it can be computably approximated by a sequence of vectors of rational numbers. The definition above relies on some ordering of the rationals, but any fixed effective enumeration of rationals gives equivalent notions. From now on, whenever we refer to the computability of numbers in $\R^n$, we mean Type-Two computability.

\begin{definition}
An \emph{oracle machine $M$} is a Turing machine that allows for an additional one-way read-only input tape that represents some input oracle used. The machine is allowed to read this input tape up to arbitrary, but finite, lengths.
\end{definition}

One can think of oracle machines as regular Turing machines but with some access to outside information, namely the ``oracle'' input tape. The machine may use any finite amount of information on this tape. For an oracle machine $M$, and an infinite binary sequence $p \in 2^{\omega}$, $M^p$ represents the oracle machine $M$ with oracle $p$. By standard encoding, we do not differentiate between $\Q^\omega$ and $2^\omega$. 

Having defined a notion of computability on individual elements of $\R^n$, the following definition provides a notion of computability on the closed subsets of $\R^n$.

\begin{definition}[{\cite[Corollary~5.1.8]{Weihrauch_2000}}]
    A closed subset $E \subseteq \R^n$ is \emph{computable} if its corresponding distance function $x \mapsto \inf_{y \in E} \norm{x - y}$ is computable.
\end{definition}

It can be easily seen that every $\folr$ definable closed set is computable.

\begin{theorem}
    \label{thm: definable sets are computable}
    If $E \subseteq \R^n$ is a closed subset defined by the $\folr$ formula $\phi(x)$, then it is a computable closed set and its distance function is computable uniformly in $\phi(x)$.
\end{theorem}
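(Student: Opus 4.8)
The plan is to reduce the computation of the distance function $d_E(x) := \inf_{y\in E}\norm{x-y}$ to the decidability of the first-order theory of real-closed fields \cite{Tarski_1948}, exploiting two elementary facts: (i) $d_E$ is $1$-Lipschitz, so knowing $d_E$ on rational points suffices, via a name for $x$, to name $d_E(x)$; and (ii) for a \emph{rational} point $q$ the relations ``$d_E(q)<r$'' and ``$d_E(q)>r$'' (for $r\in\Q$) are captured by $\folr$ sentences built uniformly from $\phi$, hence are decidable. Throughout, $E=\eval{\phi}$ is the given closed set; the degenerate case $E=\emptyset$, where $d_E\equiv+\infty$ by the usual convention, is detected by deciding $\exists y\,\phi(y)$ once and for all and is harmless (and does not arise in the applications, where $\eval{C}$ is nonempty).

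First I would handle a rational input $q\in\Q^n$. The sentence $\exists y\,(\phi(y)\land\norm{q-y}^2<r^2)$ holds precisely when $d_E(q)<r$, and by quantifier elimination it is decidable, uniformly in $q,r$ and in $\phi$. Dually, the sentence $\forall y\,(\phi(y)\rightarrow\norm{q-y}^2>r^2)$ is decidable, and I claim it holds precisely when $d_E(q)>r$. One direction is immediate from $\norm{q-y}\ge d_E(q)$ for all $y\in E$. For the converse I use that $y\mapsto\norm{q-y}$ is continuous and coercive on the closed set $E$, so its infimum over $E$ is \emph{attained} at some $y^\star\in E$; hence if every $y\in E$ has $\norm{q-y}>r$ then $d_E(q)=\norm{q-y^\star}>r$. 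Consequently the set of rationals below $d_E(q)$ and the set of rationals above it are both decidable and jointly exhaust $\Q\setminus\{d_E(q)\}$; starting from an upper bound found by testing $r=1,2,4,\dots$ and then bisecting, one computes a rational within any prescribed $2^{-k}$ of $d_E(q)$, uniformly in $q$ and $\phi$.

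Next I would pass to an arbitrary $x\in\R^n$. Given any name $(q_i)_i$ for $x$, the $1$-Lipschitz estimate $\abs{d_E(x)-d_E(q_i)}\le\norm{x-q_i}<2^{-i}$ shows that, to output a rational within $2^{-k}$ of $d_E(x)$, it suffices to read $q_{k+2}$ from the name and then apply the rational-input procedure to $q_{k+2}$ to get a rational $s$ with $\abs{s-d_E(q_{k+2})}<2^{-k-2}$; then $\abs{s-d_E(x)}<2^{-k-1}<2^{-k}$. This produces a computable name for $d_E(x)$ from a name for $x$. Since every subroutine invokes only the decision procedure for $\folr$ sentences obtained from $\phi$ by quantifying over $y$ and adjoining polynomial (in)equalities --- and that procedure is uniform in its input sentence --- the whole construction is one oracle machine taking a name for $x$ together with the formula $\phi$, which is exactly uniform computability of $d_E$, hence computability of the closed set $E$.

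The arithmetic of error bounds and the translation of the comparisons into $\folr$ are routine. The one point that requires care is the equivalence ``$d_E(q)>r$'' $\Leftrightarrow$ $\forall y\,(\phi(y)\to\norm{q-y}>r)$: the right-hand side is manifestly decidable, but it is equivalent to the left only because the infimum defining $d_E(q)$ is attained, which in turn genuinely uses that $E$ is closed (it would fail for a set merely bounded below but not closed). This is the step I expect to be the crux; everything else is bookkeeping built on Tarski's quantifier elimination and the Lipschitz property of distance functions.
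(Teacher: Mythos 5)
Your proof is correct and, in essence, takes the paper's approach: both proofs express $d_E(q)<r$ for rational $q,r$ by the $\folr$-formula $\exists y\,(\phi(y)\land\norm{q-y}^2<r^2)$, invoke decidability of the theory of real closed fields, and then compute $d_E(q)$ on rationals by a bounded-then-bisecting search, extending to arbitrary $x$ by the $1$-Lipschitz property of distance functions. The paper leaves the ``this suffices'' step implicit, citing the standard Weihrauch machinery, whereas you spell it out; that is a difference in level of detail, not of method.

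The one substantive point to flag concerns the step you single out as the crux: the equivalence between $d_E(q)>r$ and $\forall y\,(\phi(y)\to\norm{q-y}^2>r^2)$, which you justify by attainment of the infimum and which is the only place closedness of $E$ enters your reasoning. That step is dispensable. Once $d_E(q)<r$ is decidable for all rational $q,r$, its Boolean negation $d_E(q)\ge r$ is decidable as well, and the bisection only ever needs to distinguish ``strictly below the test value'' from ``at or above the test value''; it never needs to decide the strict relation $d_E(q)>r$, so the ``set of rationals above $d_E(q)$'' need not be decidable. Deleting that paragraph leaves a complete proof, and notably one in which closedness is never used in the computation itself --- it is only the hypothesis that makes $E$ a ``closed set'' in the sense of the definition. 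This is exactly why the paper's proof is content to exhibit the single formula $\psi(q,r)$ and say no more.
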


\begin{proof}
    Let $d: \R^n \to \R$ denote the distance function for the closed set $E = \eval{\phi}$ defined via
    \[d(x) = \inf_{y \in E} \norm{x - y}\]
    It suffices to show that the relation $d(q) < r$ is uniformly decidable for $q \in \Q^n, r \in \Q^+$, which is true as this relation can be defined by the following $\folr$ formula :
    \[ \psi(q, r) \equiv \exists y (\phi(y) \land \norm{y - q}^2 < r^2)\]
    hence decidability follows as $\R$ has a decidable theory, proving $d$ to be computable.
\end{proof}

The following definition relates the use of oracle machines to computable functions in TTE.

\begin{definition}[Computable function]
    A function $f : E \subseteq \R^n \to \R^m$ with $E$ a computable closed set is \emph{computable} if there is some oracle machine $M$ such that 
    \[\boldsymbol{\forall}x \in E~\boldsymbol{\forall} p \in \Gamma(x)~((M^p(i))_i \in \Gamma(f(x)))\]
    I.e. $M$ maps names of $x$ to names of $f(x)$ for all $x \in E$. 
\end{definition}
Intuitively, this means that a function $f : \R^n \to \R^m$ is computable if and only if there is some computable algorithm such that for every $x \in \R^n$, the algorithm can output more and more accurate approximations of output $f(x)$ given more and more accurate approximations of input $x$. By this definition, any Type-Two computable function is necessarily continuous, since oracle machines can only read a finite amount of its oracle before producing an output. In other words, for all $x \in \R^n, i \in \N$, there is some corresponding $j \in \N$ such that if $f$ is provided with an approximation of $x$ accurate up to $2^{-j}$, then the output is an approximation of $f(x)$ accurate up to $2^{-i}$, therefore $f$ is continuous. The standard functions $\sin(x), \cos(x), x^2, e^x, \cdots$ are all computable through their Taylor expansions.

A useful result of computable analysis is that the classical extreme value theorem holds computably \cite[Corollary~6.2.5]{Weihrauch_2000}. The following theorem states this for functions $f : K \subset \R^n \to \R^m$ with $K$ a definable compact subset of $\R^n$, the proof is included for completeness. 

\begin{theorem}[Computable extreme value theorem {\cite[Corollary~6.2.5]{Weihrauch_2000}}]
    \label{thm: computable extreme value theorem}
    Let $f : K \to \R$ be a computable function on the compact set $K \subset \R^n$ defined by some $\folr$ formula $\phi(x)$. Then $\max_{x \in K} (f(x))$ and $\min_{x \in K} (f(x))$ are uniformly computable in $f, \phi(x)$. 
\end{theorem}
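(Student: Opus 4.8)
The plan is to reduce both claims to a single effective task: given $i\in\N$, output a rational within $2^{-i}$ of $s:=\max_{x\in K}f(x)$, with every construction step uniform in the oracle machine $M$ witnessing computability of $f$ and in the formula $\phi$; this is the $\folr$-definable case of \cite[Cor.~6.2.5]{Weihrauch_2000}, the point being that definability of $K$ lets us replace general computable-compactness machinery by plain decidability of real arithmetic. Since $-f$ is computable whenever $f$ is (post-compose $M$ with rational negation) and $\min_{x\in K}f(x)=-\max_{x\in K}(-f(x))$, it suffices to treat the maximum. First I would record two ways in which $\folr$-definability of $K$ makes compactness effective, using \rref{thm: definable sets are computable} and decidability of real arithmetic \cite{Tarski_1948}: (i) for any finite list of rational open balls the question ``does $\bigcup_{j}B(c_j,\rho_j)\supseteq K$?'' is decidable, being the $\folr$ sentence $\forall x\,\bigl(\phi(x)\to\bigvee_{j}\lVert x-c_j\rVert^{2}<\rho_j^{2}\bigr)$, and similarly ``$K\cap B(c,\rho)\ne\emptyset$?'' is decidable; and (ii) searching over $R=1,2,\dots$ one effectively finds $R\in\Q^{+}$ with $K\subseteq[-R,R]^{n}$, since $K$ is compact hence bounded and each test $\forall x\,(\phi(x)\to\lVert x\rVert^{2}\le R^{2})$ is decidable. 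Everything here is uniform in $\phi$.

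The core step produces, for any rational $\varepsilon>0$, an effective \emph{certified cover} of $K$: finitely many rational open sets $U_1,\dots,U_m$ (each a finite intersection of rational balls) and rationals $v_1,\dots,v_m$ such that the $U_l$ cover $K$ and $\lvert f(y)-v_l\rvert\le\varepsilon$ for all $y\in K\cap U_l$. To build it, observe that for any $x\in K$, feeding $M$ a name of $x$ and asking for an $\varepsilon$-accurate rational approximation of $f(x)$, the machine halts after reading only a finite prefix $\sigma=(q_0,\dots,q_{N-1})$ of that name; by determinism $M$ then returns the same output $v$ on every oracle extending $\sigma$, so $v$ is a valid $\varepsilon$-approximation of $f(x')$ for every $x'\in K\cap U_\sigma$, where $U_\sigma:=\bigcap_{k<N}B(q_k,2^{-k})$ is $\folr$-definable and contains $x$. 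Dovetailing $M$ over all finite prefixes $\sigma$ enumerates such certified pairs $(U_\sigma,v_\sigma)$, and the $U_\sigma$ cover $K$; by compactness a finite subfamily already covers $K$, and by (i) we detect when this happens. Discarding each $U_l$ with $K\cap U_l=\emptyset$ (decidable by (i)) and setting $V:=\max_{l}v_l$, one checks $\lvert s-V\rvert\le\varepsilon$: every $x\in K$ lies in some retained $U_l$, so $f(x)\le v_l+\varepsilon\le V+\varepsilon$ and hence $s\le V+\varepsilon$; and the $U_l$ attaining $V$ meets $K$ at some $y_0$ with $f(y_0)\ge V-\varepsilon$, so $s\ge V-\varepsilon$. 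Running this with $\varepsilon=2^{-(i+1)}$ and emitting the resulting $V$ as the $i$-th term yields a computable name of $s=\max_{x\in K}f(x)$, uniformly in $M$ and $\phi$; applying the same construction to $-f$ yields $\min_{x\in K}f(x)$.

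I expect the certified-cover step to be the main obstacle: one must turn the abstract hypothesis ``$f$ is computable on $K$'' into a concrete, effectively enumerable family of neighborhoods on which $f$ oscillates by at most $\varepsilon$ \emph{over $K$}, taking care that a finite oracle read genuinely certifies the output on an honest open set (not merely at one point), that this enumeration is uniform, and, above all, that the outer search for a finite subcover terminates — which is exactly where compactness of $K$ together with the decidability of ``finitely many rational balls cover $K$'' (available because $K$ is $\folr$-definable) enters. The $\min$/$\max$ reduction, the bounding box, and the final estimate $\lvert s-V\rvert\le\varepsilon$ are routine.
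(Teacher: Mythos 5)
Your proof is correct, and it is genuinely different from the paper's. The paper's proof is a short reduction: it uses \rref{thm: definable sets are computable} to get $K$ as a computable closed set, searches for a rational bounding radius by deciding $\folr$ sentences, invokes Weihrauch's result that a representation of the \emph{compact} set $K$ is then computable (his Remark~5.2.3), passes to the computable image $f(K)$, and finishes by citing his Lemma~5.2.6 on computable maxima of compact subsets of $\R$. You instead reprove the computable extreme value theorem from first principles: you expose the use-principle for the oracle machine $M$ (a halting run reads only a finite prefix $\sigma$, hence certifies a single rational $\varepsilon$-approximation $v_\sigma$ simultaneously on the entire $\folr$-definable set $U_\sigma$), dovetail over prefixes to enumerate a certified open cover, and crucially replace Weihrauch's general computable-compactness machinery by the observation that both ``do these finitely many rational basic open sets cover $K$?'' and ``does this one meet $K$?'' are single decidable $\folr$ sentences by Tarski. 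The nonemptiness check to discard vacuous $U_\sigma$'s before taking $V=\max_l v_l$, and the two-sided estimate $|s-V|\le\varepsilon$, are exactly what is needed to close the argument, and the $\min$/$\max$ reduction via $-f$ is fine. What your route buys is self-containment and transparency: one sees concretely how $\folr$-definability of $K$ is being used to effectivize compactness (namely, to decide cover and intersection questions), rather than packaging that inside a black-box ``computable compact set'' representation; the cost is length. The paper's route is shorter and emphasizes that the result is just a special case of standard computable analysis. Both establish the required uniformity in $M$ and $\phi$.
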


\begin{proof}
     As $K$ is definable and closed, it is a computable closed set. 
     In addition, an upper bound on the radius of $K$ can be computed from $\phi(x)$: search for $R\in \Q^+$ such that the $\folr$ formula $\phi(x) \rightarrow \norm{x}^2 < R^2$ is valid, hence a representation of the compact set $K$ \cite[~Remark 5.2.3]{Weihrauch_2000} is computable from $\phi(x)$. Consequently, a representation of the image of $K$ under the computable function $f$, $f(K)$, is computable from $\phi(x)$ as well. The computability of $\max_{x \in K} f(x)$ then follows from the computability of maximums on compact sets \cite[Lemma~5.2.6]{Weihrauch_2000} applied to $f(K)$.
\end{proof}

\section{Completeness Under Domain Constraints}
\label{section: proof generation for ode solvers}
This section establishes the completeness of $\dL$'s axiomatization for convergence with additional assumptions on domain constraints. To accomplish this, we will reduce the problem of proving error bounds for approximants of compact IVPs to differential invariance questions, which $\dL$ is effectively complete for \cite{DBLP:journals/jacm/PlatzerT20}. Intuitively, this reduction is achieved by proving a syntactically provable version of ``continuous dependence on initial data'' for ODEs in $\dL$, which, as a special case, shows that the flow function induced by the ODE is continuous. Consequently, if an approximant starts off close to the initial condition, then (after some bounded time) it will forever remain close to the true flow in the supremum norm. Thus, proofs of future error bounds of approximants provably reduce to arithmetic questions \emph{at the initial time $t_0$}.

However, since polynomial vector fields are generally non-linear and therefore do not exhibit global Lipschitz constants, it is tricky to obtain explicit and computable bounds in this reduction process. As such, this section will first assume the presence of some bounded domain constraint, which essentially reduces to the case of Lipschitz vector fields since polynomials are locally Lipschitz. \rref{sec: symbolic axioms} improves upon this, establishing that such assumptions are not necessary and can be removed, proving completeness for convergence without any additional assumptions. 

\subsection{Compact IVPs and Approximants}

The following definitions fix standard notations that will be used throughout this article. 

\begin{definition}[Notation]
The following notation will be used throughout the article.
\begin{itemize}
    \item $\R^+, \Q^+$ denotes the set of positive real/rational numbers respectively.
    \item $x$ always denotes some vectorial variable $x = (x_1, \cdots, x_n)$. 
    \item For a ring $R$, denote its ring of polynomials in the variables $x_1, \ldots, x_n$ as $R[x_1, \ldots, x_n]$. This paper only considers $R \in \{\Q, \R\}$. By a slight abuse of notation, elements $p(x) \in R[x]$ are also identified with the corresponding polynomial $p : R^n \to R$. 
    \item By a \emph{rational polynomial}, we mean some element of $\Q[x]$ where $x$ is understood to be vectorial. 
    \item $\norm{x}$ for $x \in \R^n$ always refers to the Euclidean norm, and $\norm{f}$ always refers to the sup norm for functions $f$. We sometimes write $\norm{f}_A = \sup_{x \in A} \norm{f(x)}$ to emphasize that the supremum norm of $f$ is taken on the set $A$, which is $\folr$ definable when $A$ is $\folr$ definable. 
    \item $C^k([a, b], \R^n)$ for $k \in \N$ denotes the set of functions from the closed interval $[a, b]$ to $\R^n$ with $k$ continuous derivatives on $[a, b]$. For $K$ a compact Hausdorff space, $C^0(K, \R^n)$ denotes the space of continuous functions with the usual supremum norm $\norm{f}_{K} = \sup_{x \in K}\norm{f(x)}$. When the co-domain is clear, these are also abbreviated as $C^k([a, b]), C^0(K)$. 
    \item $\IQ$ denotes the set of all compact intervals with rational endpoints, i.e
    \[\IQ = \{[a, b] : a \leq b, a, b \in \Q\}\]
    \item For $x \in \R^n, R \in \R^+$, write $B[x, R]$ for the closed ball of radius $R$ around $x$, and $B(x, R)$ for the open ball. When $x, R$ are expressible in $\dL$, $y \in B[x, R]$ and $y \in B(x, R)$ are definable via 
    \begin{align*}
        &y \in B[x, R] \iff \norm{y - x}^2 \leq R^2\\
        &y \in B(x, R) \iff \norm{y - x}^2 < R^2
    \end{align*}
    For a set $A \subseteq \R^n$, write $B[A, R]$ (and similarly $B(A, R)$) for $\bigcup_{x \in A}B[x, R]$. 
    
    \item $\folr$ denotes the set of all first-order formulas in the language of real closed fields. In this article, definable always refers to $\folr$ definable unless explicitly stated otherwise. As real closed fields admit quantifier elimination \cite{Tarski_1948}, we may assume without loss of generality that every element of $\folr$ is quantifier-free. Finally, for formulas $\phi(x) \in \folr$, $\eval{\phi}$ denotes the set defined by the formula in $\R$. I.e:
    \[\eval{\phi} = \{y \in \R^n \mid \R \models \phi(y)\}\]
    which coincides with the semantics of $\phi$ in $\dL$. 
\end{itemize}
\end{definition}

\begin{definition}[Compact IVP]
    \label{def: rational IVP}
    A compact initial value problem (IVP) is a triple 
    \[(f(x), C(x), [t_0, T]) \in \Q^n[x] \times \folr \times \IQ\] 
    where $\eval{C(x)}$ is a non-empty compact set. The variable $x$ is often suppressed for brevity, and $\eval{C}$ refers to $\eval{C(x)}$. Such a triple represents the following IVPs on $[t_0, T]$:
    \begin{align*}
        &x' = f(x)\\
        &x(t_0) = x_0 \in \eval{C}
    \end{align*}
    That is, the triple defines a collection of IVPs on some compact time horizon $[t_0, T]$ where the initial conditions are constrained to the compact set $\eval{C}$. The flow $\phi : \eval{C} \times [t_0, T] \to \R^n$ of the compact IVP (if it exists) is the flow of the vector field $x' = f(x)$ starting at $t = t_0$. I.e. $\phi(x, t_0) = x, \phi'(x, t) = f(\phi(x, t))$ for all $(x, t) \in \eval{C} \times [t_0, T]$. 
\end{definition}

Since singletons are compact, the standard notion of IVPs with a fixed initial condition $x(0) = x_0 \in \Q^n$ is a special case of \rref{def: rational IVP} where $C(x) \equiv x = x_0$. 

\begin{remark}
In practice, many IVPs contain \emph{parameters}. I.e. $x' = f(x, a)$, where the vectorial variable $a$ denotes the parameters used. It is always possible to rewrite such IVPs into:
\begin{align*}
    &x' = f(x, a)\\
    &a' = 0\\
    &x(t_0) = x_0, a(t_0) = a
\end{align*}
which forms a compact IVP when the parameters $a$ are constrained to a compact set. 
\end{remark}

\begin{example}[Moore-Greitzer jet engine model]
    \label{ex: jet engine}
    The Moore-Greitzer model of a jet engine \cite{Aylward_Parrilo_Slotine_2008, Rwth_Sankaranarayanan_Abraham_2014} for scalars $u, v$ is given by:
    \begin{align*}
        &u' = -v - 1.5u^2 - 0.5u^3 - 0.5\\
        &v' = 3u - v
    \end{align*}
    with initial conditions
    \begin{align*}
        & u(0) \geq 0.9\\
        & v(0) \geq 0.9\\
        & u(0) + v(0) - 2 \leq 0
    \end{align*}
    where $u, v$ measures the mass flow and the pressure rise respectively. Since the initial conditions define a (semialgebraic) compact subset of $\R^2$, for any $T \in \Q^+$, we may express this model on the time horizon $[0, T]$ as a compact IVP $(f(u, v), \Delta(u, v), [0, T])$ where:
    \begin{itemize}
        \item $f(u, v) = (-v - 1.5u^2 - 0.5u^3 - 0.5, 3u - v)$.
        \item $\Delta(u, v) \equiv u \geq 0.9 \land v \geq 0.9 \land u + v - 2 \leq 0$
    \end{itemize}
    This model will serve as a running example through this article, culminating in a proof of the error bound of a numerically computed approximation to the true flow in \rref{ex: jet engine bound}. All proofs concerning the Moore-Greitzer model have been verified using \KeYmaeraX\footnote{\url{https://github.com/LongQianQL/Compact_IVP_Example}}. 
\end{example}

The first step is to establish a suitable representation for approximants to solutions of compact IVPs. In this article, such approximants are taken to be functions definable in $\folr$, which are also the semialgebraic functions over $\Q$ \cite{Bochnak_Coste_Roy_2013}. The following definition restricts to the particular case of definable functions with domain being a subset of $\R^{n + 1}$ and co-domain $\R^n$ for $n \geq 1$. This is because the approximants represent approximations to the flow induced by compact IVPs, as such, they will always be functions from $\R^{n + 1}$ ($n$ space variables, $1$ time variable) to $\R^n$. 

\begin{definition}[$\folr$ Definable functions]
    \label{def: definable function}
    A function $f : A \subseteq \R^{n + 1} \to \R^n$ with definable domain $A$ is \emph{definable} if there exists a $\folr$ formula $\phi(x, t, y)$ such that for all $x, y \in \R^n, t \in \R$
    \[f(x, t) = y \iff \R \models \phi(x, t, y)\]
    In this case, we say that $\phi(x, t, y)$ is a \emph{representation} of $f$. 
\end{definition}

\begin{remark}
    As $\dL$ strictly extends $\folr$ \cite{DBLP:conf/lics/Platzer12b}, $\folr$ definable functions are also $\dL$ definable. 
\end{remark}

The class of definable functions is very versatile. In particular, polynomials and splines with rational coefficients are definable in a natural way. As a consequence, one can always carry out spline/polynomial interpolation on a mesh-grid of points to arrive at a definable approximant. 

\begin{remark}
    \label{rem: errors for definable functions expressible}
    While vanilla $\dL$ only allows for polynomials as terms, definable functions in the sense of \rref{def: definable function} can be expressed as well using their representations. E.g. suppose $f : \R^{n + 1} \to \R^n$ has representation $\phi(x, t, y)$ and $u \in \V^n$ is some vectorial variable, $\norm{f(x, t) - u}^2 \leq M^2$ can then be expressed by
    \[\exists y (\phi(x, t, y) \land \norm{y - u}^2 \leq M^2)\]
    such abbreviations will be used throughout the article for formulas containing definable functions.  
\end{remark}

The following definition makes precise the notion of approximations used in this article. 

\begin{definition}[Local definable approximant]
\label{def: LDA}
Let $(f(x), C(x), [t_0, T])$ be a compact IVP and $\phi(x, t)$ be its corresponding flow function. A {\emph{local definable approximant}} (LDA) for this compact IVP is a computable function $\Phi : \N \to \folr$ such that the following holds: 
\begin{enumerate}
    \item $\phi(x, t) : \eval{C} \times [t_0, T] \to \R^n$ is well-defined (i.e. does not exhibit finite time blow-up). 
    \item For all $k \in \N$, $\Phi(k)$ defines a function $\Phi_k : \eval{C} \times [t_0, T] \to \R^n$ (thus each $\Phi_k$ is a definable function with representation $\Phi(k)$). 
    \item The sequence of functions $(\Phi_k)_k$ converges to $\phi$ in $C^0(\eval{C} \times [t_0, T], \R^n)$. 
    \item For all $k \in \N$, the function $\Phi_k$ is differentiable in its second (time) variable, and the sequence of time derivatives $(\Phi_k')_k$ converges to $\phi'$ in $C^0(\eval{C} \times [t_0, T], \R^n)$.
\end{enumerate} 
\end{definition}

\begin{example}
For IVPs the sequence of Picard iterates \cite{Walter_1998} always form a LDA over a sufficiently small interval. I.e. For every IVP $x' = f(x), x(t_0) = x_0 \in \Q^n$, there always exists a sufficiently small $S > t_0$ such that the Picard iterates form a LDA for the compact IVP $(C(x) \equiv x = x_0, f(x), [t_0, S])$. To show this, recall that the Picard iterates $(\phi_k)_k$ of the IVP $x' = f(x), x(t_0) = x_0$ are defined inductively by

\begin{itemize}
    \item $\phi_0(t) = x_0$.
    \item $\phi_{k + 1}(t) = x_0 + \int_{t_0}^tf(\phi_{k}(s))ds$.
\end{itemize}
By the Picard-Lindel\"of theorem the iterates converge uniformly to the unique solution on some interval $[t_0, S]$ for some $S > t_0$. Furthermore, this sequence of iterates are simply polynomials in $t$ with rational coefficients since integrals of rational polynomials are rational polynomials. As integrals of polynomials are computable, the sequence of iterates $(\phi_k)_{k}$ and their representations are computable. It remains to show that the sequence $(\phi_k')_k$ converges to $x'$ on $[t_0, S]$. Indeed, let $x(t) : [t_0, S] \to \R^n$ denote the unique solution that this sequence converges to. We have
\[\abs{x'(t) - \phi_{k + 1}'(t)} = \abs{f(x(t)) - f(\phi_k(t))}\]
Note that $B[x([t_0, S]), 1]$ (the set of points of Euclidean distance at most 1 away from $x([t_0, S])$) is compact as $x$ is continuous and $[t_0, S]$ is compact. Hence, as $f$ is a polynomial vector field and therefore locally Lipschitz, there exists some $L > 0$ which is the Lipschitz constant of $f$ on $B[x([t_0, T]), 1]$ (we can computably find such a value by computing the maximum of $f$'s partial derivatives on $B[x([t_0, T]), 1]$, but this is \emph{not required} to prove the iterates form a LDA). Since $(\phi_k)_k$ converges to $x$ on $[t_0, S]$, we have $\phi_k([t_0, S]) \subseteq B[x([t_0, S]), 1]$ for all sufficiently large $k$. In other words, for all sufficiently large $k$, for all $t \in [t_0, S]$, we have:
\[\abs{x'(t) - \phi_{k + 1}'(t)} = \abs{f(x(t)) - f(\phi_k(t))} \leq L\norm{x - \phi_k}_{[t_0, S]} \xrightarrow{k \to \infty} 0\]
The Picard-Lindel\"of theorem says that $\phi_k \to x$ uniformly i.e. in the supremum norm, and the above computation shows $(\phi_k')_k \to x'$ on $[t_0, S]$ under the sup-norm as well, therefore the sequence of Picard iterates $(\phi_k)_k$ forms a LDA.
\end{example}

The example above shows that the Picard iterates will always be  LDAs over sufficiently small intervals for IVPs with fixed initial values. The following theorem shows that for \emph{any compact IVP}, a corresponding LDA can always be constructed effectively on the entire interval $[t_0, T]$ provided that the compact IVP does not exhibit finite-time blow up on $[t_0, T]$. 

\begin{theorem}[Computable LDA]
    \label{thm: abstract solvers always exist}
    Let $(f(x), C(x), [t_0, T])$ be a compact IVP where the corresponding flow $\phi(x, t)$ is well-defined on $\eval{C} \times [t_0, T]$. Then there exists a corresponding LDA $\Phi$ that is uniformly computable in the compact IVP such that for all $k \in \N$, every component of the function defined by $\Phi(k)$ is a rational polynomial in $x, t$. 
\end{theorem}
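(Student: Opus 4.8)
The plan is to construct the LDA explicitly from a single computable approximation of the flow, using polynomial interpolation together with a "fattened" compact domain to absorb finite-precision issues. First I would invoke the computability of the flow: since $(f(x), C(x), [t_0,T])$ is a compact IVP with $\phi$ well-defined on $\eval{C}\times[t_0,T]$, the flow $\phi$ is a (Type-Two) computable function on this compact definable domain — this is the computability of solutions to compact IVPs referenced in the related-work discussion (Graça et al.), and it is uniform in the data of the IVP. Because $f$ is a polynomial with rational coefficients, $\phi$ is also $C^\infty$ in $t$ and its time-derivative is $\phi'(x,t) = f(\phi(x,t))$, which is again computable by composition. So I have computable access, to any desired accuracy, to both $\phi$ and $\phi'$ on $\eval{C}\times[t_0,T]$.

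Next I would turn these approximations into \emph{definable} (indeed polynomial) ones. The idea is: for each $k$, choose a fine rational mesh-grid $G_k$ on $\eval{C}\times[t_0,T]$ (using the computable bound on the radius of $\eval{C}$ from \rref{thm: computable extreme value theorem}-style reasoning), compute rational approximations of $\phi$ at the grid nodes to precision $2^{-k}$, and let $\Phi(k)$ be a multivariate polynomial interpolant through these rational data points — interpolation with rational nodes and rational values produces a polynomial in $\Q[x,t]$, and the construction is plainly computable. By standard quantitative approximation theory (e.g. a Jackson/Bramble–Hilbert-type estimate, using that $\phi$ is $C^1$ hence uniformly continuous with a computable modulus of continuity on the compact domain), as $k\to\infty$ the mesh can be refined and the node-precision tightened so that $\norm{\Phi_k - \phi}_{\eval{C}\times[t_0,T]} \to 0$; the same scheme applied to $\phi'$ (or, more cheaply, differentiating the interpolant and controlling the derivative error via the mesh size) gives $\norm{\Phi_k' - \phi'}_{\eval{C}\times[t_0,T]} \to 0$. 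Each $\Phi_k$ is polynomial in $x,t$, hence trivially differentiable in its time variable, so conditions (2) and (4) of \rref{def: LDA} hold; condition (1) is hypothesised; condition (3) is the uniform convergence just arranged. Finally, since every step — computing the mesh, the node values, the interpolant's coefficients — is a uniformly computable function of $(f,C,[t_0,T])$, the map $\Phi:\N\to\folr$ is computable uniformly in the compact IVP, as required.

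A conceptually cleaner alternative I would keep in mind is to use truncated Taylor/Picard-type expansions of $\phi$ in $t$ with interval-enclosure remainder terms evaluated symbolically, but this is delicate precisely because $\phi$ need not be globally analytic with controlled radius, and the blow-up-free hypothesis on $[t_0,T]$ does not by itself give a clean global series; the mesh-and-interpolate route side-steps this and only uses compactness plus uniform continuity. I would therefore present the interpolation construction as the main argument.

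The main obstacle I expect is the \emph{simultaneous} and \emph{uniform} control of the two convergences (of $\Phi_k$ to $\phi$ and of $\Phi_k'$ to $\phi'$) while keeping everything polynomial and computable: one must choose the interpolation scheme so that differentiating the interpolant does not blow up the error (this rules out naive high-degree single-patch interpolation and favors piecewise-polynomial/spline-type or low-degree-on-fine-mesh schemes, whose derivative-error estimates are governed by the mesh size times the modulus of continuity of $\phi'$), and one must extract a computable modulus of continuity for $\phi$ and $\phi'$ on $\eval{C}\times[t_0,T]$ from the computability of $\phi$ — which follows from Type-Two computability of $\phi$ on the compact domain, but needs to be stated carefully and uniformly in the IVP data. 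Once the quantitative approximation estimate with a computable modulus is in hand, choosing the mesh width and node precision as explicit computable functions of $k$ closes the argument.
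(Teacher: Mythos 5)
Your proposal correctly identifies the two essential ingredients the paper uses: the Type-Two computability of $\phi$ (and hence of $\phi' = f(\phi)$) on the compact definable domain, and an effective polynomial-approximation step. But there is a genuine gap exactly where you flag the ``main obstacle.'' You need a single polynomial $p_k \in \Q[x,t]$ whose \emph{own} time-derivative is uniformly close to $\phi'$, not a pair of unrelated approximations to $\phi$ and $\phi'$. Mesh-interpolation gives no clean bound on the derivative of the interpolant without quantitative modulus control on $\phi'$ plus a carefully chosen scheme, and the schemes you mention that do control derivative error (piecewise-polynomial/splines, low-degree-per-patch) produce functions that are not polynomials in $\Q[x,t]$, so they would not prove the theorem as stated. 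High-degree single-patch interpolation, which does give a polynomial, is precisely the case you rule out. Your proposal names the problem without resolving it.

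The paper's proof sidesteps this cleanly with a small but crucial reordering: approximate the \emph{derivative} first, then integrate. It uses the effective Stone--Weierstra{\ss} theorem to compute $q^i_k \in \Q[x,t]$ with $\norm{q^i_k - \phi_i'}$ small, and then defines $p^i_k(x,t) = x_i + \int_{t_0}^t q^i_k(x,s)\,ds$, which is again a rational polynomial since integrating a polynomial in $t$ is a symbolic operation. Now $\partial_t p^i_k = q^i_k$ \emph{exactly}, so the derivative error is tautologically the Stone--Weierstra{\ss} error, and the function error $\norm{p^i_k - \phi_i}$ is bounded by $(T - t_0)$ times the derivative error because $\phi_i(x,t) = x_i + \int_{t_0}^t \phi_i'(x,s)\,ds$ as well. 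This eliminates any need for a computable modulus of continuity of $\phi$ or $\phi'$, for interpolation-error theory, or for reconciling two separate approximations. If you want to salvage your route without this trick, you would need to commit to a specific constructive scheme (e.g.\ multivariate Bernstein polynomials, whose derivatives converge for $C^1$ targets) and carry out the derivative-error estimate explicitly; the integrate-the-derivative construction makes all of that unnecessary.
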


\begin{proof}
    Since rational polynomials are $\folr$ definable, it suffices to computably construct a sequence of rational polynomials $(p^i_{k})_{1 \leq i \leq n, k \in \N} \subseteq \Q[x, t]$ such that the corresponding sequence $(p_k)_k \subseteq \Q^n[x, t]$ defined via $p_k = (p^1_k, \cdots, p^n_k)$ satisfies:
    \begin{enumerate}
        \item The sequence $(p_k)_k$ converges to $\phi$ in $C^0(\eval{C} \times [t_0, T], \R^n)$. 
        \item The sequence $(p_k')_k$ converges to $\phi'$ in $C^0(\eval{C} \times [t_0, T], \R^n)$. 
    \end{enumerate}
    This is sufficient as one can then define the formulas: 
    \[\psi_k(x, y, t) \equiv  \bigwedge_{1 \leq i \leq n} y_i = p^i_k(x, t)\]
    Properties $(1), (2)$ then imply that the function $\Phi : \N \to \folr$ defined by $\Phi(k) = \psi_k$ forms a LDA for the compact IVP. 
    
    To construct the desired sequence $(p^i_{k})_{1 \leq i \leq n, k \in \N}$, first fix some $1 \leq i \leq n$, let $k \in \N$ be arbitrary and notice that it suffices to construct some $p^i_k \in \Q[x, t]$ satisfying the following:
    \begin{enumerate}
        \item $\norm{\phi_i - p^i_k}_{C^0(\eval{C} \times [t_0, T])} < 2^{-k}$ where $\phi_i$ denotes the $i$-th component of $\phi$. 
        \item $\norm{\left(p^i_k\right)' - \phi_i'}_{C^0(\eval{C} \times [t_0, T])} < 2^{-k}$ where the derivative is taken with respect to time variable. 
    \end{enumerate}
    As we may then carry out the same construction for arbitrary $1 \leq i \leq n$ and $k\in \N$ to obtain
    \[\norm{(\phi_1, \cdots, \phi_n) - (p^1_k, \cdots, p^n_k)}_{C^0(\eval{C} \times [t_0, T])} \leq \sum_{i = 1}^n \norm{\phi_i - p^i_k}_{C^0(\eval{C} \times [t_0, T])} < n2^{-k}\]
    which converges to $0$ as $k \to \infty$, likewise for $\norm{\phi' - p_k'}_{C^0(\eval{C} \times [t_0, T])}$. To carry out the construction for a fixed index $1 \leq i \leq n$ and $k \in \N$, first note that the flow function $\phi(x, t) : \eval{C} \times [t_0, T] \to \R^n$ is computable for compact IVPs \cite{Graca_Zhong_Buescu_2009} as $\eval{C}$ is a computably closed set by \rref{thm: definable sets are computable}. Because $f \in \Q^n[x, t]$ is also computable, consequently the time-derivative of $\phi$, $\phi'(x, t) = f(\phi(x, t))$ is also computable on $\eval{C} \times [t_0, T]$. The effective Stone-Weierstrass theorem \cite[Theorem~6.1.10]{Weihrauch_2000} then allows us to compute some $q^i_k \in \Q[x, t]$ such that 
    \[\norm{q^i_k - \phi_i'}_{C^0(\eval{C} \times [t_0, T])} < \frac{2^{-k - 1}}{\max(T - t_0, 1)}\]
    Define $p^i_k \in \Q[x, t]$ by
    \[p^i_k(x, t) = x_i + \int_{t_0}^t q^i_k(x, s) ds\]
    which is computable since $q^i_k$ is a polynomial with rational coefficients, hence its integral in the time variable $t$ can be directly computed symbolically using the elementary power rule. It remains to verify that conditions (1) and (2) are met:
    \begin{enumerate}
        \item Direct computations for $(x_0, t) \in \eval{C} \times [t_0, T]$ yields:
        \[\abs{p^i_k(x_0, t) - \phi_i(x_0, t)} \leq \int_{t_0}^t \abs{q^k_i(x_0, s) - \phi_i'(x_0, s)} \leq (T - t_0)\frac{2^{-k - 1}}{\max(T - t_0, 1)} \leq 2^{-k - 1} < 2^{-k}\]
        \item Noticing that the time derivative of $p^i_k$ is $q^i_k$ which is continuous in both variables, a similar computation to the above for any $(x_0, t) \in \eval{C} \times [t_0, T]$ yields:
        \[\abs{(p^i_{k})'(x_0, t) - \phi_i'(x_0, t)} = \abs{q^i_k(x_0, t) - \phi_{i}'(x_0, t)} \leq \norm{q^i_k - \phi_i'}_{C^0(\eval{C} \times [t_0, T])} < 2^{-k - 1}\]
        thereby condition $(2)$ is also satisfied. 
    \end{enumerate}
    Since the construction is uniformly computable for all $1 \leq i \leq n$ and $k \in \N$, this completes the proof. 
\end{proof}

\subsection{Provable IVP Approximants}

The following technical lemma proves the validity of a class of differential invariants capturing the ``continuous dependence on initial conditions'' characteristic of flow functions. Such invariants are then used in proving the desired error bounds under the presence of a bounded domain constraint $B(x)$ containing the true flow of the compact IVP $(f(x), C(X), [t_0, T])$. Note that this is an assumption on the $\folr$ formula $B(x)$ itself, rather than a constraint on the values of the variables. 
\begin{lemma}[continuous dependence on initial conditions]
\label{lem: continuous dependence}
Let $(f(x), C(x), [t_0, T])$ be a compact IVP and $B(x) \in \folr$. Further assume that the following holds:

\begin{enumerate}
    \item The flow $\phi(x, t)$ of the compact IVP is well-defined on $\eval{C} \times [t_0, T]$.
    \item $\eval{B} \subset \R^n$ is a bounded set containing $\phi(\eval{C}, [t_0, T])$. 
\end{enumerate}

Then for all $K \in \Q^+$ greater than or equal to the Lipschitz constant of $f(x)$ on $\eval{B}$, for all LDA $\Phi$, for all positive rational $h \in \Q^+$, for all sufficiently large $k \in \N$, the following is a valid differential invariant in $\dL$:
\[\psi_{k}(x_0, x, g, t) \rightarrow [x' = f(x), g' = Kg, t' = 1 \& t \leq T \land B(x)]\psi_{k}(x_0, x, g, t)\]
With $\psi_{k}(x_0, x, g, t)$ defined as:
\begin{align*}
    &\psi_{k}(x_0, x, g, t) \mequiv t \geq t_0 \land g \geq 1 \land C(x_0) \land \norm{x - \Phi_k(x_0, t)}^2 \leq \eps(g, t)^2\\
    & \eps(g, t) \mequiv h(1 + t - t_0)g - h
\end{align*}
A corresponding witness $k$ can also be computed uniformly from the compact IVP, $B(x)$, $\Phi$ and $h$.
\end{lemma}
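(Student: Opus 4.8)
The plan is to reduce everything to \rref{thm:diff inv completeness}. After unfolding the definable function $\Phi_k$ through its representation $\Phi(k)$ (cf.\ \rref{rem: errors for definable functions expressible}), $\psi_k$ is an $\folr$ formula, the domain $t \le T \land B(x)$ is $\folr$, and $g' = Kg$ is a polynomial ODE since $K \in \Q^+$; hence it suffices to exhibit a computable $k$ for which $\psi_k \rightarrow [x' = f(x), g' = Kg, t' = 1 \,\&\, t \le T \land B(x)]\psi_k$ is \emph{semantically valid}, and \rref{thm:diff inv completeness} then produces the $\dL$ derivation effectively (preserving uniform computability). So the two remaining tasks are a semantic invariance argument and the computation of $k$.

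For semantic validity, fix $x_0$ with $C(x_0)$ (so $x_0 \in \eval C$; as $x_0$ is not bound by the ODE it stays constant) and take any solution $s \mapsto (x(s), g(s), t(s))$ of the augmented ODE that remains in the domain on $[0,\tau]$ and satisfies $\psi_k$ at $s=0$. The conjuncts $t \ge t_0$, $g \ge 1$, $C(x_0)$ persist immediately from $t(s) = t(0)+s$, $g(s) = g(0)e^{Ks}$ with $g(0)\ge1$ and $K>0$, and constancy of $x_0$; with $t(s)\ge t_0$ this also gives $\eps(g(s),t(s)) \ge 0$. For the last conjunct, set $U(s) = \norm{x(s) - \Phi_k(x_0,t(s))}^2$. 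The domain forces $x(s) \in \eval B$, and by $t(s)\in[t_0,T]$ together with hypothesis~(2) also $\phi(x_0,t(s))\in\eval B$; splitting $f(x) - \Phi_k'(x_0,t) = \big(f(x)-f(\phi(x_0,t))\big) + \big(\phi'(x_0,t)-\Phi_k'(x_0,t)\big)$ and using $\phi'(x_0,\cdot) = f(\phi(x_0,\cdot))$ with the Lipschitz bound of $f$ on $\eval B$ yields
\[\textstyle U'(s) \le 2KU(s) + 2\eta_k\sqrt{U(s)}, \qquad \eta_k := K\norm{\phi - \Phi_k}_{\eval C \times [t_0,T]} + \norm{\phi' - \Phi_k'}_{\eval C \times [t_0,T]}.\]
A standard Gr\"onwall/comparison estimate (run through the smooth quantity $U$ to avoid non-smoothness at $U=0$) gives $\sqrt{U(s)} \le \sqrt{U(0)}\,e^{Ks} + \tfrac{\eta_k}{K}(e^{Ks}-1)$, and since $\psi_k$ at $s=0$ gives $\sqrt{U(0)} \le \eps(g(0),t(0))$, while $e^{Ks} = g(s)/g(0)$ and $t(s) = t(0)+s$, a direct computation shows
\[\textstyle \eps(g(s),t(s)) - \Big(\sqrt{U(0)}\,e^{Ks} + \tfrac{\eta_k}{K}(e^{Ks}-1)\Big) \ge h\,s\,g(s) + \big(h - \tfrac{\eta_k}{K}\big)\big(\tfrac{g(s)}{g(0)} - 1\big).\]
Since $g(s) \ge g(0) \ge 1$ and $s \ge 0$, the right-hand side is $\ge 0$ \emph{provided} $\eta_k \le Kh$; so $\psi_k$ is an invariant for every such $k$.

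It remains to compute a $k$ with $\eta_k \le Kh$. Such $k$ exist because $\norm{\phi - \Phi_k} \to 0$ and $\norm{\phi' - \Phi_k'} \to 0$ by the LDA axioms, so $\eta_k \to 0 < Kh$ (this also yields the stated ``for all sufficiently large $k$'', non-effectively). Moreover $\eta_k$ is a computable real uniformly in $k$: $\phi$ is computable on the definable compact set $\eval C \times [t_0,T]$ (as in the proof of \rref{thm: abstract solvers always exist}), hence so is $\phi' = f\circ\phi$; $\Phi(k)$ is computable from $k$, so $\Phi_k$ is computable, and likewise $\Phi_k'$ (the derivative of a definable differentiable function being definable, uniformly in $\Phi(k)$); finally, sup-norms of computable functions over a definable compact set are computable reals by \rref{thm: computable extreme value theorem}. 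As ``$\eta_k < Kh$'' is then semidecidable in $k$, a search over $k = 0,1,2,\dots$ halts with a witness, uniformly in the compact IVP, $B(x)$, $\Phi$, $h$ (and $K$). Feeding this $k$ to \rref{thm:diff inv completeness} completes the argument.

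I expect the Gr\"onwall bookkeeping to be the crux: the error profile must be chosen so that the accumulated approximation error $\eta_k$ enters the final comparison \emph{only} through the affine term $h - \eta_k/K$ — which is precisely why $\eps(g,t) = h(1+t-t_0)g - h$ is shaped as it is, the ``$h(t-t_0)g$'' slack absorbing the exponential Gr\"onwall growth and the ``$-h$'' making $\eps$ vanish at the base point — so that preservation reduces to the \emph{single} scalar condition $\eta_k \le Kh$. Lining up the $g$-$t$-$s$ dependencies (so $e^{Ks}$ is tracked polynomially by $g(s)/g(0)$, keeping everything inside $\folr$) and handling the endpoint behaviour at $U=0$ are the remaining delicate points.
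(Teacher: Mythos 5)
Your proof is correct, but it takes a genuinely different route from the paper's. The paper fixes an initial state $(y_0,y,g_0,t_1)$ satisfying $\psi_k$, decomposes the quantity to bound via the \emph{residual} $R_k(x_0,t)=\Phi_k(x_0,t)-\phi(x_0,t)$ and the \emph{integral form} of the ODE, and applies the integral version of Gr\"onwall's inequality (\rref{lem: gronwall}) to the difference of two true flows $\phi(\phi(y_0,t_1),s)-\phi(y,s)$; the approximation error and the continuous dependence on initial conditions appear as two explicitly separated terms, and the conditions on $k$ come out as the pair $\norm{R_k}_\infty\le h$ and $\norm{R_k'}_\infty\le h$. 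You instead work directly with the Lyapunov-type quantity $U(s)=\norm{x(s)-\Phi_k(x_0,t(s))}^2$ along the augmented ODE, derive the scalar differential inequality $U'\le 2KU+2\eta_k\sqrt{U}$ via the chain rule and the Lipschitz bound on $\eval B$, and use a differential comparison estimate to obtain $\sqrt{U(s)}\le\sqrt{U(0)}\,e^{Ks}+\frac{\eta_k}{K}(e^{Ks}-1)$, collapsing the two error sources into the single aggregate $\eta_k=K\norm{\phi-\Phi_k}_\infty+\norm{\phi'-\Phi_k'}_\infty$ and the single condition $\eta_k\le Kh$. Your algebraic identity $\eps(g(s),t(s))-\bigl(\sqrt{U(0)}e^{Ks}+\tfrac{\eta_k}{K}(e^{Ks}-1)\bigr)\ge h\,s\,g(s)+\bigl(h-\tfrac{\eta_k}{K}\bigr)\bigl(\tfrac{g(s)}{g(0)}-1\bigr)$ checks out, and the comparison and computability arguments are sound (the $U=0$ boundary case is correctly noted as a standard approximation issue). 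Your approach buys a cleaner single-condition reduction and avoids the nested integral bookkeeping; the paper's buys a more transparent separation between "approximation error" and "continuous dependence on initial conditions" that motivates the lemma's name, and it derives its Gr\"onwall bound for two genuine flows (which is closer to the classical statement). Both are equally effective for the uniform computability claim, since in either case a semidecidable strict-inequality test on computable sup-norms over the definable compact set $\eval C\times[t_0,T]$ (via \rref{thm: computable extreme value theorem}) yields a halting search for the witness $k$, which is then handed to \rref{thm:diff inv completeness}.
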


\rref{lem: continuous dependence} computes some $k$ witnessing the validity of the differential invariant, but it proves the stronger assertion that there exists some $k_0 \in \N$ such that for all $k \geq k_0$, the differential invariant at index $k$ is valid. Such a threshold $k_0$ is in general not computable, because LDAs are not required to have a computable rate of convergence to the true flow to allow for more general approximants. This is similar to the difference between computably enumerable real numbers, which have computable sequences of rationals converging to them, and computable real numbers, which have computable sequences of rationals converging to them with \emph{computable rates of convergence}. 

\begin{remark}
    Intuitively, the idea of the proof of \rref{lem: continuous dependence} is to find some $\folr$ formula $\text{Small}(x, t)$ that captures the difference between the flow $\phi(x, t)$ and the approximation $\Phi_k(x, t)$ being small. By the continuous dependence of $\phi$ on its initial conditions, the differential invariant
    \[\text{Small}(x, t) \rightarrow \dbox{x' = f(x), t' = 1 \& t \leq T}{\text{Small}(x, t)}\]
    is valid and by \rref{thm:diff inv completeness} \irref{dinv} will give a syntactic proof. However, while the dependence of the flow $\phi(x, t)$ on its initial conditions is continuous, the error rate may grow like $e^{Lt}$ with $L$ being the Lipschitz constant of the vector field $f$ on $\phi(\eval{C}, [t_0, T])$. Since the theory of real exponential fields is not known to be decidable \cite{real_exp_field_decidability} and $e^x$ is not supported in vanilla $\dL$, we will have to encode it via an ODE. In the definition of $\psi_k(x_0, x, g, t)$, the variable $g$ represents the exponential function, as indicated by its ODE $g' = Kg$. The error function $\eps(g, t)$ represents this error rate being scaled by the exponential function $g$. Lastly, as $f(x)$ is in general only \emph{locally} Lipschitz, the domain constraint $B(x)$ is needed in order to obtain a \emph{fixed} upper bound on the Lipschitz constant. 
\end{remark}

The following integral form of Gr\"onwall's inequality is needed to prove \rref{lem: continuous dependence}.

\begin{lemma}[Gr\"onwall's inequality \cite{Gronwall_1919, Walter_1998}]
\label{lem: gronwall}
    Let $[a, b] \subset \R$ be an interval of the real line, $u \in C([a, b], \R)$ and $\alpha, \beta \in \R$. Further suppose that for all $t \in [a, b]$, we have:
    \[u(t) \leq \alpha + \int^t_a\beta u(s) ds\]
    Then the following inequality holds for all $t \in [a, b]$:
    \[u(t) \leq \alpha e^{\beta(t - a)}\]
\end{lemma}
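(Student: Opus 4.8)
The plan is to use the standard integrating-factor argument (as in Walter~\cite{Walter_1998}), turning the integral inequality into a linear differential inequality. First I would set $v(t) := \int_a^t \beta\, u(s)\, ds$. Since $u$ is continuous on the compact interval $[a,b]$ it is bounded, so $v$ is well-defined and of class $C^1$ on $[a,b]$ by the fundamental theorem of calculus, with $v(a) = 0$ and $v'(t) = \beta\, u(t)$. The hypothesis then reads $u(t) \le \alpha + v(t)$ for all $t \in [a,b]$. Every application in this article instantiates $\beta = K$ with $K$ a nonnegative Lipschitz constant, so I work under $\beta \ge 0$; multiplying $u(t) \le \alpha + v(t)$ by $\beta \ge 0$ and substituting $v'(t) = \beta\, u(t)$ yields the linear differential inequality $v'(t) - \beta\, v(t) \le \beta\alpha$ on $[a,b]$.

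Next I would introduce the integrating factor $\mu(t) := e^{-\beta(t-a)} > 0$. Multiplying the differential inequality by $\mu(t)$ gives $(\mu\, v)'(t) = \mu(t)\,(v'(t) - \beta\, v(t)) \le \beta\alpha\, \mu(t)$. Integrating from $a$ to $t$, using $\mu(a) = 1$, $v(a) = 0$, and the elementary evaluation $\int_a^t \beta\alpha\, e^{-\beta(s-a)}\, ds = \alpha\,(1 - e^{-\beta(t-a)})$, I obtain $\mu(t)\, v(t) \le \alpha\,(1 - \mu(t))$, hence $v(t) \le \alpha\,(e^{\beta(t-a)} - 1)$ for every $t \in [a,b]$. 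Combining this with $u(t) \le \alpha + v(t)$ gives $u(t) \le \alpha + \alpha\,(e^{\beta(t-a)} - 1) = \alpha\, e^{\beta(t-a)}$, which is the desired bound.

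I do not expect a genuine obstacle here: the regularity of $v$ and the single explicit integral are routine. The one point deserving care is the sign of $\beta$ --- the step converting $u \le \alpha + v$ into $v' \le \beta\alpha + \beta\, v$ uses $\beta \ge 0$, and this is \emph{essential} rather than cosmetic, since the conclusion fails for $\beta < 0$ (for instance, with $a = 0$, $\alpha = 1$, $\beta = -1$, the function $u(t) = \tfrac12 e^{-t/2}$ satisfies the hypothesis on $[0,2]$ yet violates $u(t) \le e^{-t}$ near $t = \tfrac32$). So the lemma is to be read with $\beta \ge 0$, which is the case throughout this article. An alternative proof iterates the integral inequality into itself in Picard fashion, producing the partial sums of $\alpha\, e^{\beta(t-a)}$ with remainder bounded by $(\sup_{[a,b]} |u|)\,\beta^{n}(b-a)^{n}/n! \to 0$; it is equally elementary and relies on $\beta \ge 0$ at exactly the same step.
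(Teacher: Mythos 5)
The paper cites this lemma to \cite{Gronwall_1919, Walter_1998} without supplying a proof, so there is no paper argument to compare against; your integrating-factor proof is the standard one and is correct as written. You also correctly flag that the statement as printed (with $\alpha,\beta\in\R$ unconstrained) is slightly too liberal: your counterexample $a=0$, $\alpha=1$, $\beta=-1$, $u(t)=\tfrac12 e^{-t/2}$ does satisfy $u(t)\le e^{-t/2}=\alpha+\int_0^t\beta u$ on $[0,2]$ yet violates $u(t)\le e^{-t}$ once $t>2\ln 2$, so the hypothesis $\beta\ge 0$ is genuinely needed (it enters exactly where you multiply $u\le\alpha+v$ by $\beta$). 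This is harmless for the paper, since the only invocation is in the proof of Lemma~\ref{lem: continuous dependence} with $\beta=L$ a Lipschitz constant, but it is a fair observation and worth a one-word fix ($\beta\ge 0$) in the lemma statement.
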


With the lemma above, we are now ready to prove \rref{lem: continuous dependence}.

\begin{proof}[Proof of \rref{lem: continuous dependence}]
Suppose that $\psi_{k}(x_0, x, g, t)$ is satisfied at some initial state, that is, there is some $t_1, g_0 \in \R, y_0, y \in \R^n$ such that $\psi_{k}(y_0, y, g_0, t_1)$ holds. Let $\phi(x, t)$ denote the flow of the compact IVP and $\psi(g_0, t)$ denote the flow of $g$ along $g' = Kg$ with initial condition $\psi(g_0, t_0) = g_0$. By definition at time $t \in [t_1, T]$ the variable $x$ equals $\phi(y, t - t_1 + t_0)$. Define the following function for $(x_0, t) \in \eval{C} \times [t_0, T]$ recording the difference between the true solution and the approximant at time $t$ with initial condition $x_0$:

\[R_{k}(x_0, t) = \Phi_k(x_0,t) - \phi(x_0, t)\]
To establish the validity of the invariant, it suffices to show 
\[\norm{\Phi_k(y_0, t) - \phi(y, t - t_1 + t_0)} \leq \eps(\psi(g_0, t - t_1 + t_0), t)\]
for all $t \in [t_1, T]$ such that the domain constraint is maintained. This is because $g$ satisfies the ODE $g' = Kg$, thus $\psi(g_0, t - t_1 + t_0) = g_0e^{K(t - t_1)} \geq g_0 \geq 1$, hence $g \geq 1$ is always satisfied by the assumption of $K \in \Q^+$. The condition $t \geq t_0$ is also satisfied as the ODE $t' = 1$ is strictly increasing, therefore $t \geq t_1 \geq t_0$. Finally $C(y_0)$ remains true since $y_0$ does not change along the ODE. To handle the non-trivial inequality, notice that $t \geq t_1$, therefore:
\begin{align*}
    \Phi_k(y_0, t) - \phi(y, t - t_1 + t_0) &= R_{k}(y_0, t) + \phi(y_0, t) - \phi(y, t - t_1 + t_0)\\
    &= R_{k}(y_0, t) + \phi(\phi(y_0, t_1), t - t_1 + t_0) - \phi(y, t - t_1 + t_0)\\
    &= R_k(y_0, t) + \phi(y_0, t_1) + \int_{t_0}^{t - t_1 + t_0} f(\phi(\phi(y_0, t_1), s)) ds - y - \int_{t_0}^{t - t_1 + t_0} f(\phi(y, s)) ds
\end{align*}
Applying the triangle inequality gives:
\begin{equation}
\label{eqn: main ineq}
\norm{\Phi_k(y_0, t) - \phi(y, t - t_1 + t_0)} \leq \norm{R_k(y_0, t) + \phi(y_0, t_1) - y} + \int_{t_0}^{t - t_1 + t_0} \norm{f(\phi(\phi(y_0, t_1), s)) - f(\phi(y, s))} ds
\end{equation}
Now we crucially use the fact that $B(x)$ is both a domain constraint and assumed to contain the flow $\phi(x_0, t)$ for $(x_0, t) \in \eval{C} \times [t_0, T]$ to see that for $s \in [t_0, t - t_1 + t_0]$, we will always have $B(\phi(\phi(y_0, t_1), s))$ and $B(\phi(y, s))$ (i.e. $\phi(y_0, t_1 + s)$, $\phi(y, s)$ both belong to the bounded set $\eval{B}$). Letting $L$ denote the Lipschitz constant of $f(x)$ on $\eval{B}$ (recall that such a constant always exists since $f(x)$ is locally Lipschitz), we have:

\begin{equation}
    \label{eqn: Lipschitz bound}
    \int_{t_0}^{t - t_1 + t_0} \norm{f(\phi(\phi(y_0, t_1), s)) - f(\phi(y, s))} ds\\
    \leq L\int_{t_0}^{t - t_1 + t_0} \norm{\phi(\phi(y_0, t_1), s) - \phi(y, s)} ds
\end{equation}

We will now establish the following bound: 
\begin{equation}
    \label{eqn: gronwall bound}
    \norm{\phi(\phi(y_0, t_1), s) - \phi(y, s)} \leq \norm{\phi(y_0, t_1) - y}e^{L(s - t_0)}
\end{equation}
To do this, define $d \in C^1([t_0, t - t_1 + t_0], \R^n)$ by $d(s) = \phi(\phi(y_0, t_1), s) - \phi(y, s)$. Direct manipulations yield:
\begin{align*}
    \norm{d(s)} &= \norm{d(t_0) + \int_{t_0}^sd'(r) dr} \leq \norm{d(t_0)} + \int_{t_0}^s\norm{d'(r)}dr\\
    &= \norm{\phi(y_0, t_1) - y} + \int_{t_0}^s\norm{f(\phi(\phi(y_0, t_1), r)) - f(\phi(y, r))}dr\\
    &\leq \norm{\phi(y_0, t_1) - y} + L\int_{t_0}^s\norm{\phi(\phi(y_0, t_1), r) - \phi(y, r)}dr\\
    &= \norm{\phi(y_0, t_1) - y} + L\int_{t_0}^s\norm{d(r)}dr
\end{align*}
Note that in this derivation, we again utilized the assumption that $f(x)$ is Lipschitz on $\eval{B}$ with Lipschitz constant $L$ in the second to last inequality. Applying Gr\"onwall's inequality (\rref{lem: gronwall}) with $u(s) = \norm{d(s)}, \alpha = \norm{\phi(y_0, t_1) - y}, \beta = L$ then gives the desired bound (equation \ref{eqn: gronwall bound}).  Applying this to inequality \eqref{eqn: Lipschitz bound} results in:
\begin{align*}
    \int_{t_0}^{t - t_1 + t_0} \norm{f(\phi(\phi(y_0, t_1), s)) - f(\phi(y, s))} ds &\leq L\norm{\phi(y_0, t_1) - y}\int_{t_0}^{t - t_1 + t_0} e^{L(s - t_0)} ds\\
    &= \norm{\phi(y_0, t_1) - y}\left(e^{L(t - t_1)} - 1\right)
\end{align*}
Substituting this back into inequality \eqref{eqn: main ineq} gives:
\[\norm{\Phi_k(y_0, t) - \phi(y, t - t_1 + t_0)} \leq \norm{R_k(y_0, t) + \phi(y_0, t_1) - y} + \norm{\phi(y_0, t_1) - y}\left(e^{L(t - t_1)} - 1\right)\]
Recalling $R_k(y_0, t) = \Phi_k(y_0, t) - \phi(y_0, t)$ yields:
\begin{align*}
    \norm{\Phi_k(y_0, t) - \phi(y, t-t_1 + t_0)} &\leq \norm{R_k(y_0, t) + \Phi_k(y_0, t_1) - R_k(y_0, t_1) - y}\\
    &+ \norm{R_k(y_0, t_1) - \Phi_k(y_0, t_1) + y}\left(e^{L(t - t_1)} - 1 \right)
\end{align*}
Utilizing the triangle inequality and rearranging, we arrive at:
\[\norm{\Phi_k(y_0, t) - \phi(y, t - t_1 + t_0)} \leq \norm{\Phi_k(y_0, t_1) - y}e^{L(t - t_1)} + \norm{R_k(y_0, t) - R_k(y_0, t_1)} + \norm{R_k(y_0, t_1)}(e^{L(t - t_1)} - 1)\]
Recall that we may choose $k$ arbitrarily large and $\norm{R_k}_{\eval{C} \times [t_0, T]} \xrightarrow{k \to \infty} 0$, hence assume that $k$ is large enough to witness $\norm{R_k}_{\eval{C} \times [t_0, T]} \leq h$. Also by assumption on $g_0, y_0, y, t_1$, the following holds:
\[\norm{\Phi_k(y_0, t_1) - y} \leq \eps(g_0, t_1)\]
Rearranging yields:
\[\norm{\Phi_k(y_0, t) - \phi(y, t - t_1 + t_0)} \leq (\eps(g_0, t_1) + h)e^{L(t - t_1)} + \norm{R_k(y_0, t) - R_k(y_0, t_1)} - h\]
Expanding $\eps(g_0, t_1) = h(1 + t_1 - t_0)g_0 - h$ by construction and requiring $K \geq L$ yields:
\begin{align*}
    \norm{\Phi_k(y_0, t) - \phi(y, t - t_1 + t_0)} &\leq h(1 + t_1 - t_0)g_0e^{L(t - t_1)} + \norm{R_k(y_0, t) - R_k(y_0, t_1)} - h\\
    &\leq h(1 + t_1 - t + t - t_0)g_0e^{K(t - t_1)} + \norm{R_k(y_0, t) - R_k(y_0, t_1)} - h\\
    &= h(1 + t - t_0)g_0e^{K(t - t_1)} - h + \norm{R_k(y_0, t) - R_k(y_0, t_1)} - hg_0e^{K(t - t_1)}(t - t_1)\\
    &= \eps(\psi(g_0, t - t_1 + t_0), t) + \norm{R_k(y_0, t) - R_k(y_0, t_1)} - hg_0e^{K(t - t_1)}(t - t_1)\\
    &\leq \eps(\psi(g_0, t - t_1 + t_0), t) + \norm{R_k(y_0, t) - R_k(y_0, t_1)} - h(t - t_1)
\end{align*}
where the second equality uses the fact that $\psi(g_0, t)$ is the flow of $g' = Kg$ starting at $g(t_0) = g_0$, thus $\psi(g_0, t - t_1 + t_0) = g_0e^{K(t - t_1)}$. The final inequality follows from $t \geq t_1$ and $g_0 \geq 1$. Now define
\[M_k = \max_{y_0 \in \eval{C}, t \in [t_0, T]}\norm{R_k'(y_0, t)}\] 
which is well-defined as $R'_k \in C^0(\eval{C} \times [t_0, T], \R^n)$. Since $(\Phi_k')_k$ converges uniformly to $\phi'$ on $\eval{C} \times [t_0, T]$, $(M_k)_k$ will converge to $0$. Thus, choose $k$ large enough so that $M_k \leq h$, which allows us to deduce:
\begin{align*}
    \eps(\psi(g_0, t - t_1 + t_0), t) + \norm{R_k(y_0, t) - R_k(y_0, t_1)} - h(t - t_1) &\leq \eps(\psi(g_0, t - t_1 + t_0), t) + M_k(t - t_1) - h(t - t_1)\\
    &\leq \eps(\psi(g_0, t - t_1 + t_0), t)
\end{align*}
exactly as desired. Thus, for any $h > 0$, choosing $k$ large enough such that the following conditions are met will witness the validity of the differential invariant.

\begin{itemize}
    \item $\max_{y_0 \in \eval{C}, t \in [t_0, T]} \norm{R_k(y_0, t)} \leq h$
    \item $\max_{y_0 \in \eval{C}, t \in [t_0, T]} \norm{R'_k(y_0, t)}  \leq h$
\end{itemize}

Furthermore, since maximums of computable functions are computable by \rref{thm: computable extreme value theorem}, a satisfying index $k$ can be found computably. To see that $K$ can be effectively computed and chosen to be a rational, note that we only require $K \geq L$ to hold, so one can search through all positive rationals $K \in \Q^+$ and halt when the following $\folr$ formula is decided to be true
\[\forall x \forall y \left(B(x) \land B(y) \rightarrow \norm{f(x) - f(y)}^2 \leq K^2\norm{x - y}^2\right)\]
and since $\R$ has a computable theory by quantifier elimination \cite{Tarski_1948}, this search is computable. 
\end{proof}

The ``continuous dependence on initial conditions'' property proven by \rref{lem: continuous dependence} provides control on the errors induced by LDAs, and is crucial in establishing completeness for LDAs in \rref{thm: effective proofs without upper bounds and constraints}. The following example gives a sense of how this can be achieved. 
\begin{example}

Consider the simple compact IVP $x' = x$, $x(0) = 1$ over the interval $[0, 5]$ (i.e. $C(x) \equiv x = 1$), which has a solution of $x(t) = e^{t}$ (and therefore we know that $\max_{t \in [0, 5]}x(t) = e^5 < 300$). In this case, the Picard iterates will form a LDA on the compact time horizon $[0, 5]$. The Picard iterates of this ODE are: 
\begin{align*}
    &\phi_0(t) = x_0\\
    &\phi_{n + 1}(t) = x_0 + \int_0^t \phi_n(s) ds
\end{align*}
Listing out the first few terms
\[\phi_0(t) = 1, \phi_1(t) = 1 + t, \phi_2(t) = 1 + t + \frac{t^2}{2}, \phi_3(t) = 1 + t + \frac{t^2}{2} + \frac{t^3}{6}\]
Where $\phi_n(t)$ is just the $n$-th Taylor approximate, and $\phi'_n(t) = \phi_{n - 1}(t)$. By Taylor's theorem, the $n$-th remainder term $R_n$ will be bounded by
\[\abs{R_n} \leq \frac{e^55^{n + 1}}{(n + 1)!}\]
And similarly, $M_n$, the $n$-th error in the derivative, will be bounded by
\[\abs{M_n} = \abs{R_{n - 1}} \leq \frac{e^55^{n}}{n!}\]
Suppose one wants to generate a proof witnessing that some Picard iterate is within $10^{-3}$ of the true solution. Picking $h = 10^{-6}$, one has (note that the Lipschitz constant is 1 here and $t \in [0, 5]$)
\[\abs{\eps(\psi(1, t), t)} \leq 10^{-6}(1 + 5)e^{5} + 10^{-6} \approx 8 \times 10^{-4} < 10^{-3}\]
Thus, if $\eps(g, t)$ gives a valid differential invariant in the sense of \rref{lem: continuous dependence}, then the error of the approximant is necessarily bounded by $10^{-3}$. Per the proof of the \rref{lem: continuous dependence}, $n$ just needs to be picked large enough so that
\[\abs{R_{n - 1}}, \abs{R_{n}} \leq 10^{-6}\]
and the differential invariant corresponding to $\eps(g, t)$ is valid. By the bound given above, we see that for $n = 28$, $\abs{R_{27}}, \abs{R_{28}} \leq 2 \times 10^{-7}$. Now consider the invariant:
\[\psi_{28} \rightarrow [x' = f(x), t' = 1 \& t \leq 5 \land \norm{x}^2 \leq 300] \psi_{28}\]
Since Picard iterates always satisfy $\psi_{k}(1, x(0), 1, 0)$ as they have the correct values at $t = 0$, the invariant generated above is valid and witnesses an error bound of at most $10^{-3}$. Furthermore, this differential invariant can be independently verified by a proof checker for $\dL$ \cite{DBLP:conf/cade/FultonMQVP15, DBLP:conf/cpp/BohrerRVVP17}, taking advantage of the effective axiomatisation of differential invariants \cite{DBLP:journals/jacm/PlatzerT20} which reduces the verification of differential invariants down to questions of real arithmetic. When combined with formally-verified decision procedures for real arithmetic \cite{DBLP:conf/cpp/KosaianTP23}, this gives a complete verification of the validity of the invariant,
illustrating how \rref{lem: continuous dependence} can be used to produce proofs of error bounds of approximants. 
\end{example}

\begin{example}[Invariant for Moore-Greitzer]
    \label{ex: jet engine invariant}
    Recall that the dynamics of the Moore-Greitzer model is given by:
    \begin{align*}
        &u' = f_1(u, v) =  -v - 1.5u^2 - 0.5u^3 - 0.5\\
        &v' = f_2(u, v) = 3u - v
    \end{align*}
    with compact initial conditions 
    \[\Delta(u, v) \equiv 0.9 \leq u \land 0.9 \leq v \land u + v \leq 2\]
    Motivated by prior works which numerically computes reachability enclosures of this system via successive iterations \cite[~Table 1]{Rwth_Sankaranarayanan_Abraham_2014} over many time steps without corresponding syntactic proofs, we compute a \emph{provable approximant} to the flow over one such time step, corresponding to $T = 0.02$. The approximant $\Phi(u_0, v_0, t)$ for which we will prove its accuracy is given by (recall that $\Phi(u_0, v_0, t) = (\Phi_1(u_0, v_0, t), \Phi_2(u_0, v_0, t))$):
    \begin{align*}
        \Phi_1(u_0, v_0, t) =\hspace{0.5em}& u_0 + t c^1_u(u_0, v_0) + t^2c^2_u(u_0, v_0) + t^3 c^3_u(u_0, v_0)\\
        c^1_u(u_0, v_0) =\hspace{0.5em}& -\frac{u_0^3}{2}-\frac{3 u_0^2}{2}-v_0-0.5\\
        c^2_u(u_0, v_0) =\hspace{0.5em}& \frac{3 u_0^5}{8}+\frac{15 u_0^4}{8}+\frac{9 u_0^3}{4}+\frac{3 u_0^2 v_0}{4}+0.375 u_0^2+\frac{3 u_0 v_0}{2}-0.75
   u_0+\frac{v_0}{2}\\
        c^3_u(u_0, v_0) =\hspace{0.5em}& -\frac{5 u_0^7}{16}-\frac{35 u_0^6}{16}-\frac{39 u_0^5}{8}-\frac{7 u_0^4 v_0}{8}-3.8125 u_0^4-\frac{7 u_0^3
   v_0}{2}-0.75 u_0^3\\
   \hspace{0.5em}&-\frac{13 u_0^2 v_0}{4}+\frac{3 u_0^2}{4}-\frac{u_0 v_0^2}{2}-u_0 v_0+0.375
   u_0-\frac{v_0^2}{2}-\frac{v_0}{6}+0.125\\
        \Phi_2(u_0, v_0, t) =\hspace{0.5em}& v_0 + t c^1_v(u_0, v_0) + t^2c^2_v(u_0, v_0) + t^3 c^3_v(u_0, v_0)\\
        c^1_v(u_0, v_0) =\hspace{0.5em}& -\frac{u_0^3}{2}-\frac{3 u_0^2}{2}-v_0-0.5\\
        c^2_v(u_0, v_0) =\hspace{0.5em}& 3 u_0-v_0\\
        c^3_v(u_0, v_0) =\hspace{0.5em}& \frac{3 u_0^5}{8}+\frac{15 u_0^4}{8}+\frac{5 u_0^3}{2}+\frac{3 u_0^2 v_0}{4}+1.125 u_0^2+\frac{3 u_0 v_0}{2}-0.25
   u_0+\frac{5 v_0}{6}+0.25
    \end{align*}
    Such an approximant was computed by Picard iteration with appropriate rounding on the coefficients. It is important to note that LDA approximants are not limited to be Picard iterates, and the proofs of accuracy only depends on the true errors. To apply \rref{lem: continuous dependence}, the following constructs are needed:
    \begin{itemize}
        \item $h \in \Q^+$, bounding the error of the approximant.
        \item $B(u, v) \in \folr$ characterizing a bounded set that contains the flow.
        \item $K \in \Q^+$ larger than or equal to the Lipschitz constant of $f(u, v)$ on $\eval{B}$.  
    \end{itemize}
    These values can be computed numerically by any method of choice. For example by numerically sampling, we see that the choices $h = 4 \times 10^{-3}, K = 8$ and 
    \[B(u, v) \equiv 0.781 < u < 1.109 \land 0.891 < v < 1.199 \land u + v < 2.25\]
    satisfy such requirements, therefore the invariant 
    \[\psi(u_0, v_0, u, v, g, t) \rightarrow [u' = f_1(u, v), v' = f_2(u, v), g' = 8g, t' = 1 \& t \leq 0.02 \land B(u, v)]\psi(u_0, v_0, u, v, g, t)\]
    with
    \begin{align*}
        \psi(u_0, v_0, u, v, g, t) & \equiv t \geq 0 \land g \geq 1 \land \Delta(u_0, v_0) \land \norm{(u - \Phi_1(u_0, v_0, t), v - \Phi_2(u_0, v_0, t))}^2 \leq \eps(g, t)^2\\
        \eps(g, t) &\equiv 4 \times 10^{-3} ((1 + t)g - 1)
    \end{align*}
    is valid and provable by \irref{dinv}. Crucially, while the approximation and $u, K, B(u, v)$ were all obtained \emph{numerically}, the validity of the invariant is \emph{deductively proven} with a \emph{proof} in $\dL$ that can be independently verified by proof checkers such as \KeYmaeraX \cite{Fulton_Mitsch_Quesel_Volp_Platzer_2015}. Later examples build off of this differential invariant and eventually prove that the approximant $\Phi(u_0, v_0, t)$ has an error of at most $5 \times 10^{-3}$ on $\eval{\Delta} \times [0, 0.02]$.  
\end{example}

\begin{remark}
\label{rem: bounded derivatives is enough for convergence}
While \rref{lem: continuous dependence} above applies to all LDAs, it would be interesting to know if the conditions can be relaxed to allow for approximants that do not converge in derivative. The above result still holds when there is only a \emph{subsequence} of approximants that converge in derivative to $\phi'$. Hence, the result remains true if we just assume that the approximants have bounded first and second derivatives, as this allows us to construct a convergent subsequence using Arzelà–Ascoli  \cite{Walter_1998}. Even though one cannot generally compute this convergent subsequence directly, since differential invariants can be effectively decided by \rref{thm:diff inv completeness}, it suffices to perform an unbounded search across all approximants, halting whenever one of the desired invariants is decided to be valid. 
\end{remark}

Building on \rref{lem: continuous dependence}, the following theorem reduces the problem of proving convergence of LDAs to arithmetic questions involving the exponential function. 
\begin{theorem}[Derivable LDA]
\label{thm: derivable lda}
Let $(C(x), f(x), [t_0, T])$ be a compact IVP with $\Phi$ a LDA, $B(x)$ a $\folr$ formula, $c, K \in \Q^+$ rational constants. Assume that the following holds:

\begin{enumerate}
    \item The flow $\phi(x, t)$ of the compact IVP is well-defined on $\eval{C} \times [t_0, T]$. 
    \item $\eval{B} \subset \R^n$ is a bounded set containing $\phi(\eval{C}, [t_0, T])$. 
    \item $K$ is greater than or equal to the Lipschitz constant of $f(x)$ on $\eval{B}$.
    \item $c > 1$. 
\end{enumerate}

Then for all $M, \eps \in \Q^+$, for all sufficiently large $k \in \N$, the following proof rule is syntactically derivable in $\dL$, where $x, g, t, x_0$ are symbolic variables.

\begin{calculus}
    \cinferenceRule[LDA_b|{$\text{LDA}$}]{LDA error bounds with bounded domain constraint}
        {\linferenceRule[sequent]
          {\lsequent{}{g = c \land t = t_0 \rightarrow \dbox{g' = Kg, t' = 1 \& t \leq T}{g \leq M}}}
          {\lsequent[g]{}{C(x) \land x = x_0 \land t = t_0 \rightarrow \dbox{x' = f(x), t' = 1 \& t \leq T \land B(x)}{\norm{x - \Phi_k(x_0, t)}^2 \leq M^2\eps^2}}}
        }
        {}
\end{calculus}

For each $\eps \in \Q^+$, a corresponding $k$ can be computed uniformly in the compact IVP, $\Phi$, $c$ and $\eps$. 
\end{theorem}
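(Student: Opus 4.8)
The plan is to reduce the asserted proof rule to the differential invariant provided by \rref{lem: continuous dependence}, instantiating that lemma's free parameter $h$ so that the resulting error function is dominated by $M\eps$ all along the flow. Since $\eps, t_0, T$ are rational and $1 + T - t_0 > 0$, first fix $h \in \Q^+$ with $h(1 + T - t_0) \leq \eps$ (for instance $h = \eps/(2(1+T-t_0))$); note this depends only on $\eps, t_0, T$ and not on $M$. Feeding this $h$, together with the given rational $K$ (which by hypothesis~(3) bounds the Lipschitz constant of $f$ on $\eval{B}$), into \rref{lem: continuous dependence} produces, for all sufficiently large $k$, a valid differential invariant
\[\psi_k(x_0, x, g, t) \rightarrow \dbox{x' = f(x), g' = Kg, t' = 1 \& t \leq T \land B(x)}{\psi_k(x_0, x, g, t)},\]
where $\psi_k$ abbreviates $t \geq t_0 \land g \geq 1 \land C(x_0) \land \norm{x - \Phi_k(x_0,t)}^2 \leq \eps_h(g,t)^2$ with $\eps_h(g,t) := h(1+t-t_0)g - h$. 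By \rref{thm:diff inv completeness} this invariant is not merely valid but derivable via \irref{dinv}. Write $\alpha$ for the ODE system $x' = f(x), g' = Kg, t' = 1 \,\&\, t \leq T \land B(x)$ appearing here.

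To assemble the derivation of the rule, I start from the conclusion sequent and introduce a fresh ghost variable $g$ initialised to $c$ (sound by the differential ghost axiom, as $g$ does not occur in the postcondition $\norm{x-\Phi_k(x_0,t)}^2 \leq M^2\eps^2$), which reduces the goal to deriving, under $C(x) \land x = x_0 \land t = t_0 \land g = c$, the box property $\dbox{\alpha}{\norm{x-\Phi_k(x_0,t)}^2 \leq M^2\eps^2}$. For $k$ chosen as explained in the next paragraph, I would discharge this in four moves. \textbf{(i)} Establish $\psi_k$ at the initial state: $t = t_0$, $g = c > 1$ and $C(x_0)$ are immediate from the antecedent, while the last conjunct of $\psi_k$ becomes $\norm{\Phi_k(x_0,t_0) - x_0}^2 \leq (h(c-1))^2$ since $\eps_h(c,t_0) = h(c-1)$. \textbf{(ii)} Apply \irref{dinv} to the invariant above to obtain $\dbox{\alpha}{\psi_k}$. \textbf{(iii)} Import the rule's premise, lifted to the coupled system: $x$ does not influence $g$ or $t$, and the domain of $\alpha$ entails $t \leq T$, so the $(g,t)$-component of any solution of $\alpha$ from $g = c, t = t_0$ solves $g' = Kg, t' = 1 \,\&\, t \leq T$; hence the premise yields $\dbox{\alpha}{g \leq M}$, and combining with (ii) gives $\dbox{\alpha}{(\psi_k \land g \leq M)}$. \textbf{(iv)} Conclude by differential weakening followed by box monotonicity: it then suffices to verify the (quantifier-free) arithmetic validity $\psi_k \land g \leq M \land t \leq T \land B(x) \rightarrow \norm{x-\Phi_k(x_0,t)}^2 \leq M^2\eps^2$, which holds because $\psi_k$ supplies $t \geq t_0$, $g \geq 1$ and $\norm{x-\Phi_k(x_0,t)}^2 \leq \eps_h(g,t)^2$, while $t \leq T$ and $g \leq M$ force $0 \leq \eps_h(g,t) = h(1+t-t_0)g - h \leq h(1+T-t_0)M \leq M\eps$. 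The only remaining open leaf of this derivation is exactly the premise of the rule, so the rule is derived.

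I expect step~(i) to be the main obstacle. An arbitrary LDA need not interpolate the initial condition --- unlike the Picard-iterate constructions, for which $\Phi_k(x_0,t_0) = x_0$ --- so $\norm{\Phi_k(x_0,t_0) - x_0} = \norm{R_k(x_0,t_0)}$, where $R_k := \Phi_k - \phi$, need not vanish. This is precisely where hypothesis~(4), $c > 1$, is used: it makes $\eps_h(c,t_0) = h(c-1)$ a strictly positive rational, and since $\norm{R_k}_{\eval{C} \times [t_0,T]} \to 0$ by the LDA property, every sufficiently large $k$ also satisfies $\norm{R_k}_{\eval{C} \times [t_0,T]} \leq h(c-1)$, which in particular makes the last conjunct of $\psi_k$ hold at the initial state for every $x_0 \in \eval{C}$. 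So one simply enlarges the witness $k$ returned by \rref{lem: continuous dependence} until this extra bound also holds, which is done computably by searching upward and comparing the computable real $\norm{R_k}_{\eval{C}\times[t_0,T]}$ (computable by \rref{thm: computable extreme value theorem}, since $R_k = \Phi_k - \phi$ is computable) against the rational $h(c-1)$. Together with the already-computable choice of $h$ and the uniform witness from \rref{lem: continuous dependence}, this yields the claimed uniform computability of $k$ in the compact IVP, $\Phi$, $c$ and $\eps$; the remaining steps~(ii)--(iv) are routine sequent manipulation plus one real-arithmetic check.
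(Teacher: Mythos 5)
Your proposal is correct and follows essentially the same route as the paper's proof: you choose $h$ so that $h(1+T-t_0)\leq\eps$ (the paper uses $h=2^{-n}$), invoke \rref{lem: continuous dependence} to get the differential invariant $\psi_k$, enlarge $k$ so the initial error $\norm{R_k}$ also drops below $h(c-1)$ using hypothesis $c>1$, and then carry out the same four-stage sequent derivation (initial-state check, \irref{dinv} to cut in $\dbox{\alpha}{\psi_k}$, lift the premise through the coupled system via \irref{DGi} to obtain $\dbox{\alpha}{g\leq M}$, and close with \irref{dW}/\irref{qear}). The only cosmetic difference is that the paper fixes $h$ as a dyadic rational $2^{-n}$ whereas you take an explicit quotient, but the argument is otherwise identical.
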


\rref{thm: derivable lda} gives an effective way of reducing rigorous proofs for error bounds of LDAs in $\dL$ under the presence of some bounded domain $B(x)$ to the problem of proving upper bounds of the exponential function over a bounded interval. \rref{sec: Taylor upper bounds} shows that proofs of such upper bounds are always possible even if decidability of the exponential field is a famous open problem \cite{real_exp_field_decidability}. In contrast to the rational constants $t_0, T, c, K, M$, the variables $x, g, t, x_0$ in the proof rule are \emph{symbolic}.

\begin{proof}
The proof directly follows from \rref{lem: continuous dependence}. Pick $n \in \N$ large enough such that $2^{-n}(1 + T - t_0) \leq \eps$ is satisfied. Since $\Phi$ is a LDA, taking $k$ to be large enough such that $\norm{\Phi_k - \phi}_{\eval{C} \times [t_0, T]} \leq 2^{-n}(c - 1)$ and \rref{lem: continuous dependence} holds with $h = 2^{-n}$ gives the following:

\begin{enumerate}
    \item $\forall x_0 \in \eval{C} \norm{x_0 - \Phi_k(x_0, t_0)} \leq 2^{-n}(c - 1)$.
    \item The following differential invariant is valid for $h = 2^{-n}$ (thus provable in $\dL$ by \rref{thm:diff inv completeness}):
    \[\psi_{k}(x_0, x, g, t) \rightarrow \dbox{x' = f(x), g' = Kg, t' = 1 \& t \leq T \land B(x)}{\psi_{k}(x_0, x, g, t)}\]
\end{enumerate}

The desired proof in $\dL$ can now be constructed via the steps below by cutting in the differential invariant. First abbreviate
\[\alpha \equiv x' = f(x), g' = Kg, t' = 1 \& t \leq T \land B(x)\]

\begin{sequentdeduction}
    \linfer[implyr+DG+existsr]
    {\linfer[cut+implyl]
        {
            \linfer[qear]
                {\lclose}
            {\lsequent{C(x), x = x_0, g = c, t = t_0}{\psi_k(x_0, x, g, t)}}
            &
            \linfer[dinv]
                {\lclose}
            {\lsequent{\psi_{k}(x_0, x, g, t)}{\dbox{\alpha}{\psi_{k}(x_0, x, g, t)}}}
            &
            \linfer[]
                {}
            {\text{\textcircled{1}}}
        }
        {\lsequent{C(x), x = x_0, t = t_0, g = c}{\dbox{x' = f(x), g' = Kg, t' = 1 \& t \leq T \land B(x)}{\norm{x - \Phi_k(x_0, t)}^2 \leq M^2\eps^2}}}
    }
    {\lsequent{}{C(x) \land x = x_0 \land t = t_0 \rightarrow \dbox{x' = f(x), t' = 1 \& t \leq T \land B(x)}{\norm{x - \Phi_k(x_0, t)}^2 \leq M^2\eps^2}}}
\end{sequentdeduction}

The left premise closes by \irref{qear} from item (1), the second premise closes by \rref{lem: continuous dependence}, and the final remaining premise is:
\[\text{\textcircled{1}} \equiv \lsequent{C(x), x = x_0, t = t_0, g = c, \dbox{\alpha}{\psi_{k}(x_0, x, g, t)}}{\dbox{\alpha}{\norm{x - \Phi_k(x_0, t)}^2 \leq \eps^2M^2}}\]
Which can be handled with \irref{dW} and cutting in the bound $\dbox{\alpha}{g \leq M}$ with \irref{dC}. Crucially the application of \irref{DGi} to remove $x' = f(x)$ is sound since $x \notin K, M$.

\begin{sequentdeduction}
    \linfer[dC+dW]
    {\linfer[qear]
        {\lclose}
    {\lsequent{g \leq M, t \leq T, \psi_k(x_0, x, g, t)}{\norm{x - \Phi_k(x_0, t)}^2 \leq \eps^2M^2}}
    &
    \linfer[DGi]
        {\text{\textcircled{2}}}
    {\lsequent{g = c, t = t_0}{\dbox{\alpha}{g \leq M}}}
    }
    {\lsequent{x = x_0, t = t_0, g = c, \dbox{\alpha}{\psi_{k}(x_0, x, g, t)}}{\dbox{\alpha}{\norm{x - \Phi_k(x_0, t)}^2 \leq \eps^2M^2}}}
\end{sequentdeduction}

where the remaining premise on the right is
\[\text{\textcircled{2}} \equiv \lsequent{g = c, t = t_0}{\dbox{g' = Kg, t' = 1 \& t \leq T}{g \leq M}}\]
For the left premise, notice that the following is a valid formula of $\folr$, and therefore provable:
\[t \leq T \land g \leq M \land \psi_k(x, x_0, g, t) \rightarrow \norm{x - \Phi_k(x_0, t)}^2 \leq (2^{-n}(1 + T - t_0)M)^2\]
thus, the left premise closes by our choice of $n \in \N$. The proof of the desired formula has now been reduced to an upper bound on the exponential function (premise \textcircled{2}), completing the derivation. Since $k$ was only required to satisfy conditions (1), (2) and a satisfying witness for \rref{lem: continuous dependence} can be computed, such a $k$ can be computed as well, completing the proof of the theorem. 
\end{proof}

\begin{example}[Constrained exponential bound for Moore-Greitzer]
    \label{ex: jet engine mod exponential}
    \rref{thm: derivable lda} applies to the Moore-Greitzer jet engine model introduced in \rref{ex: jet engine} with its invariant established in \rref{ex: jet engine invariant}. We apply proof rule \irref{LDA_b} using
    \begin{itemize}
        \item $B(u, v) \equiv 0.781 < u < 1.109 \land 0.891 < v < 1.199 \land u + v < 2.25$
        \item $K = 8$
        \item $c = 1.1$
        \item $\eps = 4 \times 10^{-3} \times (1 + 0.02)$
        \item $M = 1.2$
    \end{itemize}
    Using these values, \irref{LDA_b} proves the following
    \begin{sequentdeduction}
        \linfer[LDA_b]
            {\lsequent{g = 1.1, t = 0}{\dbox{g' = 8g, t' = 1 \& t \leq 0.02}{g \leq 1.2}}}
        {\lsequent{\Delta(u_0, v_0),u = u_0,v = v_0,t = 0}{\dbox{(u', v') = f(u, v) \& t \leq 0.02 \land B(u, v)}{\norm{(u, v) - \Phi(u_0, v_0, t)}^2 < (5 \times 10^{-3})^2}}}
    \end{sequentdeduction}
    where the constant $5$ was chosen as 
    \[\eps M = 4 \times 1.2 \times 1.02 \times 10^{-3} < 5 \times 10^{-3}\]
    As the derivation shows, the proof rule \irref{LDA_b} reduced the problem of proving an error bound of $5 \times 10^{-3}$ to the problem of upper bounding exponentials on $[0, 0.02]$. Importantly, while all of the values were chosen numerically, the proof rule \irref{LDA_b} is derived. Therefore the validity of the formula is backed up by a corresponding \emph{syntactic proof}.
\end{example}

\rref{thm: derivable lda} still holds even if $(\Phi_k)_k$ does not converge in derivative, as long as it has bounded first and second time derivatives (implicitly requiring it to be twice differentiable), since \rref{lem: continuous dependence} still holds in this case per \rref{rem: bounded derivatives is enough for convergence}. 

\subsection{Provable Taylor Bounds on Exponentials}
\label{sec: Taylor upper bounds}

\rref{thm: derivable lda} reduced the proof of error bounds for LDAs to proving upper bounds for the exponential function on compact intervals. In this section, we show that $\dL$ is capable of proving arbitrarily accurate upper bounds on the exponential function via Taylor polynomials on the compact interval $[0, T]$.

\begin{proposition}[provable Taylor approximants]
    \label{thm: taylor upperbounds for exponential}
    Let $K, T \in \Q^+$ be rational constants. For all sufficiently large $n \in \N$, there is a syntactic term $\theta_n \in \Q[t]$ such that the following is a valid differential invariant
    \[g \leq \theta_n \rightarrow [g' = Kg, t' = 1 \& t \leq T]g \leq \theta_n\]
    Furthermore, $\theta_n \to e^{Kt}$ on $[-T, T]$ as $n \to \infty$ where $\theta_n$ is treated as a function in $t$. Finally, for all $n \in \N$ we have $\theta_n(0) = 1$ and $\theta_n$ can be computed uniformly in $K, T, n$.
\end{proposition}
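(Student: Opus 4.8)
The plan is to build $\theta_n$ as a Taylor polynomial of $e^{Kt}$ around $t=0$, but slightly inflated so that it becomes a genuine \emph{super}solution of the ODE $g' = Kg$ on $[0,T]$, which is exactly what makes the differential invariant $g \le \theta_n \rightarrow [g' = Kg, t' = 1 \,\&\, t \le T]\,g \le \theta_n$ valid. Concretely, let $s_n(t) = \sum_{j=0}^{n} \frac{(Kt)^j}{j!}$ be the degree-$n$ Taylor polynomial, and set $\theta_n(t) = s_n(t) + r_n$ for a small rational correction $r_n \in \Q^+$ to be chosen (alternatively one can bump the leading coefficient). Then $\theta_n(0) = 1 + r_n$; to also meet the requirement $\theta_n(0) = 1$ exactly I would instead define $\theta_n$ by truncating and rounding the Taylor coefficients \emph{upward} slightly so that the normalization $\theta_n(0)=1$ is preserved while $\theta_n'(t) - K\theta_n(t) \ge 0$ on $[0,T]$. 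Either way, the essential computation is: $\theta_n'(t) - K\theta_n(t)$ equals (up to the correction terms) the single monomial $-\frac{K^{n+1} t^n}{n!}$ coming from the missing Taylor tail, so on $t \in [0,T]$ this is $\ge -\frac{K^{n+1}T^n}{n!}$, which tends to $0$; adding a correction of that order restores nonnegativity.

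The key steps, in order, are: (1) fix the explicit rational polynomial $\theta_n \in \Q[t]$ with $\theta_n(0) = 1$, built from truncated-and-rounded Taylor data, making all coefficients computable uniformly in $K,T,n$; (2) show that for all sufficiently large $n$, $\theta_n'(t) \ge K\theta_n(t)$ for every $t \in [0,T]$ — this is the supersolution property and reduces to bounding the Taylor remainder monomial $\frac{K^{n+1}T^n}{n!} \to 0$ against the chosen upward rounding slack; (3) conclude that $g \le \theta_n \rightarrow [g' = Kg, t' = 1 \,\&\, t \le T]\,g \le \theta_n$ is valid, either by a direct Lyapunov/Gr\"onwall-style argument (if $g \le \theta_n$ initially and both satisfy $g' = Kg$, $\theta_n' \ge K\theta_n$, then $(\theta_n - g)' \ge K(\theta_n - g)$, so $\theta_n - g \ge 0$ is preserved) or, since we only need validity of the invariant formula, directly by \rref{thm:diff inv completeness}; (4) establish the convergence $\theta_n \to e^{Kt}$ uniformly on $[-T,T]$, which follows from uniform convergence of Taylor polynomials of the entire function $e^{Kt}$ on compact sets together with the correction terms vanishing; (5) note $\theta_n(0) = 1$ by construction and that the entire construction is a finite computation in $K,T,n$, giving the uniform computability claim.

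The main obstacle is step (2): one has to pick the rounding/correction scheme for the coefficients carefully so that simultaneously (a) $\theta_n \in \Q[t]$ (finite-precision rationals, not the exact irrational-free but unbounded-denominator Taylor coefficients — actually $\frac{K^j}{j!}$ is already rational when $K \in \Q$, so the real subtlety is only the size of denominators, which is harmless), (b) $\theta_n(0) = 1$ is exact, and (c) $\theta_n' - K\theta_n \ge 0$ on all of $[0,T]$, not just near $0$. The delicate point is that a naive upward perturbation of individual coefficients changes $\theta_n' - K\theta_n$ at \emph{every} order, so one must check that the net perturbation to $\theta_n' - K\theta_n$ stays nonnegative on $[0,T]$; the cleanest route is to leave the low-order coefficients exactly equal to $\frac{K^j}{j!}$ (so the normalization and the cancellation in $\theta_n' - K\theta_n$ are preserved up to order $n-1$), and only add a single small positive constant term's worth of slack distributed so as not to disturb $\theta_n(0)$ — e.g. write $\theta_n(t) = \sum_{j=0}^{n-1}\frac{(Kt)^j}{j!} + a_n t^n$ with $a_n \in \Q^+$ chosen so that $a_n n - K a_n T \ge \frac{K^{n+1}T^{n-1}}{n!}\cdot(\text{something})$ dominates the leftover tail monomial on $[0,T]$; for large $n$ such $a_n$ of size $\approx \frac{K^n}{n!}$ exists, is rational, is computable, and forces $\theta_n(0)=1$ automatically since the $t^n$ term vanishes at $0$. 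Verifying this inequality on the whole interval $[0,T]$ (rather than pointwise at finitely many points) is the one spot requiring genuine care, and it is exactly where the "sufficiently large $n$" hypothesis is used. After that, steps (3)–(5) are routine, with the invariant's validity handed to \rref{thm:diff inv completeness} to obtain the syntactic proof effectively.
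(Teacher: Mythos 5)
Your construction is essentially the paper's: $\theta_n(t) = \sum_{j=0}^{n-1}\frac{(Kt)^j}{j!} + a_n t^n$ with $a_n$ of order $\frac{K^n}{(n-1)!(n-KT)}$ is exactly the paper's $q_n(t) + \frac{Mt^n}{n!}$ with $M = \frac{K^{n+1}T}{n-KT}$ after simplification, and the supersolution check $\theta_n' \ge K\theta_n$ on $[0,T]$ factors out $t^{n-1}$ and reduces to the same one-line inequality used there. The preliminary detours (adding a constant $r_n$, rounding coefficients) are non-issues that you correctly abandon; the only small slip is phrasing \rref{thm:diff inv completeness} as an alternative way to obtain \emph{validity} of the invariant, when it only converts already-established validity into a syntactic proof --- but your Gr\"onwall-style argument supplies validity, so the gap is cosmetic.
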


\begin{proof}
    For $n \in \N$, let us denote $q_n(t)$ as the $n$-th Taylor approximant of $e^{Kt}$ i.e.
    \[q_n(t) = \sum_{i = 0}^n \frac{K^it^i}{i!}\]
    Let
    \[\theta_n(t) = q_n(t) + \frac{Mt^n}{n!}\qquad M = \frac{K^{n + 1}T}{n - KT}\]
    which is well-defined for all $n > KT$. By the Darboux inequality \cite[Corollary~3.2]{DBLP:journals/jacm/PlatzerT20}, the validity of the invariant follows from the validity of $(\theta_n(t))' \geq K\theta_n(t)$. 
    Computing $(\theta_n(t))'$ gives
    \[(\theta_n(t))' = Kq_{n - 1}(t) + \frac{Mt^{n - 1}}{(n - 1)!}\]
    So we have
    \begin{align*}
        (\theta_n(t))' - K\theta_n(t) &= \frac{Mt^{n - 1}}{(n - 1)!} - \frac{K^{n + 1}t^n}{n!} - \frac{KMt^n}{n!}\\
        &\geq \frac{t^{n - 1}}{n!}\left(nM - KMT - K^{n + 1}T\right) = \frac{t^{n - 1}}{n!}\left((n - KT)M - K^{n + 1}T\right) = 0
    \end{align*}
    Therefore the invariant is indeed valid for all $n > KT$. To witness the desired convergence, note 
    \[\frac{Mt^n}{n!} \xrightarrow{n \to \infty} 0\]
    and $q_n \xrightarrow{n \to \infty} e^{Kt}$ on $[-T, T]$ by Taylor's theorem. The proof is therefore complete.
\end{proof}

It now follows that $\dL$ is capable of proving arbitrarily accurate upper bounds on the exponential function on bounded intervals. 

\begin{corollary}[bounded exponentials]
    \label{cor: exponential bound}
    Let $c, K \in \Q^+$ be constants and $[t_0, T] \in \IQ$ be a rational interval. For all $M \in \Q^+$ that satisfy $ce^{K(T - t_0)} < M$, the following formula is provable in $\dL$:
    \[g = c \land t = t_0 \rightarrow \dbox{g' = Kg, t' = 1 \& t \leq T}{g \leq M}\]
\end{corollary}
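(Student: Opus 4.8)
The plan is to reduce the claim to \rref{thm: taylor upperbounds for exponential} by using a scaled, time-shifted Taylor polynomial as a differential invariant. Since $t_0 \le T$, set $\tau := T - t_0 \ge 0$ (the case $\tau = 0$ being immediate) and apply \rref{thm: taylor upperbounds for exponential} with horizon $\tau$ to obtain, for every sufficiently large $n$, a polynomial $\theta_n \in \Q[t]$ with $\theta_n(0) = 1$, with $(\theta_n)'(s) \ge K\,\theta_n(s)$ for all $s \in [0,\tau]$ (this is exactly the real-arithmetic inequality established in the proof of that proposition for the horizon in question), and with $\theta_n \to e^{Ks}$ uniformly on $[-\tau,\tau]$. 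Because $c\,e^{K(T-t_0)} < M$ and $\sup_{s\in[0,\tau]} c\,e^{Ks} = c\,e^{K\tau}$, uniform convergence lets me fix one such $n$ that additionally satisfies $c\,\theta_n(s) < M$ for all $s \in [0,\tau]$. Set $\eta := c\,\theta_n(t - t_0)$, which is a genuine $\dL$ term, namely a rational polynomial in $t$ since $c, t_0 \in \Q$.

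Next I run the sequent derivation; after moving the antecedent to the left the goal is $g = c,\ t = t_0 \vdash \dbox{g' = Kg,\ t' = 1\ \&\ t \le T}{g \le M}$. First, by \irref{dC} I strengthen the domain constraint with $t \ge t_0$: the formula $t \ge t_0 \rightarrow \dbox{g' = Kg,\ t' = 1\ \&\ t \le T}{t \ge t_0}$ is a valid differential invariant (its Lie derivative is $1 \ge 0$), hence provable by \irref{dinv} via \rref{thm:diff inv completeness}, and $t \ge t_0$ holds initially. With domain now $t \le T \land t \ge t_0$, a second \irref{dC} cuts in $g \le \eta$: it holds initially since $g = c$ and $\eta = c\,\theta_n(0) = c$; and $g \le \eta \rightarrow \dbox{g' = Kg,\ t' = 1\ \&\ t \le T \land t \ge t_0}{g \le \eta}$ is valid, because on this domain $s := t - t_0$ ranges over $[0,\tau]$ and the only way $g$ could cross $\eta$ is at a state with $g = \eta$, where the Lie derivatives compare as $(g)' = Kg = K c\,\theta_n(s) \le c\,(\theta_n)'(s) = (\eta)'$ by the inequality secured above, so $g \le \eta$ is maintained. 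This invariant is an $\folr$ formula, so \rref{thm:diff inv completeness} (\irref{dinv}) yields a proof of it.

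Finally, with domain $t \le T \land t \ge t_0 \land g \le \eta$, I apply \irref{dW}, reducing the goal to the arithmetical implication $t_0 \le t \le T \land g \le c\,\theta_n(t - t_0) \rightarrow g \le M$, which is valid since $c\,\theta_n(s) < M$ for all $s \in [0,\tau]$ by the choice of $n$; this closes by \irref{qear}. Effectivity is automatic: each $\theta_n$ is computable uniformly in $K, \tau, n$, and the two conditions pinning down $n$ (the threshold $n > K\tau$ from \rref{thm: taylor upperbounds for exponential} and $\sup_{s\in[0,\tau]} c\,\theta_n(s) < M$) are decidable, so $n$ and the whole proof can be produced algorithmically. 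The only non-bookkeeping step is this choice of $n$: one must keep $\theta_n$ a supersolution on $[0,\tau]$ while simultaneously forcing $c\,\theta_n$ below $M$ there, and both requirements are delivered by the supersolution property and the uniform convergence already furnished by \rref{thm: taylor upperbounds for exponential}; everything else is routine \irref{dC}/\irref{dW}/\irref{qear} manipulation.
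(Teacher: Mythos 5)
Your proof is correct, and the essential idea is the same as the paper's: invoke \rref{thm: taylor upperbounds for exponential} to obtain a polynomial supersolution $\theta_n$ of $g' = Kg$, use it as a differential invariant (certified via \irref{dinv}), and close with \irref{dW}/\irref{qear}. The difference is in the plumbing. The paper uses \irref{DG} twice: once to introduce a ghost $x$ with $x(t_0)=1$ so that the scaling by $c$ can be carried by the differential invariant $g = cx$, and once to introduce a ghost clock $s$ (with cuts $s = t-t_0$ and $s \ge 0$) so that $\theta$ can be evaluated at a variable that starts at $0$. You avoid both ghosts by building the scale and the shift directly into the syntactic term $\eta := c\,\theta_n(t-t_0)$, which is a legitimate $\Q[t]$-polynomial precisely because $c$ and $t_0$ are rational constants, and by cutting in $t \ge t_0$ as a domain constraint to guarantee the supersolution inequality is only invoked on $[0,\tau]$ (the same role the paper's $s\ge0$ cut plays). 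Your version is shorter and more elementary; the paper's ghost-variable version is a bit more modular and would extend more directly if $c$ or $t_0$ were treated symbolically rather than as rational constants, but for this corollary both work. One thing to make explicit: when you assert that $g \le \eta$ is a valid differential invariant, the reason \irref{dinv} applies is that $p := \eta - g$ satisfies the Darboux inequality $p' = c(\theta_n' - K\theta_n) + Kp \ge Kp$ on the cut-in domain (using the supersolution property) — your Lie-derivative-at-the-boundary phrasing is the intuition, but Darboux is the rigorous step, and it is also exactly what the paper's proof of \rref{thm: taylor upperbounds for exponential} appeals to.
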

\begin{proof}
    We first begin with standard reductions using \irref{DG} and \irref{dinv}, reducing the proof down to upper bounds on the standard exponential IVP $x' = Kx$ with initial condition $x(t_0) = 1$. 
    \begin{sequentdeduction}
        \linfer[implyr+DG+existsr]
        {\linfer[K]
            {
            \linfer[band+andr]
            {\linfer[dinv]
                {\lclose}
            {\lsequent{g = cx}{\dbox{g' = Kg, x' = Kx, t' = 1 \& t \leq T}{g = cx}}}
            &
            \linfer[DGi]
                {\linfer[]
                    {\text{\textcircled{1}}}
                {\lsequent{x = 1, t = t_0}{\dbox{x' = Kx, t' = 1 \& t \leq T}{x \leq \frac{M}{c}}}}
                }
            {\lsequent{x = 1, t = t_0}{\dbox{g' = Kg, x' = Kx, t' = 1 \& t \leq T}{x \leq \frac{M}{c}}}}    
            }
        {\lsequent{g = c, x = 1, t = t_0}{\dbox{g' = Kg, x' = Kx, t' = 1 \& t \leq T}{\left(x \leq \frac{M}{c} \land g = cx\right)}}}
            }
            {\lsequent{g = c, x = 1, t = t_0}{\dbox{g' = Kg, x' = Kx, t' = 1 \& t \leq T}{g \leq M}}}
        }
        {\lsequent{}{g = c \land t = t_0 \rightarrow \dbox{g' = Kg, t' = 1 \& t \leq T}{g \leq M}}}
    \end{sequentdeduction}
    Where the left premise closes as it is a valid differential invariant. \rref{thm: taylor upperbounds for exponential} now gives some $\theta(s) \in \Q[s]$ such that $c\norm{\theta}_{[0, T - t_0]} \leq M$, $\theta(0) = 1$, and the following differential invariant is valid:
    \[x \leq \theta(s) \rightarrow \dbox{x' = Kx, s' = 1 \& s \leq T - t_0}{x \leq \theta(s)}\]
    Note that this is only possible by our assumption of $ce^{K(T - t_0)} < M$. Premise \textcircled{1} can now be handled by cutting in this invariant on $\theta(s)$.

    \begin{sequentdeduction}
        \linfer[DG+existsr]{
            \linfer[dC+dinv]
                {
                    \linfer[dC]
                        {
                            \linfer[DGi]
                                {
                                    \linfer[cut+dinv+K]
                                    {
                                        \linfer[qear]
                                            {\lclose}
                                        {\lsequent{x = 1, s = 0, t= t_0}{x \leq \theta(s)}}
                                        &&
                                        \linfer[dC+dinv]
                                            {
                                                \linfer[dW]
                                                    {\linfer[qear]
                                                        {\lclose}
                                                    {\lsequent{s \geq 0, s \leq T - t_0}{c\theta(s) \leq M}}
                                                    }
                                                    {\lsequent{s = 0}{\dbox{x' = Kx, s' = 1 \& s \leq T - t_0 \land s \geq 0}{\left(x \leq \theta(s) \rightarrow x \leq \frac{M}{c}\right)}}}
                                            }
                                        {\lsequent{s = 0}{\dbox{x' = Kx, s' = 1 \& s \leq T - t_0}{\left(x \leq \theta(s) \rightarrow x \leq \frac{M}{c}\right)}}}
                                    }
                                {\lsequent{x = 1, s = 0, t = t_0}{\dbox{x' = Kx, s' = 1 \& s \leq T - t_0}{x \leq \frac{M}{c}}}}
                                }
                            {\lsequent{x = 1, s = 0, t = t_0}{\dbox{x' = Kx, s' = 1, t' = 1 \& s \leq T - t_0}{x \leq \frac{M}{c}}}}
                        }
                    {\lsequent{x = 1, s = 0, t = t_0}{\dbox{x' = Kx, s' = 1, t' = 1 \& t \leq T \land s = t - t_0}{x \leq \frac{M}{c}}}}
                }
            {\lsequent{x = 1, s = 0, t = t_0}{\dbox{x' = Kx, s' = 1, t' = 1 \& t \leq T}{x \leq \frac{M}{c}}}}
        }
        {\lsequent{x = 1, t = t_0}{\dbox{x' = Kx, t' = 1 \& t \leq T}{x \leq \frac{M}{c}}}}
    \end{sequentdeduction}
Where the left premise closes as $\theta(0) = 1$, and the right premise closes since $\theta$ was constructed to satisfy $c\norm{\theta}_{[0, T - t_0]} \leq M$, which is therefore provable by \irref{qear}. This completes the proof.
\end{proof}

Chaining up the results of \rref{thm: derivable lda} and \rref{thm: taylor upperbounds for exponential} give complete proofs for accuracy bounds of LDAs for compact IVPs. This has many important consequences regarding the proof theory of $\dL$ which are listed below. The first of which says that for any LDA, for any desired accuracy, one can derive a proof certifying this accuracy within $\dL$ assuming the presence of some domain constraint. 

\begin{theorem}[completeness for LDAs with domain constraints]
    \label{thm: effective proofs without upper bounds}
    Let $(f(x), C(x), [t_0, T])$ be a compact IVP, $\Phi$ a LDA and $B(x)$ a $\folr$ formula. Assume that the following holds:

\begin{enumerate}
    \item The flow $\phi(x, t)$ of the compact IVP is well defined on $\eval{C} \times [t_0, T]$. 
    \item $\eval{B} \subset \R^n$ is a bounded set containing $\phi(\eval{C}, [t_0, T])$. 
\end{enumerate}

Then for all $\eps \in \Q^+$, for all sufficiently large $k \in \N$, the following formula is provable in $\dL$.
\[C(x) \land x = x_0 \land t = t_0 \rightarrow \dbox{x' = f(x), t' = 1 \& t \leq T \land B(x)}{\norm{x - \Phi_k(x_0, t)}^2 \leq \eps^2}\]

For each $\eps \in \Q^+$ a corresponding $k$ can be computed uniformly from the compact IVP, $\Phi$ and $\eps$.  
\end{theorem}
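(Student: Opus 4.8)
The plan is to obtain the statement by chaining \rref{thm: derivable lda} with \rref{cor: exponential bound}. First I would fix $c = 2$ (any rational strictly greater than $1$ will do) and then compute a rational $K \in \Q^+$ that is at least the Lipschitz constant of $f$ on $\eval{B}$: since $f$ is a polynomial and $\eval{B}$ is bounded, such a constant exists, so searching through $\Q^+$ and halting once the $\folr$ sentence $\forall x\, \forall y\, (B(x) \land B(y) \rightarrow \norm{f(x) - f(y)}^2 \leq K^2 \norm{x - y}^2)$ is decided true terminates and is effective by quantifier elimination \cite{Tarski_1948}. Next I would compute a rational $M \in \Q^+$ with $c\,e^{K(T - t_0)} < M$; this is possible because a rational upper bound on $e^{K(T - t_0)}$ is computable (e.g.\ from a truncated Taylor series, or directly via \rref{thm: taylor upperbounds for exponential}), so one simply takes any rational strictly above $c$ times such a bound.

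With $c$, $K$, $M$ now fixed, set $\eps_0 := \eps / M \in \Q^+$ and apply \rref{thm: derivable lda} to the compact IVP, the LDA $\Phi$, the formula $B(x)$, and the constants $c, K, M, \eps_0$; hypotheses (1)--(4) of that theorem hold by our assumptions (1), (2) on the flow and on $\eval{B}$, by the choice of $K$, and by $c > 1$. This yields, for all sufficiently large $k$ (with one such $k$ computable uniformly from the compact IVP, $\Phi$, $c$, and $\eps_0$), a derivable instance of the proof rule \irref{LDA_b} whose conclusion is
\[
C(x) \land x = x_0 \land t = t_0 \rightarrow \dbox{x' = f(x), t' = 1 \& t \leq T \land B(x)}{\norm{x - \Phi_k(x_0, t)}^2 \leq M^2 \eps_0^2}
\]
and whose sole premise is $g = c \land t = t_0 \rightarrow \dbox{g' = Kg, t' = 1 \& t \leq T}{g \leq M}$. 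Since $M^2 \eps_0^2 = M^2 (\eps/M)^2 = \eps^2$, the conclusion is exactly the formula asserted by the theorem, and the premise is discharged by \rref{cor: exponential bound}, whose hypothesis $c\,e^{K(T - t_0)} < M$ was arranged above. Composing the two derivations gives the desired $\dL$ proof, valid for every sufficiently large $k$.

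For effectiveness, note that $c$ is a constant, $K$ and $M$ are produced by the terminating searches described above, $\eps_0$ is then an explicit rational, and \rref{thm: derivable lda} supplies the witness $k$ computably; the $\dL$ proof is assembled from the effective derivations underlying \rref{thm: derivable lda} and \rref{cor: exponential bound}, so both $k$ and the proof are computable uniformly from the compact IVP, $\Phi$, and $\eps$. I do not expect a genuine obstacle here, since the substantive work is already done in \rref{lem: continuous dependence}, \rref{thm: derivable lda}, and \rref{thm: taylor upperbounds for exponential}; the only point demanding mild care is the computation of $M$, where one must produce a rational that \emph{strictly} dominates $c\,e^{K(T - t_0)}$ rather than merely approximate it, which is immediate from any provable Taylor upper bound on the exponential.
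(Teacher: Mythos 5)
Your proposal is correct and follows essentially the same route as the paper, which simply states that the theorem ``follows directly via'' \rref{thm: derivable lda} and \rref{cor: exponential bound}; you have just fleshed out the (straightforward) bookkeeping of how to choose $c$, $K$, $M$, and $\eps_0=\eps/M$ so that the two results compose, and why each of these constants is effectively computable.
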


\begin{proof}
    Follows directly via \rref{thm: derivable lda} and \rref{cor: exponential bound}.
\end{proof}

\begin{example}[Constrained bound for Moore-Greitzer]
    \label{ex: jet engine constrained}
    Following \rref{thm: taylor upperbounds for exponential}, we prove
    \[\lsequent{g = 1.1, t = 0}{\dbox{g' = 8g, t' = 1 \& t \leq 0.02}{g \leq 1.2}}\]
    This derivation combined with \rref{ex: jet engine mod exponential} proves the validity of the following formula
    \[\linferenceRule[impll2]
        {\Delta(u_0, v_0) \land t = 0 \land u = u_0 \land v = v_0}
         {\dbox{(u', v') = f(u, v), t' = 1 \& t \leq 0.02 \land B(u, v)}{\norm{(u, v) - \Phi(u_0, v_0, t)}^2 < (0.005)^2}}
    \]
    which is a particular instance of \rref{thm: effective proofs without upper bounds}, syntactically proving an error bound of $0.005$ for the approximation $\Phi(u_0, v_0, t)$ under the assumption of the domain constraint $B(u, v)$. 
\end{example}

The following result syntactically proves the classical Stone–Weierstra{\ss} theorem in $\dL$ for flows of compact IVPs under the assumption of some bounded domain constraint. 

\begin{theorem}[Weierstra{\ss} approximation with domain constraints]
    \label{thm: stone weierstrass with domain constraint}
    Let $(f(x), C(x), [t_0, T])$ be a compact IVP and $B(x)$ a $\folr$ formula. Assume that the following holds:

\begin{enumerate}
    \item The flow $\phi(x, t)$ of the compact IVP is well defined on $\eval{C} \times [t_0, T]$. 
    \item $\eval{B} \subset \R^n$ is a bounded set containing $\phi(\eval{C}, [t_0, T])$.
\end{enumerate}

Then there is a computable sequence $(\theta_k)_k \in \Q^n[x_0, t]$ of approximants such that the following formulas are provable for all $k \in \N$:
\[C(x) \land x = x_0 \land t = t_0 \rightarrow \dbox{x' = f(x), t' = 1 \& t \leq T \land B(x)}{\norm{x - \theta_k(x_0, t)}^2 \leq 2^{-2k}}\]
\end{theorem}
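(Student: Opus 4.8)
The plan is to assemble the two results of this subsection that are already available: the construction of a uniformly computable \emph{polynomial} LDA (\rref{thm: abstract solvers always exist}) and completeness for LDAs under a bounded domain constraint (\rref{thm: effective proofs without upper bounds}). First, since the flow $\phi(x,t)$ is well-defined on $\eval{C} \times [t_0,T]$ by hypothesis~(1), \rref{thm: abstract solvers always exist} yields a LDA $\Phi : \N \to \folr$, uniformly computable in the compact IVP, such that every component of the function defined by $\Phi(m)$ is a rational polynomial in $x,t$; write $\Phi_m \in \Q^n[x,t]$ for these polynomial functions.

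Next, fix $k \in \N$ and take $\eps := 2^{-k} \in \Q^+$. Hypotheses~(1) and~(2) of the present theorem are precisely the hypotheses of \rref{thm: effective proofs without upper bounds}, so it supplies an index $m(k) \in \N$, computable uniformly from the compact IVP, $\Phi$ and $k$, for which
\[C(x) \land x = x_0 \land t = t_0 \rightarrow \dbox{x' = f(x), t' = 1 \& t \leq T \land B(x)}{\norm{x - \Phi_{m(k)}(x_0,t)}^2 \leq 2^{-2k}}\]
is provable in $\dL$. Setting $\theta_k := \Phi_{m(k)}$ (renaming the space variable to $x_0$) gives an element of $\Q^n[x_0,t]$, because $\Phi$ was the polynomial LDA of \rref{thm: abstract solvers always exist}. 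As $\Phi$ is uniformly computable and $k \mapsto m(k)$ is computable, the sequence $(\theta_k)_k$ is computable, and the displayed formula is provable for every $k$, which is exactly the claim.

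The step doing the real work is \rref{thm: effective proofs without upper bounds} (itself resting on \rref{thm: derivable lda}, \rref{lem: continuous dependence} and \rref{cor: exponential bound}); relative to that, all that remains is bookkeeping — verifying that the polynomial LDA of \rref{thm: abstract solvers always exist} and the index selection of \rref{thm: effective proofs without upper bounds} compose into one computable map $k \mapsto \theta_k$ whose outputs have the required syntactic shape (rational polynomials in $x_0, t$). This is routine, so I expect no genuine obstacle beyond the act of combining the earlier results.
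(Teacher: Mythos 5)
Your proposal is correct and follows exactly the paper's route: the paper's own proof is a one-liner citing precisely \rref{thm: abstract solvers always exist} and \rref{thm: effective proofs without upper bounds}, and you have simply spelled out the composition. The details you add — taking the polynomial LDA from \rref{thm: abstract solvers always exist}, instantiating \rref{thm: effective proofs without upper bounds} with $\eps = 2^{-k}$, and noting that the index selection $k \mapsto m(k)$ composes computably — are the intended bookkeeping.
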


\begin{proof}
    Follows directly from \rref{thm: effective proofs without upper bounds} and \rref{thm: abstract solvers always exist}.
\end{proof}

\rref{thm: effective proofs without upper bounds} and \rref{thm: stone weierstrass with domain constraint} proves that under the presence of some bounded domain constraint, flows of compact IVPs can be arbitrarily approximated by polynomials with provably accurate error bounds. As such, one can \emph{always} prove desired (open) properties of flows of compact IVPs by transferring to the case of polynomials, where the properties can then be proven by quantifier elimination with the proof rule \irref{qear}. The remaining sections handle the case where such domain constraints are not assumed to exist \emph{a priori}.

\section{Proving Domain Constraints and Bounded Completeness}
\label{sec: proving constraint}
A key assumption in the previous section is the existence of a $\folr$ formula $B(x)$ that bounds the evolution of the flow induced by the ODE, acting as a domain constraint. Such an assumption was a natural consequence of the fact that non-linear polynomial vector fields are only locally Lipschitz, and therefore some \emph{a priori} bound on the flow is required in order to computably utilize the continuity of the flow. In this section, we will first show how to eliminate such assumptions by proving them directly for compact IVPs and obtain a stronger version of \rref{thm: effective proofs without upper bounds}. Utilizing this, we prove that $\dL$'s axiomatization \cite{DBLP:journals/jar/Platzer17, DBLP:journals/jacm/PlatzerT20, DBLP:journals/fac/TanP21} enjoys completeness properties over compact time horizons without assuming bounded domain constraints. And finally, we discuss methods of handling domain constraints symbolically. Along the way, the syntactic provability of several axioms within $\dL$ that synthesize fundamental mathematical properties of ODEs is established, which are of independent interest.\\

\subsection{Error Bounds Without Domain Constraints}
Our main goal is the following strengthening of \rref{thm: effective proofs without upper bounds}, which \emph{does not} assume the existence of a bounded domain constraint. 

\begin{theorem}[Completeness for LDAs]
\label{thm: effective proofs without upper bounds and constraints}
Let $(f(x), C(x), [t_0, T])$ be a compact IVP with a well-defined flow $\phi : \eval{C} \times [t_0, T] \to \R^n$ and $\Phi$ a LDA. Then for all $\eps \in \Q^+$, for all sufficiently large $k \in \N$, the following formula is provable in $\dL$.
\[C(x) \land x = x_0 \land t = t_0 \rightarrow \dbox{x' = f(x), t' = 1 \& t \leq T}{\norm{x - \Phi_k(x_0, t)}^2 < \eps^2}\]

Furthermore, a satisfying $k$ can be computed uniformly from the compact IVP, $\Phi$ and $\eps$.
\end{theorem}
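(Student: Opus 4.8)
The plan is to bootstrap from \rref{thm: effective proofs without upper bounds} by \emph{manufacturing} a bounded domain constraint and then \emph{proving}, inside $\dL$, that the flow respects it. Since $\phi$ is well-defined on the compact set $\eval{C}\times[t_0,T]$, its range $\phi(\eval{C},[t_0,T])$ is compact, hence contained in an open ball $B(0,R_1+1)$ with $R_1 := \norm{\phi}_{\eval{C}\times[t_0,T]}$; for the given $\eps$, fix a rational $R > R_1 + \eps + 1$ and write $B(x)\equiv\norm{x}^2\le R^2$ for the bounded, $\folr$-definable closed ball of radius $R$, which contains the range of $\phi$. Such an $R$ can be found effectively, because the flow of a compact IVP is computable \cite{Graca_Zhong_Buescu_2009}, $\eval{C}$ is a computable compact set, and hence $R_1$ is computable by \rref{thm: computable extreme value theorem}.

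First I would apply \rref{thm: effective proofs without upper bounds} with the domain constraint $B(x)$ and target accuracy $\eps$: for all sufficiently large $k$, $\dL$ proves
\[C(x)\land x = x_0\land t = t_0 \rightarrow \dbox{x' = f(x), t' = 1 \& t \le T \land \norm{x}^2 \le R^2}{\norm{x - \Phi_k(x_0,t)}^2 < \eps^2}.\]
Since $\Phi_k\to\phi$ uniformly, for $k$ large the maximum $N_k := \norm{\Phi_k}_{\eval{C}\times[t_0,T]}$ (finite and computable by \rref{thm: computable extreme value theorem}) satisfies $N_k + \eps < R$, so one can computably pick a rational $R_0$ with $N_k + \eps < R_0 < R$. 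Differential-cutting in the trivial invariant $t \ge t_0$ and then applying monotonicity to the $\folr$-valid implication $C(x_0)\land t_0 \le t \le T \land \norm{x - \Phi_k(x_0,t)}^2 < \eps^2 \rightarrow \norm{x}^2 < R_0^2$ (which \irref{dW} and \irref{qear} discharge, using $\norm{\Phi_k(x_0,t)} \le N_k$), one obtains a $\dL$ proof of the auxiliary fact that, as long as the flow is confined to the closed $R$-ball, it stays in the strictly smaller closed $R_0$-ball:
\[C(x)\land x = x_0\land t = t_0 \rightarrow \dbox{x' = f(x), t' = 1 \& t \le T \land \norm{x}^2 \le R^2}{\norm{x}^2 \le R_0^2}.\]

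The main obstacle is to upgrade this to the \emph{unconstrained} statement $C(x)\land x = x_0\land t = t_0 \rightarrow \dbox{x' = f(x), t' = 1 \& t \le T}{\norm{x}^2 < R^2}$, i.e. that the flow never leaves the open $R$-ball while $t \le T$. Semantically this is the classical a-priori-bound argument: the flow starts at radius $\norm{x_0} \le R_1 < R_0 < R$; were it ever to leave the open $R$-ball at a time $\le T$ then, since $\phi$ does not blow up before $T$, by continuity and the intermediate value theorem it would first reach radius exactly $R$ at some $\tau \le T$, and the domain $t \le T \land \norm{x}^2 \le R^2$ is maintained on $[t_0,\tau]$, so the auxiliary fact forces radius $\le R_0 < R$ at $\tau$ --- a contradiction. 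This is the topological heart of the matter: it is precisely here, and not in the constrained case of \rref{thm: effective proofs without upper bounds}, that the open/closed distinction and the strict gap $R_0 < R$ are essential, and where a pure \irref{dC}/\irref{dW} manipulation would merely reduce ``open ball stays'' and ``closed ball stays'' to one another. I would discharge it by formalizing this ``first exit''/intermediate-value argument in $\dL$ (the same style of reasoning underlying its liveness axiomatization, or a small derived proof rule packaging exactly this staying principle), reducing the residual semialgebraic obligations via \rref{thm:diff inv completeness}. Once the unconstrained boundedness is proved, one final \irref{dC} adds $\norm{x}^2 \le R^2$ to the domain of the target formula $C(x)\land x = x_0\land t = t_0 \rightarrow \dbox{x' = f(x), t' = 1 \& t \le T}{\norm{x - \Phi_k(x_0,t)}^2 < \eps^2}$, reducing it to exactly the instance of \rref{thm: effective proofs without upper bounds} established in the first step.

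For effectiveness, all constants involved are computable uniformly from the compact IVP, $\Phi$, and $\eps$: $R_1$ and hence $R$ from the computable flow and \rref{thm: computable extreme value theorem}; $N_k$ and hence $R_0$ likewise from the definable $\Phi_k$; and the witness $k$ from the uniform computability in \rref{thm: effective proofs without upper bounds} (ultimately \rref{thm: derivable lda} and \rref{lem: continuous dependence}). Taking $k$ large enough to meet the finitely many constraints above yields the stated uniform computability.
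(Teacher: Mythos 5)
Your proposal is correct and takes essentially the same route as the paper: cut in a bounded domain constraint to fall back on \rref{thm: effective proofs without upper bounds}, and separately derive the unconstrained boundedness claim $C(x)\land t=t_0\rightarrow\dbox{x'=f(x),t'=1\& t\le T}{\norm{x}^2<R^2}$ via a first-exit/topological argument. The ``small derived proof rule packaging exactly this staying principle'' that you gesture at is precisely the rule \irref{enclosure} used in \rref{lem: finite completeness for open balls}, whose premise is then discharged not by \rref{thm:diff inv completeness} but by the same combination of a provable approximant bound (\rref{thm: effective proofs without upper bounds}) and a real-arithmetic bound on $\Phi_k$ that your auxiliary-fact step already produces.
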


\begin{remark}
    \rref{thm: effective proofs without upper bounds and constraints} can be understood as a ``completeness for convergence of LDAs'' result. In the sense that if a sequence of definable functions $(g_k)_k : \eval{C} \times [t_0, T] \to \R^n$ converges to the true flow $\phi : \eval{C} \times [t_0, T] \to \R^n$ in the $C^1$ norm, then their convergence in the $C^0$ norm can be syntactically proven in $\dL$. While this result assumes the existence of the flow for a sufficient duration, \rref{thm: finite completeness - existence} shows that $\dL$ is complete for such existence properties as well. \rref{cor: completeness for convergence} further strengthens this theorem by weakening the assumption to $C^0$ convergence instead of $C^1$ convergence. 
\end{remark}

To prove \rref{thm: effective proofs without upper bounds and constraints}, the following lemma is needed. 

\begin{lemma}[completeness for bounded flows]
    \label{lem: finite completeness for open balls}
    Let $(f(x), C(x), [t_0, T])$ be a compact IVP, $\Phi$ a LDA and $R \in \Q^+$. Assume that the following holds: 

    \begin{enumerate}
        \item The flow $\phi(x, t)$ of the compact IVP is well-defined on $\eval{C} \times [t_0, T]$.
        \item $\phi(\eval{C}, [t_0, T]) \subseteq B(0, R)$, where $B(0, R)$ is the open ball of radius $R$ in $\R^n$.
    \end{enumerate}
    Then the following formula is provable in $\dL$.
    \[C(x) \land t = t_0 \rightarrow \dbox{x' = f(x), t' =1 \& t \leq T}{\norm{x}^2 < R^2}\]
\end{lemma}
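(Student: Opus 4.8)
The plan is to reduce the lemma to \rref{thm: effective proofs without upper bounds} by using the open ball $B(0,R)$ \emph{itself} as the bounded domain constraint, and then to eliminate that domain constraint by squeezing the flow inside a slightly smaller \emph{closed} ball. Throughout, introduce a fresh variable $x_0$ with $x_0 = x$; since $x_0$ does not occur in $C(x)\land t=t_0\rightarrow\dbox{x'=f(x),t'=1\&t\le T}{\norm{x}^2<R^2}$, it may be added as a ghost and discharged at the end.

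\textbf{Step 1 (choosing radii).} Because $\eval{C}\times[t_0,T]$ is compact and $\phi$ continuous, $\rho:=\norm{\phi}_{\eval{C}\times[t_0,T]}<R$, and $\rho$ is computable by \rref{thm: computable extreme value theorem} (the flow of a compact IVP being computable). Fix rationals $\rho<\rho'<R$ and $\eps\in\Q^+$ with $r'':=\rho'+\eps<R$.

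\textbf{Step 2 (error bound under the domain constraint $\norm{x}^2<R^2$).} The open ball $\eval{B}=B(0,R)$ is bounded and contains $\phi(\eval{C},[t_0,T])$, so \rref{thm: effective proofs without upper bounds} applied with $B(x)\equiv\norm{x}^2<R^2$ yields, for all sufficiently large $k$, a $\dL$-derivation of
\[C(x)\land x=x_0\land t=t_0\rightarrow\dbox{x'=f(x),t'=1\&t\le T\land\norm{x}^2<R^2}{\norm{x-\Phi_k(x_0,t)}^2\le\eps^2}.\]
Since $(\Phi_k)_k\to\phi$ uniformly (\rref{def: LDA}) and $\norm{\phi}\le\rho<\rho'$ on $\eval{C}\times[t_0,T]$, for all large $k$ the $\folr$ sentence $C(x_0)\land t_0\le t\le T\rightarrow\norm{\Phi_k(x_0,t)}^2\le(\rho')^2$ is true, hence provable by \irref{qear}. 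After differentially cutting the trivial invariants $t\ge t_0$ and $C(x_0)$ into the domain with \irref{dC}, monotonicity of the box modality (via \irref{K} and \irref{dW}) together with the arithmetic implication $\norm{x-\Phi_k(x_0,t)}^2\le\eps^2\land\norm{\Phi_k(x_0,t)}^2\le(\rho')^2\rightarrow\norm{x}^2\le(r'')^2$ (provable by \irref{qear}) upgrades the postcondition to obtain
\[C(x)\land x=x_0\land t=t_0\rightarrow\dbox{x'=f(x),t'=1\&t\le T\land\norm{x}^2<R^2}{\norm{x}^2\le(r'')^2}.\]

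\textbf{Step 3 (removing the domain constraint).} This is the crux. The closed ball $B[0,r'']$ lies \emph{strictly inside} the open domain $B(0,R)$ (as $r''<R$), and $\norm{x}^2\le(r'')^2$ holds at $t=t_0$ because $\eval{C}\subseteq B(0,\rho)\subseteq B[0,r'']$. A continuity argument shows the constraint $\norm{x}^2<R^2$ is then vacuous: a trajectory could only leave $B(0,R)$ by first reaching $\norm{x}^2=R^2$, but on the sub-interval strictly before that instant the domain constraint holds, so Step 2's box forces $\norm{x}^2\le(r'')^2$ there, and closedness of $B[0,r'']$ propagates this to the limiting instant, contradicting $\norm{x}^2=R^2>(r'')^2$. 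Rendered syntactically in $\dL$ (carefully distinguishing the strict and nonstrict inequalities), this removes the constraint to give $C(x)\land x=x_0\land t=t_0\rightarrow\dbox{x'=f(x),t'=1\&t\le T}{\norm{x}^2\le(r'')^2}$; monotonicity then weakens $\norm{x}^2\le(r'')^2$ to $\norm{x}^2<R^2$ (since $r''<R$), and discharging $x_0$ yields the lemma. All the radii and the witness $k$ are produced computably.

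\textbf{Main obstacle.} The delicate step is Step 3: converting the semantic "the flow never touches the boundary of the open domain constraint'' argument into an actual $\dL$ derivation, since a bare differential cut (\irref{dC}) on $\norm{x}^2<R^2$ would demand re-proving precisely the unconstrained box one is trying to establish. The resolution exploits that the postcondition derived in Step 2 is a \emph{closed} set sitting \emph{strictly inside} the open domain constraint — this cushion is exactly what lets the removal go through without circularity.
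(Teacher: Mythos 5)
Your plan is structurally close to the paper's but reverses the order of the two key moves, and that reversal is exactly where the gap sits. The paper first applies the derived rule \irref{enclosure} (topological enclosure, from \cite{DBLP:journals/fac/TanP21}) to reduce the unconstrained box to one with the \emph{closed} domain constraint $\norm{x}^2\le R^2$, and only then invokes \rref{thm: effective proofs without upper bounds} with $B(x)\equiv\norm{x}^2\le R^2$. You instead invoke \rref{thm: effective proofs without upper bounds} first, with the \emph{open} ball as $B(x)$, and then try to strip off the open domain constraint. Your Step~2 is fine (the arithmetic upgrade via \irref{dC}, \irref{V}, \irref{dW}, and \irref{K} mirrors what the paper does). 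The problem is Step~3.

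You correctly identify Step~3 as the crux, and your semantic "cushion" argument is sound, but you do not actually produce a $\dL$ derivation, and the one tool that makes this step go through syntactically — \irref{enclosure}/\irref{ri} — is never mentioned. Worse, there is a strict-vs-nonstrict mismatch that prevents \irref{enclosure} from applying directly to what Step~2 gives you: \irref{enclosure} requires a box under the \emph{closed} constraint $e\ge0$ as its second premise, but Step~2 only yields a box under the \emph{open} constraint $\norm{x}^2<R^2$, which is a strictly weaker statement (boxes over more permissive domains are stronger). To repair this one would apply \irref{enclosure} not at radius $R$ but at an intermediate threshold $(r'')^2+\gamma<R^2$: from your Step~2 box, strengthen the domain constraint to $\norm{x}^2\le(r'')^2+\gamma$ (this direction of domain change is sound) and weaken the postcondition from $\norm{x}^2\le(r'')^2$ to $\norm{x}^2<(r'')^2+\gamma$ to obtain \irref{enclosure}'s second premise, conclude $\dbox{x'=f(x),t'=1\&t\le T}{\norm{x}^2<(r'')^2+\gamma}$, and finally weaken to $\norm{x}^2<R^2$. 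None of this is in your writeup. Since the entire point of the lemma is to cross from a constrained to an unconstrained box — which in $\dL$ cannot be done by cut or monotonicity alone, only by a real-induction-style argument — leaving that step as "rendered syntactically in $\dL$" is a genuine gap, not a detail. The paper sidesteps the strict/nonstrict subtlety entirely by applying \irref{enclosure} up front with $e=R^2-\norm{x}^2$, which is the cleaner decomposition.
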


\begin{proof}
    First note that rule \irref{enclosure} reduces the problem to:
    \begin{sequentdeduction}
        \linfer[enclosure]
        {\lsequent{C(x) \land t = t_0}{\dbox{x' = f(x), t' =1 \& t \leq T \land \norm{x}^2 \leq R^2}{\norm{x}^2 < R^2}}}
        {\lsequent{C(x) \land t = t_0}{\dbox{x' = f(x), t' =1 \& t \leq T}{\norm{x}^2 < R^2}}}
    \end{sequentdeduction}
    By \rref{thm: abstract solvers always exist}, we may compute some LDA $\Phi$ for this compact IVP. Now do an a priori unbounded search on pairs $(\eps, k) \in \Q^+ \times \N$ such that the following formulas are provable in $\dL$.
    \begin{align*}
        &C(x) \land x = x_0 \land t = t_0 \rightarrow \dbox{x' = f(x), t' = 1 \& t \leq T \land \norm{x}^2 \leq R^2}{\norm{x - \Phi_k(x_0, t)}^2 \leq \frac{\eps}}{2}\\
        &\forall x_0\forall t \left(t_0 \leq t \land t \leq T \land C(x_0) \rightarrow \norm{\Phi_k(x_0, t)}^2 < R^2 - \frac{\eps}{2}\right)
    \end{align*}
    In fact, such pairs necessarily exist and the search is bounded. To see this, note that $\phi(\eval{C}, [t_0, T])$ is a compact subset of the open set $B(0, R)$ by assumption, so there exists some $\eps \in \Q^+$ such that $B(\phi(\eval{C}, [t_0, T]), \eps) \subseteq B(0, R)$. By choosing this $\eps$ and $k \in \N$ sufficiently large, the first formula will be valid and therefore provable by \rref{thm: effective proofs without upper bounds}. The second formula is true and therefore provable by \irref{qear} for all sufficiently large $k \in \N$ since $\Phi$ is a LDA. Hence, we can computably find a pair $(\eps, k)$ with corresponding proofs to the formulas above. Now applying axiom \irref{V} (\rref{lem: vacuous axioms}) and \irref{dW} shows that $t_0 \leq t \leq T$ and $C(x_0)$ are always satisfied during the evolution of the ODE in the first formula. As such, applying these axioms on the formulas together with \irref{V} proves 
    \[\linferenceRule[impll2]
         {C(x) \land x = x_0 \land t = t_0}
         {\dbox{x' = f(x), t' = 1 \& t \leq T \land \norm{x}^2 \leq R^2}{\left(\norm{x - \Phi_k(x_0, t)}^2 \leq \frac{\eps}{2} \land \norm{\Phi_k(x_0, t)}^2 < R^2 - \frac{\eps}{2}\right)}}
    \]
    from which the remaining premise introduced by \irref{enclosure} follows.
\end{proof}

\rref{thm: effective proofs without upper bounds and constraints} can now be proven using \rref{lem: finite completeness for open balls} and \rref{thm: effective proofs without upper bounds}.

\begin{proof}[Proof of \rref{thm: effective proofs without upper bounds and constraints}]
First note that for any positive rational $R \in \Q^+$, cutting in the domain constraint $\norm{x}^2 < R^2$ with \irref{dC} (and applications of \irref{existsl} to introduce the variable $x_0$) reduces the proof obligation to proving the following premises:
\begin{align*}
    &\lsequent{}{C(x) \land x = x_0 \land t = t_0 \rightarrow \dbox{x' = f(x), t' = 1 \& t \leq T \land \norm{x}^2 < R^2}{\norm{x - \Phi_k(x_0, t)}^2 < \eps^2}}\\
    &\lsequent{}{C(x) \land t = t_0 \rightarrow \dbox{x' = f(x), t' = 1 \& t \leq T}{\norm{x}^2 < R^2}}
\end{align*}
Hence, we may do a bounded search on the pair $(R, k) \in \Q^+ \times \N$ such that the above are provable. This is a bounded search since $\phi(\eval{C}, [t_0, T])$ is a compact set, so for all sufficiently large $R$ we have $\phi(\eval{C}, [t_0, T]) \subseteq B(0, R)$, from which the provability of the two premises follows from \rref{thm: effective proofs without upper bounds} and \rref{lem: finite completeness for open balls} respectively. Furthermore, this is a computable search as \rref{thm: effective proofs without upper bounds} and \rref{lem: finite completeness for open balls} both hold computably. Once such a pair $(R, k)$ has been found with corresponding proofs, the premises are proven and therefore the proof is complete by applying axiom \irref{dC}.
\end{proof}

\rref{thm: effective proofs without upper bounds and constraints} proves that for all compact IVPs, for all corresponding LDAs, for all $\eps \in \Q^+$, one can find some corresponding proof in $\dL$ certifying the LDA to be at most $\eps$ away from the true solution. The following example applies this theorem to Moore-Greitzer's model of jet engines.

\begin{example}[Unconstrained bound for Moore-Greitzer]
    \label{ex: jet engine bound}
    \rref{ex: jet engine constrained} proved an error bound of $0.005$ under the assumption of a domain constraint $B(u, v)$ given by 
    \[B(u, v) \equiv 0.781 < u < 1.109 \land 0.891 < v < 1.199 \land u + v < 2.25\]
    Applying \rref{thm: effective proofs without upper bounds and constraints} and utilizing the constrained bound proven in \rref{ex: jet engine constrained} then proves an error bound of $0.005$ without assuming domain constraints.
    \[\Delta(u_0, v_0, t) \land t = 0\land u = u_0\land v = v_0 \rightarrow\dbox{(u', v') = f(u, v), t' = 1 \& t \leq 0.02}{\norm{(u, v) - \Phi(u_0, v_0, t)}^2 < 0.005^2}\]
    As such, we have syntactically proven the accuracy of a numerical approximation using deductive logic reasoning.
\end{example}

The following theorem proves a version of the Stone-Weierstra{\ss} theorem (\rref{thm: stone weierstrass with domain constraint}) without domain constraints.

\begin{theorem}[Stone-Weierstra{\ss}]
    \label{thm: stone weierstrass}
    Let $(f(x), C(x), [t_0, T])$ be a compact IVP with well-defined flow $\phi(x, t) : \eval{C} \times [t_0, T] \to \R^n$. Then there is a computable sequence $(\theta_k)_k \in \Q^n[x_0, t]$ of approximants such that the following formulas are provable for all $k \in \N$:
\[C(x) \land x = x_0 \land t = t_0 \rightarrow \dbox{x' = f(x), t' = 1 \& t \leq T}{\norm{x - \theta_k(x_0, t)}^2 \leq 2^{-2k}}\]
\end{theorem}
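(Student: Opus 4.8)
The plan is to obtain this as an immediate consequence of \rref{thm: effective proofs without upper bounds and constraints} together with \rref{thm: abstract solvers always exist}, exactly mirroring how \rref{thm: stone weierstrass with domain constraint} was derived from \rref{thm: effective proofs without upper bounds} and \rref{thm: abstract solvers always exist}. The point is that once the a priori bounded domain constraint has been eliminated, the Weierstra{\ss}-type statement is just the statement that \emph{some} explicitly polynomial family of approximants is provably accurate.

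First I would invoke \rref{thm: abstract solvers always exist}: since the hypothesis guarantees the flow $\phi : \eval{C} \times [t_0, T] \to \R^n$ is well-defined, that theorem furnishes a LDA $\Phi$ for the compact IVP which is uniformly computable in the compact IVP and whose every function $\Phi_j$ has rational-polynomial components, i.e. $\Phi_j \in \Q^n[x, t]$.

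Then, for each $k \in \N$, I would apply \rref{thm: effective proofs without upper bounds and constraints} to this LDA $\Phi$ with accuracy parameter $\eps := 2^{-k}$. That result produces a sufficiently large index $j(k) \in \N$, computable uniformly in the compact IVP, $\Phi$ and $k$, such that
\[C(x) \land x = x_0 \land t = t_0 \rightarrow \dbox{x' = f(x), t' = 1 \& t \leq T}{\norm{x - \Phi_{j(k)}(x_0, t)}^2 < 2^{-2k}}\]
is provable in $\dL$. Defining $\theta_k := \Phi_{j(k)}$ then yields the claimed provable formulas after weakening the strict postcondition $< 2^{-2k}$ to $\leq 2^{-2k}$, which is a one-line monotonicity step closed by \irref{qear}. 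Because $k \mapsto j(k)$ is computable and $\Phi$ is a computable map into $\Q^n[x, t]$, the assignment $k \mapsto \theta_k$ is computable and $(\theta_k)_k$ is a computable sequence in $\Q^n[x_0, t]$, as required.

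There is no substantive obstacle here: the real work has already been carried out in the two cited theorems, in particular the removal of the a priori bounded domain constraint in \rref{thm: effective proofs without upper bounds and constraints}, which itself rests on \rref{lem: finite completeness for open balls}. The only points needing a moment's care are bookkeeping ones: that \rref{thm: effective proofs without upper bounds and constraints} delivers a strict inequality, so one picks $\eps = 2^{-k}$ and uses $r < 2^{-2k} \Rightarrow r \leq 2^{-2k}$; and that the uniform-computability clauses of both prior theorems must be chained together to conclude that the resulting sequence $(\theta_k)_k$ is genuinely computable rather than merely existent.
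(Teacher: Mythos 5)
Your argument is exactly the paper's own proof: obtain a polynomial LDA from \rref{thm: abstract solvers always exist}, feed it to \rref{thm: effective proofs without upper bounds and constraints} with $\eps = 2^{-k}$, and set $\theta_k := \Phi_{j(k)}$. Your additional bookkeeping remarks (weakening $<$ to $\leq$ via \irref{qear}, and chaining the two uniform-computability clauses to get computability of $(\theta_k)_k$) are correct and harmless elaborations of what the paper leaves implicit.
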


\begin{proof}
    Follows directly by \rref{thm: abstract solvers always exist} and \rref{thm: effective proofs without upper bounds and constraints}. 
\end{proof}

\rref{thm: effective proofs without upper bounds and constraints} can also be viewed as a ``completeness for convergence'' result that requires $C^1$ convergence and proves $C^0$ convergence. By utilizing \rref{thm: abstract solvers always exist} to provably compute a correct LDA, it is possible to strengthen \rref{thm: effective proofs without upper bounds and constraints} and only require $C^0$ convergence. 

\begin{corollary}[Completeness for convergence]
    \label{cor: completeness for convergence}
     Let $(f(x), C(x), [t_0, T])$ be a compact IVP with well-defined flow $\phi : \eval{C} \times [t_0, T] \to \R^n$. Further suppose that $(f_k)_k$ is a sequence of $\folr$ definable functions with $f_k : \eval{C} \times [t_0, T] \to \R^n$. Then $\dL$ is complete for convergence:
     \[\entails (f_k)_k \xrightarrow{n \to \infty} \phi \qquad \implies \qquad \infers(f_k)_k \xrightarrow{n \to \infty} \phi\]
     I.e. if $(f_k)_k$ converges to $\phi$ in $C^0(\eval{C} \times [t_0, T], \R^n)$, then for every $\eps \in \Q^+$, for all sufficiently large $k \in \N$, the following formula is provable
     \[C(x) \land x = x_0 \land t = t_0 \rightarrow \dbox{x' = f(x), t' = 1 \& t \leq T}{\norm{f_k(x_0, t) - x}^2 < \eps^2}\]
     Furthermore, a satisfying $k$ can be computed uniformly from the compact IVP, $\eps$ and $(f_k)_k$. 
\end{corollary}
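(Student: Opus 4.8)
The plan is to reduce to \rref{thm: effective proofs without upper bounds and constraints} by slipping a computed polynomial LDA in between the true flow $\phi$ and the given sequence $(f_k)_k$, and then to bridge the remaining gap with a purely arithmetic (hence \irref{qear}-provable) triangle inequality. The point is that \rref{thm: effective proofs without upper bounds and constraints} cannot be applied to $(f_k)_k$ directly, since being an LDA requires $C^1$ convergence whereas $(f_k)_k$ is only assumed to converge in $C^0$; the missing derivative information will instead be supplied by an auxiliary LDA constructed from scratch out of the compact IVP itself.

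First I would invoke \rref{thm: abstract solvers always exist} to computably produce, from the compact IVP alone, an LDA $(\theta_m)_m \subseteq \Q^n[x_0,t]$ of rational polynomials. Its construction comes with an explicit rate of uniform convergence, $\norm{\theta_m - \phi}_{\eval{C}\times[t_0,T]} < n\,2^{-m}$ (and in any case $\norm{\theta_m - \phi}_{\eval{C}\times[t_0,T]}$ is a computable real by \rref{thm: computable extreme value theorem}, being the maximum of a computable function on a computable compact set), so one can explicitly compute a threshold $m_1$ with $\norm{\theta_m - \phi}_{\eval{C}\times[t_0,T]} < \eps/4$ for all $m \ge m_1$. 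Next, applying \rref{thm: effective proofs without upper bounds and constraints} to this LDA with target accuracy $\eps/4$ yields a computable index $m \ge m_1$ for which
\[
C(x) \land x = x_0 \land t = t_0 \rightarrow \dbox{x' = f(x), t' = 1 \& t \leq T}{\norm{x - \theta_m(x_0,t)}^2 < (\eps/4)^2}
\]
is provable in $\dL$. With $m$ fixed, the triangle inequality $\norm{\theta_m - f_k}_{\eval{C}\times[t_0,T]} \le \norm{\theta_m - \phi}_{\eval{C}\times[t_0,T]} + \norm{\phi - f_k}_{\eval{C}\times[t_0,T]}$ and the assumed uniform convergence $\norm{\phi - f_k}_{\eval{C}\times[t_0,T]} \to 0$ show that $\norm{\theta_m - f_k}_{\eval{C}\times[t_0,T]} < \eps/2$ for all sufficiently large $k$. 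Since $\theta_m$ and $f_k$ are definable, the statement
\[
\forall x_0\, \forall t\, \big(t_0 \le t \land t \le T \land C(x_0) \rightarrow \norm{\theta_m(x_0,t) - f_k(x_0,t)}^2 < (\eps/2)^2\big)
\]
is a closed $\folr$ sentence (written out via the representations as in \rref{rem: errors for definable functions expressible}) that is true for all large $k$, so a terminating search produces such a $k$ together with a proof of it by \irref{qear}.

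Finally I would assemble the two facts. Along the evolution $\alpha \equiv x' = f(x), t' = 1 \,\&\, t \le T$ started from $C(x)\land x = x_0 \land t = t_0$, the conditions $C(x_0)$ (vacuous in $\alpha$, so preserved by \irref{V}), $t \ge t_0$ (preserved since $t' = 1$) and $t \le T$ (by \irref{dW}) all hold throughout, so the $\folr$ sentence above yields $\dbox{\alpha}{\norm{\theta_m(x_0,t) - f_k(x_0,t)}^2 < (\eps/2)^2}$; conjoining this with the earlier box formula via \irref{K} and applying the pointwise real-arithmetic implication $\norm{a}^2 < (\eps/4)^2 \land \norm{b}^2 < (\eps/2)^2 \rightarrow \norm{a+b}^2 < \eps^2$ (with $a = x - \theta_m(x_0,t)$ and $b = \theta_m(x_0,t) - f_k(x_0,t)$, using $(3\eps/4)^2 < \eps^2$) gives the desired
\[
C(x) \land x = x_0 \land t = t_0 \rightarrow \dbox{x' = f(x), t' = 1 \& t \leq T}{\norm{f_k(x_0,t) - x}^2 < \eps^2}.
\]
The witness $k$ is obtained uniformly from the compact IVP (through the two cited theorems), from $\eps$, and from $(f_k)_k$ (through the search), as claimed. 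The only genuinely delicate point is the bookkeeping of effectivity and of which index carries which bound — in particular ensuring that the LDA index $m$ returned by \rref{thm: effective proofs without upper bounds and constraints} can be forced above the threshold $m_1$ that controls $\norm{\theta_m - \phi}$ — rather than any substantial new mathematics, since the hard work already lives in \rref{thm: abstract solvers always exist} and \rref{thm: effective proofs without upper bounds and constraints}.
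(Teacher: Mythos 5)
Your proposal is correct and follows essentially the same route as the paper's proof: compute a polynomial LDA via \rref{thm: abstract solvers always exist}, obtain a provable error bound for it via \rref{thm: effective proofs without upper bounds and constraints}, bound $\norm{\theta_m - f_k}$ by \irref{qear} using the assumed $C^0$ convergence of $(f_k)_k$, and combine via the triangle inequality using axiom \irref{K}. The only (cosmetic) differences are that you decouple the LDA index $m$ from the approximant index $k$ where the paper reuses a single index, and you split $\eps$ as $\eps/4 + \eps/2$ where the paper uses $\eps/3 + 2\eps/3$.
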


\begin{proof}
    Let $\Phi$ be some LDA of the compact IVP computed by \rref{thm: abstract solvers always exist}, $\eps \in \Q^+$ be the desired accuracy. Let $k$ be large enough such that \rref{thm: effective proofs without upper bounds and constraints} holds with an accuracy of $\frac{\eps}{3}$ and $\norm{f_{k} - \phi}_{\eval{C} \times [t_0, T]} < \frac{\eps}{3}$ is satisfied. It suffices to show that the formula 
    \[C(x) \land x = x_0 \land t = t_0 \rightarrow \dbox{x' = f(x), t' = 1 \& t \leq T}{\norm{f_k(x_0, t) - x}^2 < \eps^2}\]
    is provable. Indeed, \rref{thm: effective proofs without upper bounds and constraints} and the choice of $k$ imply that the following formula is provable
    \[C(x) \land x = x_0 \land t = t_0 \rightarrow \dbox{x' = f(x), t' = 1 \& t \leq T}{\norm{\Phi_k(x_0, t) - x}^2 < \left(\frac{\eps}{3}\right)^2}\]
    By construction and \irref{qear}, it is also provable that $\norm{\Phi_k - f_k}_{\eval{C} \times [t_0, T]} < \frac{2\eps}{3}$. Hence an application of axiom \irref{K} implies that the following is provable, completing the proof.
    \[C(x) \land x = x_0 \land t = t_0 \rightarrow \dbox{x' = f(x), t' = 1 \& t \leq T}{\norm{f_k(x_0, t) - x}^2 < \left(\frac{\eps}{3} + \frac{2\eps}{3}\right)^2}\]
\end{proof}

\subsection{Symbolic Domain Constraints and Completeness on Compact Time Horizons}
\label{sec: symbolic axioms}
This section establishes completeness properties of $\dL$ over compact time horizons for compact IVPs. The main proof strategy is to utilize our results in previous sections which show that $\dL$ is complete for LDAs of compact IVPs, thereby reducing properties of such compact IVPs to decidable sentences in real arithmetic. This section also explores to what extent such results can be applied to IVPs with symbolic initial conditions that are not constrained to compact sets. The main technical results can be encapsulated in the following theorem, which asserts the provability of various axioms and proof rules in $\dL$.

\begin{theorem}
    \label{thm:axioms for step liveness}
    The following axioms and rules are syntactically derivable in $\dL$, thus sound. Where $M, R > 0$ are symbolic variables and $B(x)$ is a $\folr$ formula characterizing a bounded set.

    \begin{calculus}

        \cinferenceRule[dualRight|StepDual${}_{\rightarrow}$]
        {Duality with bounded time}
        {
        t \leq \tau \land \dbox{x' = f(x), t' = 1 \& t \leq \tau}{B(x)} \rightarrow \ddiamond{x' = f(x), t' = 1\& B(x)}{t = \tau}
        }{}

        \cinferenceRule[dualLeft|StepDual${}_{\leftarrow}$]
        {Duality with bounded time}
        {
        \ddiamond{x' = f(x), t' = 1\& Q}{t \geq \tau} \rightarrow \dbox{x' = f(x), t' = 1 \&t \leq \tau}{Q}
        }{}

        \cinferenceRule[stepEx|StepEx]
        {Step existence from Picard Lindelof}
        {
         \linferenceRule[impll2]
         {\forall y \left(y \in B[x_0, R] \rightarrow \norm{f(y)}^2 \leq M^2\right)}
         {\left(x = x_0 \land t = t_0 \rightarrow \ddiamond{x' = f(x), t' = 1 \& x \in B[x_0, R]}{t \geq t_0 + \frac{R}{M}}\right)}
        }{}
        
        \cinferenceRule[stepExt|StepExt]{extending solutions}
       {
       \linferenceRule[tableaux]
       {&\lsequent{t = t_0, P_1} {\Gamma_2}\\
       &\lsequent{\Gamma_1} {[x' = f(x), t' = 1 \& t \leq t_0]P_1\newline} \\
       &\lsequent{\Gamma_2} {[x' = f(x), t' = 1 \& t \leq t_0 + t_1]P_2\newline} 
       }
       {\lsequent{t \leq t_0, \Gamma_1} {[x' = f(x), t' = 1 \& t \leq t_0 + t_1]((t \leq t_0 \rightarrow P_1) \land (t > t_0 \rightarrow P_2))}}
       }{}
    \end{calculus}
\end{theorem}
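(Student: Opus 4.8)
I would derive the four items in the order listed, each from dL's base axiomatization together with the ODE-liveness extension and the effective completeness for differential invariants (\rref{thm:diff inv completeness}), with the later items reusing the earlier ones. The organizing observation is that $t$ is a clock ($t'=1$), hence strictly monotone along every solution, so that ``the solution stays in a region $Q$ as long as $t\le\tau$'' (a box fact) and ``the solution reaches $t=\tau$ while staying in $Q$'' (a diamond fact) are almost interchangeable; the only gap is whether the solution lives long enough, which is precisely why a \emph{boundedness} hypothesis is imposed on $B(x)$ in \irref{dualRight} (and inherited by \irref{stepEx}). For \irref{dualLeft} no boundedness is needed: a diamond reaching $t\ge\tau$ already supplies existence. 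I would rewrite the goal via the $\langle\cdot\rangle/[\cdot]$ duality axiom as the impossibility of simultaneously having a solution that stays in $Q$ until $t\ge\tau$ and a (hence, by uniqueness, identical) solution that stays in $t\le\tau$ but leaves $Q$; monotonicity of $t$ (a one-line differential invariant provable by \irref{dinv}) pins the second solution's certified prefix inside $\{t\le\tau\}$, which the first solution certifies to lie in $Q$ — a contradiction, discharged with \irref{dC}, \irref{dW}, and the ODE refinement axioms.

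For \irref{dualRight} the box hypothesis asserts that $B(x)$ holds along the \emph{maximal} solution for all times with $t\le\tau$. Since $\eval{B}$ is bounded and $f$ is a polynomial vector field, $f$ is bounded on $\eval{B}$, so while the solution is inside $\eval{B}$ it has bounded time-derivative and cannot blow up in finite time; consequently the maximal solution genuinely reaches $t=\tau$, staying in $\eval{B}$, which is exactly the asserted diamond. In dL I would make the no-blow-up step precise by choosing a rational $R_B$ with $\eval{B}\subseteq B(0,R_B)$ (available since $\eval{B}$ is definable and bounded), so that ``stays in $\eval{B}$'' yields the explicit bound $\norm{x}^2<R_B^2$, and then invoke dL's bounded-differential-ghost/existence reasoning to conclude the run exists up to $t=\tau$. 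This no-blow-up argument is the delicate point of the duality pair, and it is where boundedness of $B(x)$ is genuinely load-bearing (for $B\equiv\top$ the conclusion is simply false in the presence of finite-time blow-up).

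Given \irref{dualRight}, the Picard--Lindel\"of step \irref{stepEx} is short. I would apply \irref{dualRight} with $B(x):=(x\in B[x_0,R])$, which is bounded as required, and $\tau$ the time satisfying $M(\tau-t_0)=R$; the side conditions $t=t_0$ implies $t\le\tau$, and $t=\tau$ implies $t\ge t_0+\tfrac{R}{M}$, are immediate by \irref{qear}, and diamond monotonicity converts $t=\tau$ to $t\ge\tau$. It remains to supply the box premise $\dbox{x'=f(x),t'=1\,\&\,t\le\tau}{x\in B[x_0,R]}$. I would obtain it by \irref{dC} with the auxiliary closed invariant $\norm{x-x_0}^2\le M^2(t-t_0)^2\land t\ge t_0$: this is a genuine differential invariant, valid because the antecedent hypothesis $\mforall y\,(y\in B[x_0,R]\to\norm{f(y)}^2\le M^2)$ forces $2(x-x_0)\cdot f(x)\le 2\norm{x-x_0}\,\norm{f(x)}\le 2M^2(t-t_0)$ on the invariant set, hence provable by \irref{dinv} via \rref{thm:diff inv completeness}; the remaining \irref{dC} branch closes by \irref{dW} and \irref{qear}, since $t\le\tau$ together with the invariant gives $\norm{x-x_0}^2\le M^2(\tau-t_0)^2=R^2$.

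Finally \irref{stepExt}: split the box postcondition with \irref{band} into the $P_1$-part $\dbox{x'=f(x),t'=1\,\&\,t\le t_0+t_1}{(t\le t_0\to P_1)}$ and the $P_2$-part $\dbox{\cdots}{(t>t_0\to P_2)}$. For the $P_1$-part, monotonicity of $t$ makes the portion of the flow with $t\le t_0$ coincide with the flow under domain $t\le t_0$, so the second premise transfers by an ODE-refinement step (the implication $t\le t_0\to P_1$ being discharged because $t\le t_0$ is exactly that domain). For the $P_2$-part I would use \irref{dC} to add the now-established formula $t\le t_0\to P_1$ to the domain, so that the instant the solution reaches $t=t_0$ its state satisfies $P_1$ and hence $\Gamma_2$ by the first premise; the tail of the solution from $t=t_0$ onward is a time-shifted solution of the same ODE started in a $\Gamma_2$-state, on which the third premise delivers $P_2$ while $t\le t_0+t_1$, i.e.\ $t>t_0\to P_2$ along the original solution; the edge cases $t_1\le 0$ and ``$t_0$ never reached'' are absorbed by \irref{dW}. \textbf{The main obstacle} is formalizing this concatenation of solution segments at $t=t_0$ inside dL, since a bare ODE has no primitive self-composition: I expect to emulate it by re-anchoring the time coordinate — reasoning relative to a fresh clock introduced as a differential ghost and pinned to $0$ at the crossing — and the careful part is verifying that the antecedents $\Gamma_1$, the implication $t=t_0,P_1\vdash\Gamma_2$, and the premise-3 obligation all propagate correctly across the splice.
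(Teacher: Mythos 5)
Your plans for \irref{dualRight} and \irref{dualLeft} track the paper's derivations closely: \irref{dualRight} via a bounded-set bound $\norm{x}^2\le D$ fed into \irref{bdgd} (so the ODE with time-bounded domain provably reaches $t=\tau$), followed by a refinement step (\irref{drd}) to restore $B(x)$ as evolution domain; and \irref{dualLeft} via uniqueness of flows plus clock monotonicity internalized by a differential invariant (the paper's \irref{timeFixpoint}) and \irref{uniqp}. These match.

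For \irref{stepEx} you take a genuinely cleaner route than the paper. After applying \irref{dualRight}, the paper negates, applies \irref{ivt} on $\norm{x-x_0}^2-R^2$, and then derives a contradiction from the mean-value differential invariant; you instead cut the invariant $D(x,t)\equiv\norm{x-x_0}^2\le M^2(t-t_0)^2\land t\ge t_0$ directly into the domain with \irref{dC} and conclude with \irref{dW}. That works and avoids \irref{ivt} for this item, but you gloss over one step that \irref{dinv} forces on you: the formula $D\to\dbox{x'=f(x),t'=1\,\&\,t\le\tau}{D}$ is \emph{not} unconditionally valid --- its validity depends on the antecedent hypothesis $\forall y\,(y\in B[x_0,R]\to\norm{f(y)}^2\le M^2)$, and \irref{dinv} only applies to valid invariants. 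You must first conjoin that hypothesis to the evolution domain via the constant-propagation rule \irref{V} (it has no free variables bound by the ODE), after which the invariant becomes valid and your Lie-derivative argument goes through. The paper makes the analogous move by putting $\max_{y\in B[x_0,R]}\norm{f(y)}^2\le M^2$ inside the domain constraint of the invariant it cuts in.

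The genuine gap is in \irref{stepExt}. Your decomposition via \irref{band} and your treatment of the $P_1$-conjunct (monotonicity of $t$ plus refinement so that the $t\le t_0$ portion of the flow is governed by premise two) is essentially the paper's handling of its premise \textcircled{1} via \irref{mont_f}. But for the $P_2$-conjunct you correctly identify the obstacle --- splicing the run at $t=t_0$ --- and then propose ``re-anchoring the time coordinate via a fresh differential-ghost clock,'' which does not resolve it: a ghost clock adds no ability to syntactically \emph{locate} the intermediate state where $t=t_0$, which is precisely what is needed. The paper's mechanism is the derived intermediate value theorem \irref{ivt} (applied to $e\equiv t-t_0$), which produces a diamond witnessing the crossing, combined with fine-grained uniqueness \irref{uniqp} to split the original run into a prefix ending at $t=t_0$ (where your first premise shows $P_1$, hence $\Gamma_2$ holds) and a suffix governed by the third premise. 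Both \irref{ivt} and \irref{uniqp} are themselves nontrivial derived axioms (proved from \irref{ri}, \irref{dadj}, \irref{cont}, etc.\ in a preliminary lemma), and your plan does not engage with that layer at all. Without some surrogate for \irref{ivt} the splicing step cannot be discharged, so as written the $P_2$-branch of \irref{stepExt} is not a proof.
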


\begin{remark}
    These axioms are capable of symbolically simulating a basic algorithm for certifying existence of ODEs, which essentially mimics the classical proof \cite{Hartman_2002}, such an algorithm has also been presented explicitly in more recent work \cite{Graca_Zhong_Buescu_2009}. \rref{ex: symbolic existence} shows how this can be done. 
\end{remark}
The following provides some intuitive explanation for the axioms/proof rules in \rref{thm:axioms for step liveness}. 
    
\begin{itemize}
    \item Axioms \irref{dualRight}, \irref{dualLeft} provide a duality between box and diamond modalities on compact time horizons for ODEs. These axioms are useful in proving that the flow is bounded within some bounded set over a fixed time interval. It is also worth noting that while axiom \irref{dualRight} requires a bounded set, axiom \irref{dualLeft} places no requirements on the domain constraint $Q$ as it follows from the uniqueness of flows for ODEs. 
    \item Axiom \irref{stepEx} is a quantitative version of the classical Picard-Lindel\"of theorem presented in the language of $\dL$, allowing one to symbolically prove that the solution exists for a duration of $\frac{R}{M}$, which is a lower-bound on how long it takes for the solution to escape the ball $B[x_0, R]$. 
    
    \item Proof rule \irref{stepExt} provides a way of concatenating information proven for different time steps together over the entire time step. Similar to the proof of computability of solutions to IVPs \cite{Graca_Zhong_Buescu_2009} which iteratively chains up Picard iterations at various time steps. 
\end{itemize} 
All of the above axioms/proof rules are syntactically derivable using just $\dL$'s axiomatization \cite{DBLP:journals/jacm/PlatzerT20, DBLP:journals/fac/TanP21}. It is important to note that the axioms in \rref{thm:axioms for step liveness} hold \emph{symbolically} and are \emph{not} limited to compact IVPs (e.g. $x_0,\tau, t, T, M, R$ are symbolic variables). 

In order to prove \rref{thm:axioms for step liveness}, the following lemma is needed, which establishes the provability of many fundamental properties of ODEs, and is therefore of independent interest. 
\begin{lemma}
    \label{lem:lemma axioms}
    The following axioms are derivable in $\dL$.
    
    \begin{calculus}
    \cinferenceRule[thereAndBack|Rev]{Derived version of there and back}
   {
   \linferenceRule[impl]
   {{P}}
   {{\dbox{x' = f(x) \& Q}{\ddiamond{x' = -f(x) \& Q}{P}}}}
   }{}
    \cinferenceRule[timeFixpoint|Stuck]{t' = 1 does not have fixed points}
   {
   \linferenceRule[impl]
   {{t = t_0}}
   {\left({\dbox{x' = f(x), t' = 1 \& t \leq t_0}{P} \leftrightarrow P}\right)}
   }{}
    \cinferenceRule[idempotence|Idem]{<x' = f(x)>P -> <x' = f(x) \& <x' = f(x)>P>P}
   {
   \linferenceRule[impl]
   {{\ddiamond{x' = f(x) \& Q}{P}}}
   {{\ddiamond{x' = f(x) \& Q \land \ddiamond{x' = f(x) \& Q}{P}}{P}}}
   }{}
    \cinferenceRule[uniqp|Uniq']{Fine-grained uniqueness}
   {
   \linferenceRule[impll2]
   {\ddiamond{x' = f(x) \& Q_1}{P_1} \land \ddiamond{x' = f(x) \& Q_2}{P_2}}
   {\ddiamond{x' = f(x) \& Q_1 \land Q_2}{\left(P_1 \land \ddiamond{x' = f(x) \& Q_2}{P_2}\right)} \lor \ddiamond{x' = f(x) \& Q_1 \land Q_2}{\left(P_2 \land \ddiamond{x' = f(x) \& Q_1}{P_1}\right)}}
   }{}

   \cinferenceRule[ivt|IVT]{intermediate value theorem}
   {
   \linferenceRule[impl]
   {e \leq 0 \land \ddiamond{x' = f(x), t' = 1 \& Q}{\left(t = \tau \land e > 0\right)}}
   {\ddiamond{x' = f(x), t' = 1 \& Q \land t < \tau \land e \leq 0}{e = 0}}
   }
   {}
    \end{calculus}
\end{lemma}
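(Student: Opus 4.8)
The plan is to establish the five derived axioms of \rref{lem:lemma axioms} essentially in the order listed, since the later ones reuse the earlier ones: \irref{thereAndBack} (time reversal) and \irref{timeFixpoint} (the clock variable $t$ has no equilibrium) are the foundational facts; \irref{idempotence} and \irref{uniqp} follow as ``reachability algebra'' consequences of uniqueness of ODE solutions; and \irref{ivt} is derived last on top of them. Throughout, the common engine is that polynomial vector fields have unique solutions, together with \dL's differential-cut, differential-weakening, differential-ghost (\irref{DG}) and monotonicity/\irref{K} machinery, while \rref{thm:diff inv completeness} discharges any invariance side conditions that arise.

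For \irref{thereAndBack} the idea is that a state $y$ witnessed by $\dbox{x' = f(x) \& Q}{\cdot}$ from a $P$-state $x_0$ is $y = \Phi(\zeta)$ for the forward solution $\Phi$ of $x' = f(x)$ over some duration $\zeta$ staying in $Q$; running $x' = -f(x)$ from $y$ retraces the same trajectory backwards, $s \mapsto \Phi(\zeta - s)$, which again stays in $Q$ for $s \in [0,\zeta]$ and lands at $\Phi(0) = x_0$ where $P$ holds. I would turn this into a \dL derivation using \dL's handling of the reversed right-hand side $-f(x)$ together with a differential-refinement argument matching the backward duration to the forward one. For \irref{timeFixpoint}, the forward implication $\dbox{x' = f(x), t' = 1 \& t \leq t_0}{P} \rightarrow P$ is immediate from the differential-skip axiom, since the initial state meets the domain $t \leq t_0$ because $t = t_0$. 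For the converse I would cut in the differential invariant $t \geq t_0$ with \irref{dC} (valid by \rref{thm:diff inv completeness} as $(t - t_0)' = 1 \geq 0$), collapsing the domain to $t = t_0$; since $t' = 1$ is never zero, no positive-duration flow can remain in $t = t_0$, so the reachable set under this domain is the singleton $\{x_0\}$, giving $\dbox{\cdot}{P} \leftrightarrow P$. This last collapse is the only subtle point here, and I would formalize it either via a ghost clock and the solution axiom for $t' = 1$, or by combining differential-skip with the derived domain restriction.

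For \irref{idempotence} and \irref{uniqp} the point is that all diamonds involved are witnessed by the \emph{same} unique solution, only for possibly different durations. For \irref{idempotence}: if $\ddiamond{x' = f(x) \& Q}{P}$ is witnessed at duration $\zeta$, then every prefix state $\Phi(\xi)$ with $\xi \leq \zeta$ still satisfies both $Q$ and $\ddiamond{x' = f(x) \& Q}{P}$ (just continue to $\zeta$), so that same trajectory truncated at $\zeta$ witnesses the refined diamond; I would realize this with the diamond form of differential cut, supplying $\ddiamond{x' = f(x) \& Q}{P}$ along the witnessed arc via \irref{thereAndBack} and monotonicity. For \irref{uniqp}: given witness durations $\zeta_1$ (staying in $Q_1$, reaching $P_1$) and $\zeta_2$ (staying in $Q_2$, reaching $P_2$), a case split on $\zeta_1 \leq \zeta_2$ versus $\zeta_2 < \zeta_1$ yields, in the first case, a trajectory staying in $Q_1 \land Q_2$ up to $\zeta_1$ where $P_1$ holds and whose remaining arc to $\zeta_2$ stays in $Q_2$ and reaches $P_2$ (the first disjunct), and symmetrically in the second case. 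Syntactically this is the coarse uniqueness axiom for diamonds plus \irref{idempotence} and propositional/\irref{K} bookkeeping to reshape the postcondition into the stated nested form.

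The main obstacle is \irref{ivt}. Semantically it is transparent: along the witnessing solution of $x' = f(x), t' = 1 \& Q$ that reaches $t = \tau \land e > 0$ at duration $\zeta$, the value $e$ is a continuous function of time running from $e \leq 0$ to $e > 0$, hence has a first zero at some $\zeta^\ast < \zeta$, where moreover $t < \tau$ (as $t' = 1$ is strictly increasing and $t = \tau$ only at $\zeta$) and the arc up to $\zeta^\ast$ stays in $Q \land t < \tau \land e \leq 0$. The difficulty is producing this ``first crossing'' \emph{syntactically}: I would first derive the auxiliary facts that $Q$, $e \leq 0$ and $t \leq \tau$ hold initially (the last two from monotonicity of $t$ along $t' = 1$ and the hypothesis, the first from the diamond hypothesis and differential-skip), then use \irref{uniqp} to force the flow in the refined domain $Q \land t < \tau \land e \leq 0$ to be a prefix of the witnessing flow; this prefix must leave the domain before $\zeta$, and it can only leave across $e = 0$ — not across $t = \tau$, since the zero of $e$ occurs strictly earlier, and not across $Q$, which holds up to $\zeta$ — so the refined diamond reaches $e = 0$. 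Making ``leaves the domain exactly at $e = 0$'' rigorous is where I expect most of the work: it requires the continuity/progress axiom of \dL's liveness fragment together with a careful differential-refinement argument that stops the flow at the boundary while keeping $e \leq 0$ in the domain, and it is precisely the step earlier work left unfilled. Once \irref{ivt} is available, \rref{lem:lemma axioms} is complete and feeds directly into \rref{thm:axioms for step liveness}.
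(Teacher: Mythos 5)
Your sketch lines up well with the paper for \irref{thereAndBack}, \irref{timeFixpoint}, \irref{idempotence} and \irref{uniqp}. The paper's \irref{thereAndBack} is a by-contradiction around \irref{dadj} rather than a duration-matching refinement, but your reverse-trajectory intuition is the right one. For \irref{timeFixpoint} the paper is more direct than your collapse argument: it simply observes that $t = t_0 \land P$ is a valid differential invariant of $x' = f(x), t' = 1 \,\&\, t \leq t_0$ and invokes \irref{dinv}, while the converse is \irref{dx} as you say; it then records derived corollaries $\irref{timeFixpoint_f}$, $\irref{timeFixpoint_b}$, $\irref{mont_f}$, $\irref{mont_b}$ that are needed later. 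For \irref{idempotence} the engine is repeated application of the diamond there-and-back form $\irref{thereAndBack_d}$ (not a differential cut on the diamond side), and \irref{uniqp} is exactly \irref{idempotence} combined with the coarse \irref{uniq} plus $\irref{dor}$ and some $\irref{dWd}$/$\irref{drw}$ bookkeeping — so your outline there is on target.

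Where there is a genuine gap is \irref{ivt}, and you flag it yourself. The organizing move you are missing is not ``force the prefix via \irref{uniqp} and show it exits across $e = 0$'' — that would amount to re-deriving a boundary-crossing principle from scratch. What the paper actually does is (i) negate the conclusion and apply \irref{K} so the goal becomes refuting $\dbox{x' = f(x), t' = 1 \,\&\, Q \land t < \tau \land e \leq 0}{e < 0}$; (ii) apply the derived topological enclosure rule \irref{enclosure} (itself built on real induction \irref{ri}) to strip the $e \leq 0$ domain conjunct, yielding $\dbox{x' = f(x), t' = 1 \,\&\, Q \land t < \tau}{e < 0}$ from $e \leq 0$; (iii) cut this box against the hypothesis diamond that reaches $t = \tau \land e > 0$: use $\irref{B}$ to Skolemize the final state, $\irref{dadj}$ to reverse onto $x' = -f(x)$, $\irref{cont}$ plus $\irref{timeFixpoint_b}$ to manufacture a strictly earlier time with $e > 0$ and $t < \tau$, and \irref{uniqp} plus $\irref{mont_f}$ to conclude that this contradicts the box. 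In short, the crucial ingredient you don't name is \irref{enclosure}/\irref{ri}: it is what turns the semantic ``first crossing at $e = 0$'' into a syntactically usable box obligation, and the rest is a reverse-flow contradiction. Your plan identifies the right auxiliary tools ($\irref{cont}$, \irref{uniqp}, refinement) but without the \irref{enclosure} reduction the ``leaves the domain exactly at $e = 0$'' step has no handle to grab.
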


While \rref{lem:lemma axioms}'s purpose in this article is solely to prove \rref{thm:axioms for step liveness}, they also convey helpful properties of ODEs that are useful for other purposes. The following provides some intuition for these axioms.

\begin{itemize}
    \item Axiom \irref{thereAndBack} says that if a property $P$ is true, then after flowing along some ODE one can always flow back to a state where $P$ is true. A sort of ``there and back'' quantification that says the current state can always be reached by reversing the ODE flow. This axiom (and its proof) has already been implemented in \KeYmaeraX's tactics library, but we reproduce a proof here for completeness.

    \item Axiom \irref{timeFixpoint} expresses that the ODE $t' = 1$ is strictly monotone, and therefore does not have any fixed points. Thus, if the current state has $t = t_0$ and the domain constraint includes $t \leq t_0$, then the overall dynamical system is stuck and necessarily cannot evolve, resulting in the RHS of the axiom.

    \item Axiom \irref{idempotence} expresses an ``idempotence'' property of diamond modalities. If the current state can flow along some ODE to a target region, then every state along this flow can also flow to the target region. One can also view this as a statement on the uniqueness of flows \cite{DBLP:journals/jacm/PlatzerT20}.

    \item Axiom \irref{uniqp} is a more fine-grained version of $\dL$'s uniqueness axiom \cite{DBLP:journals/jacm/PlatzerT20} that deals with two potentially distinct target regions. While the implication may look complicated, it is just saying that if the flow along the same ODE can reach two regions $P_1, P_2$ under the domain constraints $Q_1, Q_2$ respectively, then by uniqueness of flows one flow will be the prefix of the other.

    \item Axiom \irref{ivt} internalizes the classical intermediate value theorem within $\dL$, saying if the term $e$ is initially non-positive and becomes positive along some flow, then it necessarily reaches $e = 0$ along the way and will do so while remaining in $e \leq 0$. 
\end{itemize}
Proofs of \rref{lem:lemma axioms} and \rref{thm:axioms for step liveness} are provided in \rref{app: derivations}.

The main use of \rref{thm:axioms for step liveness} in this article is to establish completeness results for compact IVPs. However, as the axioms/proof rules in \rref{thm:axioms for step liveness} are fully symbolic, they also enable deductive reasoning for general symbolic IVPs which is of independent interest, one such example is given below.

\begin{example}[Symbolic maximal interval of existence]
    \label{ex: symbolic existence}
    Consider the simple uni-variate ODE $x' = x^2 + 1$ with symbolic initial condition $x(0) = x_0$. Its exact solution is
    \[x(t) \equiv \tan(\arctan(x_0) + t)\]
    Thus, the (right) maximal interval of existence of the corresponding solution is $[0, \frac{\pi}{2} - \arctan(x_0))$, note that $x_0$ is a \emph{symbolic variable} rather than a fixed constant. This example shows how $\dL$ can essentially prove this symbolic interval of existence. Of course, since $\arctan(x_0)$ is not expressible in vanilla $\dL$, we resort to the following asymptotic approximation of $\arctan(x)$ (and $\frac{\pi}{2} - \arctan(x)$) obtained via its series expansion at infinity
    \begin{align*}
        \arctan(x) &\sim \frac{\pi}{2} - \frac{1}{x} + \frac{1}{3x^3} + o\left(\frac{1}{x^4}\right)\\
        \frac{\pi}{2} - \arctan(x) &\sim \frac{1}{x} - \frac{1}{3x^3} + o\left(\frac{1}{x^4}\right)
    \end{align*}
    With \rref{thm:axioms for step liveness}, it can be shown that for all $\eps \in \Q^+$ however small, the following formula (parametrized by $\eps$) is derivable in $\dL$\footnote{Since $x_0 \neq 0$ is enforced, the value $\frac{1}{x_0}$ is defined uniquely as some $c$ such that $cx_0 = 1$.} (the same technique in this example also works for higher-order bounds)
    \[x = x_0 \land t = 0 \land x_0 > 0 \rightarrow \ddiamond{x' = x^2 + 1, t' = 1}{t \geq (1 - \eps)\left(\frac{1}{x_0} - \frac{1}{3x_0^3}\right)}\]
    In other words, for every $\eps > 0$, one can \emph{symbolically prove} that the (right) maximal interval of existence is at least $(1 - \eps)\left(\frac{1}{x} - \frac{1}{3x^3}\right)$. Importantly, such a bound is interesting because $x_0$ is symbolic and can be unbounded, hence the provability of this formula does not directly follow from the completeness results for compact IVPs. Indeed, for $x_0$ sufficiently small the bound tends to $-\infty$, which is trivially satisfied. The assumption of $x_0 > 0$ is added for clarity in derivations only, and an identical formula can also be derived for $x_0 < 0$. 
\end{example}
\begin{proof}
    A complete proof is provided in \rref{app: derivations}. The main idea is to derive a numerical approximation purely symbolically using \irref{stepEx}. For a symbolic initial value $x_0$, bounding the maximum derivative in $B(x_0, x_0)$ gives some positive duration of existence. Running this procedure iteratively for $n$ steps gives rise to $n$ such values, adding these up with axiom \irref{stepExt} gives a lower-bound on the duration of existence while remaining in the region $B(x_0, nx_0)$. By picking $n \in \N$ large enough (independent of $x_0$), this procedure proves the desired lower-bound. 
\end{proof}

With \rref{thm:axioms for step liveness}, various completeness properties of $\dL$ for compact IVPs can now be proven. 

\begin{theorem}[Completeness for bounded safety]
    \label{thm: finite completeness - safety}
    Let $(f(x), C(x), [t_0, T])$ be a compact IVP and $O(x)$ a $\folr$ formula characterizing a bounded open set. Then $\dL$ is complete for formulas of the form 
    \[C(x) \land t = t_0 \rightarrow \dbox{x' = f(x), t' = 1 \& t \leq T}{O(x)}\]
    I.e. the following equivalence holds
    \begin{align*}
        &\models C(x) \land t = t_0 \rightarrow \dbox{x' = f(x), t' = 1 \& t \leq T}{O(x)} \iff\\
        &\lsequent{}{C(x) \land t = t_0 \rightarrow \dbox{x' = f(x), t' = 1 \& t \leq T}{O(x)}}
    \end{align*}
\end{theorem}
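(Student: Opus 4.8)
The plan is to reduce validity of the bounded open safety property to \rref{thm: effective proofs without upper bounds and constraints} (completeness for LDAs), in the same spirit as the proof of \rref{lem: finite completeness for open balls}, but with a general bounded open set $\eval{O}$ in place of an open ball. The converse direction of the stated equivalence is immediate from soundness of $\dL$, so only the implication ``valid $\Rightarrow$ provable'' needs work.

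First I would show that validity alone forces the flow to be well-defined on $\eval{C} \times [t_0, T]$, so that the LDA machinery applies. Suppose the formula is valid, fix any $x_0 \in \eval{C}$, and let $[0, \sigma)$ be the maximal right interval of existence of the solution of $x' = f(x), t' = 1$ starting with $x = x_0$, $t = t_0$; along it the $t$-coordinate is $t_0 + s$. If $\sigma \leq T - t_0$, then for every $s \in [0,\sigma)$ the domain constraint $t \leq T$ is maintained, so validity forces $O(x(s))$ along the entire solution, contradicting the fact that a solution with bounded maximal interval must leave every compact set, i.e.\ $\norm{x(s)}$ is unbounded as $s \to \sigma$, while $\eval{O}$ is bounded. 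Hence $\sigma > T - t_0$ for every $x_0 \in \eval{C}$, so the flow $\phi : \eval{C} \times [t_0, T] \to \R^n$ is well-defined, and the hypotheses of \rref{thm: abstract solvers always exist} and \rref{thm: effective proofs without upper bounds and constraints} are met.

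Next, since $\phi$ is continuous and $\eval{C} \times [t_0, T]$ is compact, $K := \phi(\eval{C}, [t_0, T])$ is a compact subset of the open set $\eval{O}$, so there is $\eps \in \Q^+$ with $B(K, \eps) \subseteq \eval{O}$. Let $\Phi$ be an LDA computed by \rref{thm: abstract solvers always exist}. By uniform convergence $\Phi_k \to \phi$ we get $\Phi_k(\eval{C}, [t_0, T]) \subseteq B(K, \eps/2)$ for all sufficiently large $k$, so the $\folr$ sentence
\[\forall x_0\, \forall t\, \forall y\,\big(t_0 \leq t \land t \leq T \land C(x_0) \land \norm{y - \Phi_k(x_0, t)}^2 < (\eps/2)^2 \rightarrow O(y)\big)\]
is true, hence provable by \irref{qear}; and by \rref{thm: effective proofs without upper bounds and constraints} with accuracy $\eps/2$, for all sufficiently large $k$ the formula $C(x) \land x = x_0 \land t = t_0 \rightarrow \dbox{x' = f(x), t' = 1 \& t \leq T}{\norm{x - \Phi_k(x_0, t)}^2 < (\eps/2)^2}$ is provable, where $x_0$ is a fresh variable.

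Finally I would assemble these, exactly as in \rref{lem: finite completeness for open balls}: it suffices to prove the formula with a fresh ghost $x_0$ and extra antecedent $x = x_0$, since instantiating $x_0 := x$ recovers the original. Along the ODE, $C(x_0)$ is preserved by \irref{V} (as $x_0$ is not bound), $t \geq t_0$ is cut into the domain constraint via \irref{dC} and \irref{dinv}, and $t \leq T$ is already present and available via \irref{dW}; combining the provable box formula bounding $\norm{x - \Phi_k(x_0,t)}^2$ with the provable $\folr$ implication above via \irref{K} yields $\dbox{x' = f(x), t' = 1 \& t \leq T}{O(x)}$. The main obstacle is the first step: turning the semantic statement ``every flow from $\eval{C}$ stays in the bounded set $\eval{O}$'' into the no-finite-time-blow-up hypothesis that the LDA results require; once the flow is known to be well-defined on $[t_0, T]$, the argument is the now-standard reduction to an LDA error bound plus real-arithmetic reasoning.
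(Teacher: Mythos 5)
Your proposal is correct and takes essentially the same route as the paper's proof: derive well-definedness of the flow from validity and boundedness of $\eval{O}$, obtain a uniform $\eps$-margin from compactness of $\phi(\eval{C},[t_0,T])$ inside the open set $\eval{O}$, apply the LDA/Stone--Weierstra{\ss} completeness result at accuracy below $\eps$, and combine the provable box formula with a provable $\folr$ implication via \irref{K} (with constant assumptions managed by \irref{V} and \irref{dW}). The only cosmetic difference is that you directly invoke \rref{thm: abstract solvers always exist} plus \rref{thm: effective proofs without upper bounds and constraints} where the paper packages these as \rref{thm: stone weierstrass}, and you give a more explicit blow-up argument for well-definedness of the flow.
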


\begin{proof}
    The $\impliedby$ implication is soundness and follows by soundness of $\dL$'s axiomatization \cite{DBLP:journals/jar/Platzer08, DBLP:journals/jacm/PlatzerT20, DBLP:journals/fac/TanP21}, so it remains to prove the $\implies$ implication. To this end, let us assume the validity of such a formula. Since $O(x)$ is a bounded set, this implies that the flow $\phi : \eval{C} \times [t_0, T] \to \R^n$ of the compact IVP is well-defined as it does not exhibit finite time blow-up. By validity of the formula, we have $\phi(\eval{C}, [t_0, T]) \subseteq O(x)$. Since $O(x)$ is open and $\phi(\eval{C}, [t_0, T])$ is compact, there necessarily exists some $\eps \in \Q^+$ such that $B(\phi(\eval{C}, [t_0, T]), \eps) \subseteq \eval{O}$. For each $n \in \N$, denote by $\theta_n$ the (vectorial) polynomial of error at most $2^{-n}$ as computed by \rref{thm: stone weierstrass}. Now note that for all sufficiently large $n \in \N$, the following formulas will be valid
    \begin{align*}
        &C(x) \land x = x_0 \land t = t_0 \rightarrow \dbox{x' = f(x), t' = 1 \& t \leq T}{\norm{x - \theta_n(x_0, t)}^2 \leq 2^{-2n}}\\
        &\forall x_0\forall t (C(x_0) \land t_0 \leq t \land t \leq T \rightarrow B[\theta_n(x_0, t), 2^{-n}] \subseteq \eval{O})
    \end{align*}
    Furthermore, they are both provable via \rref{thm: stone weierstrass} and \irref{qear} respectively. Thus, doing a bounded search on $n \in \N$ will find one where the two formulas above are provable. From this applications of \irref{V+dW} on the first formula proves
    \[C(x) \land x = x_0 \land t = t_0 \rightarrow \dbox{x' = f(x), t' = 1 \& t \leq T}{\left(t_0 \leq t \land t \leq T \land C(x_0) \land \norm{x - \theta_n(x_0, t)}^2 \leq 2^{-2n}\right)}\]
    Another application of \irref{V} (\rref{lem: vacuous axioms}) brings the second formula in, proving
    \[C(x) \land x = x_0 \land t = t_0 \rightarrow \dbox{x' = f(x), t' = 1 \& t \leq T}{\left(B[\theta_n(x_0, t), 2^{-n}] \subseteq \eval{O} \land \norm{x - \theta_n(x_0, t)}^2 \leq 2^{-2n}\right)}\]
    The desired formula of $C(x) \land t = t_0 \rightarrow \dbox{x' = f(x), t' = 1 \& t \leq T}{O(x)}$ then follows by applying \irref{K} and \irref{qear}, completing the proof. 
\end{proof}

\begin{theorem}[Completeness for bounded existence]
    \label{thm: finite completeness - existence}
    Let $(f(x), C(x), [t_0, T])$ be a compact IVP. $\dL$ is complete for formulas of the form 
    \[C(x) \land t = t_0 \rightarrow \ddiamond{x' = f(x), t' = 1}{t \geq T}\]
    Where $T \in \Q^+$ is a rational constant. I.e. the following equivalence holds
    \begin{align*}
        &\models C(x) \land t = t_0 \rightarrow \ddiamond{x' = f(x), t' = 1}{t \geq T} \iff\\
        &\lsequent{}{C(x) \land t = t_0 \rightarrow \ddiamond{x' = f(x), t' = 1}{t \geq T}}
    \end{align*}

\end{theorem}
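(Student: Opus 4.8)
The $\Leftarrow$ direction is soundness and follows from soundness of $\dL$'s axiomatization, so the plan concentrates on the $\Rightarrow$ direction. The idea is to reduce bounded existence to bounded open safety (\rref{thm: finite completeness - safety}) and then to invoke the duality axiom \irref{dualRight} from \rref{thm:axioms for step liveness}. So assume $\models C(x) \land t = t_0 \rightarrow \ddiamond{x' = f(x), t' = 1}{t \geq T}$. First I would note that this validity already forces the flow $\phi : \eval{C}\times[t_0,T]\to\R^n$ to be well-defined: along $x' = f(x), t' = 1$ started at $t = t_0$, reaching a state with $t \geq T$ requires the solution not to blow up before $t$ reaches $T$, for every initial point of $\eval{C}$. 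Since $\eval{C}$ is compact and $\phi$ is continuous, $\phi(\eval{C},[t_0,T])$ is a compact, hence bounded, subset of $\R^n$, so there is a rational $R \in \Q^+$ with $\phi(\eval{C},[t_0,T]) \subseteq B(0,R)$; concretely one may take any rational strictly exceeding $\max_{\eval{C}\times[t_0,T]}\norm{\phi}$, which is a computable real by \rref{thm: computable extreme value theorem}, and the strict slack ensures $\norm{x}^2 < R^2$ holds \emph{strictly} along the flow.

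Next, $O(x) \equiv \norm{x}^2 < R^2$ characterizes a bounded \emph{open} set containing every flow value, so the formula $C(x) \land t = t_0 \rightarrow \dbox{x' = f(x), t' = 1 \& t \leq T}{\norm{x}^2 < R^2}$ is valid and therefore provable by \rref{thm: finite completeness - safety}. Working under the antecedents $C(x), t = t_0$, this box-safety fact together with the arithmetic fact $t = t_0 \leq T$ (provable by \irref{qear} since $[t_0,T]\in\IQ$) matches exactly the left-hand side of axiom \irref{dualRight} instantiated with $\tau := T$ and with the bounded-set formula $B(x) := \norm{x}^2 < R^2$; applying \irref{dualRight} therefore derives $\ddiamond{x' = f(x), t' = 1 \& \norm{x}^2 < R^2}{t = T}$. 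Weakening the post-condition $t = T$ to $t \geq T$ and discarding the domain constraint $\norm{x}^2 < R^2$ — both standard monotonicity steps for the diamond modality — yields $\ddiamond{x' = f(x), t' = 1}{t \geq T}$, and reintroducing the implication gives the desired formula. The construction is effective: one searches over rationals $R$ for one together with a $\dL$ proof of the corresponding bounded-safety formula (such a pair exists for all sufficiently large $R$ by \rref{thm: finite completeness - safety}), then assembles the remaining fixed sequence of rule applications mechanically. The degenerate case $t_0 = T$ is handled uniformly by the same argument (it is also immediate since $t = t_0$ already satisfies $t \geq T$).

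The main obstacle is essentially delegated to the two results being invoked. \rref{thm: finite completeness - safety} supplies, via the Stone--Weierstra{\ss} approximants of \rref{thm: stone weierstrass} and the LDA machinery, the nontrivial proof of the open-safety fact; and the derivation of axiom \irref{dualRight} in \rref{thm:axioms for step liveness} internalizes the box/diamond duality on compact time horizons, already encapsulating the delicate point that a bounded domain constraint rules out finite-time blow-up before the target time is reached (which is what turns a safety box into an existence diamond, internally using \irref{stepEx} and \irref{stepExt}). Within the present argument the only genuinely new bookkeeping is the passage from ``the flow lies in some compact set'' to ``the flow lies in a rational open ball $B(0,R)$ described by a $\folr$ formula'', including the strict-inequality margin needed to meet the open-safety hypothesis of \rref{thm: finite completeness - safety}.
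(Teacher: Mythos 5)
Your proposal follows essentially the same route as the paper: establish that validity forces the flow to be bounded, pick a rational $R$ so the bounded-open safety formula with postcondition $\norm{x}^2 < R^2$ is valid, prove it via \rref{thm: finite completeness - safety} (the paper closes that premise with the equivalent \rref{lem: finite completeness for open balls}), and then apply \irref{dualRight} together with diamond weakening (\irref{drw}) to convert the box safety statement into the existence diamond. The only cosmetic difference is that you spell out the post-condition weakening and domain-constraint discarding as separate steps, while the paper compresses them into a single \irref{dualRight}+\irref{drw} line.
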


\begin{proof}
    Again $\impliedby$ follows from $\dL$'s soundness \cite{DBLP:journals/jar/Platzer17, DBLP:journals/jacm/PlatzerT20, DBLP:journals/fac/TanP21}, so it suffices to prove $\implies$. Assuming that such a formula is valid, the flow $\phi : \eval{C} \times [t_0, T] \to \R^n$ of the compact IVP is necessarily well-defined and therefore does not exhibit finite time blow-up on the time interval $[t_0, T]$. Thus, for all sufficiently large $R \in \Q^+$, the following formula will be valid
    \[C(x) \land t = t_0 \rightarrow \dbox{x' = f(x), t' = 1 \& t \leq T}{\norm{x}^2 < R^2}\]
    By \rref{thm: finite completeness - safety}, this will furthermore be provable in $\dL$ because $\norm{x}^2 < R^2$ is open. Thus, we may do a search for $R \in \Q^+$ until we find a value for which the formula above is provable. Once such a value is found, the desired formula can be proven via the following derivation
    \begin{sequentdeduction}
        \linfer[dualRight+drw]
            {\linfer
                {\lclose}
            {\lsequent{}{C(x) \land t = t_0 \rightarrow \dbox{x' = f(x), t' = 1 \& t \leq T}{\norm{x}^2 < R^2}}}
            }
        {\lsequent{}{C(x) \land t = t_0 \rightarrow \ddiamond{x' = f(x), t' = 1}{t \geq T}}}
    \end{sequentdeduction}
    where the premise is proven by application of \rref{lem: finite completeness for open balls}. This completes the proof. 
\end{proof}

\rref{thm: finite completeness - safety} and \rref{thm: finite completeness - existence} do \emph{not} require the flow of the compact IVP to be well-defined a priori, as $\dL$ is capable of proving this from the validity of the formulas in question. 

There is a natural dual part to \rref{thm: finite completeness - safety}, involving liveness formulas of the form 
\[\lsequent{}{C(x) \land t = t_0 \rightarrow \ddiamond{x' = f(x), t' = 1 \& t \leq T}{O(x)}}\]
$\dL$ is indeed also complete for formulas of this form, and the requirements on $O(x)$ can even be slightly relaxed in comparison with the earlier theorems to just characterizing an open set that is not necessarily bounded.

\begin{theorem}[Completeness for liveness]
    \label{thm: finite completeness - liveness}
    Let $(f(x), C(x), [t_0, T])$ be a compact IVP with well-defined flow $\phi : \eval{C} \times [t_0, T] \to \R^n$ and $O(x)$ a $\folr$ formula characterizing an open set. Then $\dL$ is complete for formulas of the form 
    \[C(x) \land t = t_0 \rightarrow \ddiamond{x' = f(x), t' = 1 \& t \leq T}{O(x)}\]
    I.e. the following equivalence holds
    \begin{align*}
        &\models C(x) \land t = t_0 \rightarrow \ddiamond{x' = f(x), t' = 1 \& t \leq T}{O(x)} \iff\\
        &\lsequent{}{C(x) \land t = t_0 \rightarrow \ddiamond{x' = f(x), t' = 1 \& t \leq T}{O(x)}}
    \end{align*}
\end{theorem}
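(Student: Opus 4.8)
\emph{Strategy.} I would prove the two directions separately. The ($\impliedby$) direction is immediate from soundness of $\dL$'s axiomatization, exactly as in \rref{thm: finite completeness - safety} and \rref{thm: finite completeness - existence}. For ($\implies$), assume validity; since the flow $\phi : \eval{C} \times [t_0, T] \to \R^n$ is well-defined by hypothesis, $\phi(\eval{C}, [t_0, T])$ is compact, so only a bounded portion of the (possibly unbounded) open set $\eval{O}$ is relevant. The plan is to reduce the liveness property of the true flow to a liveness property of a \emph{polynomial} approximant supplied by \rref{thm: stone weierstrass}, which then becomes provable by \irref{qear} together with the step-liveness machinery of \rref{thm:axioms for step liveness} and \rref{lem: finite completeness for open balls}.

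\emph{Finitely many rational hitting times.} If $\eval{O}^c = \emptyset$ (or more generally $\eval{C} \subseteq \eval{O}$) the conclusion holds already at $t = t_0$ and we are done, so assume $\eval{O}^c$ is a nonempty closed semialgebraic set; by \rref{thm: definable sets are computable} it is a computable closed set with computable $1$-Lipschitz distance function $d_O(y) = \inf_{z \in \eval{O}^c} \norm{y - z}$. Validity gives $\max_{t \in [t_0, T]} d_O(\phi(x_0, t)) > 0$ for every $x_0 \in \eval{C}$; this is continuous in $x_0$ on the compact set $\eval{C}$, hence bounded below by some $\eta \in \Q^+$, so for every $x_0 \in \eval{C}$ there is $t \in [t_0, T]$ with $B[\phi(x_0, t), \eta] \subseteq \eval{O}$. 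For $q \in \Q \cap [t_0, T]$ set $U_q := \{x_0 \in \eval{C} : d_O(\phi(x_0, q)) > \eta/2\}$, which is open in $\eval{C}$; by continuity of $t \mapsto d_O(\phi(x_0, t))$ the $U_q$ cover $\eval{C}$, and compactness yields finitely many rational times $q_1, \dots, q_m \in [t_0, T]$ with $\eval{C} = \bigcup_j U_{q_j}$. In words: every trajectory starting in $\eval{C}$ is, at one of the rational times $q_j$, at least $\eta/2$-deep inside $\eval{O}$.

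\emph{Reduction to the approximant and the per-$j$ derivation.} Let $(\theta_k)_k \in \Q^n[x_0, t]$ be the approximants of \rref{thm: stone weierstrass}, so $\norm{\theta_k - \phi}_{\eval{C} \times [t_0, T]} \le 2^{-k}$ and $C(x) \land x = x_0 \land t = t_0 \rightarrow \dbox{x' = f(x), t' = 1 \& t \le T}{\norm{x - \theta_k(x_0, t)}^2 \le 2^{-2k}}$ is provable; fix $k$ with $2^{-k} \le \eta/4$. Since $d_O$ is $1$-Lipschitz, $x_0 \in U_{q_j}$ forces $B[\theta_k(x_0, q_j), \eta/4] \subseteq \eval{O}$, so with $\Psi_j(x_0) \equiv \forall y(\norm{y - \theta_k(x_0, q_j)}^2 \le (\eta/4)^2 \rightarrow O(y))$ the $\folr$ sentence $\forall x_0(C(x_0) \rightarrow \bigvee_{j=1}^{m} \Psi_j(x_0))$ is valid, hence provable by \irref{qear}. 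It therefore suffices to derive, for each fixed $j$,
\[C(x) \land x = x_0 \land t = t_0 \land \Psi_j(x_0) \ \rightarrow\ \ddiamond{x' = f(x), t' = 1 \& t \le T}{O(x)},\]
and then split propositionally on $\bigvee_j \Psi_j(x_0)$, the initial value $x_0$ having been introduced by \irref{existsl} as elsewhere in this section. For a fixed $j$: choose rational $R$ with $\phi(\eval{C}, [t_0, q_j]) \subseteq B(0, R)$, so that \rref{lem: finite completeness for open balls} (applied on $[t_0, q_j]$) proves $\dbox{x' = f(x), t' = 1 \& t \le q_j}{\norm{x}^2 < R^2}$; combining this, the Stone--Weierstra{\ss} box of \rref{thm: stone weierstrass} restricted to the domain $t \le q_j$, and axiom \irref{dualRight} of \rref{thm:axioms for step liveness} at $\tau = q_j$ with the bounded tube $t_0 - 1 \le t \le q_j \land \norm{x}^2 < R^2 \land \norm{x - \theta_k(x_0, t)}^2 \le 2^{-2k}$ as the bounded set, one obtains a diamond $\ddiamond{x' = f(x), t' = 1 \& \cdots}{t = q_j}$ whose maintained domain still contains $\norm{x - \theta_k(x_0, t)}^2 \le 2^{-2k}$. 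At the endpoint $t = q_j$ this gives $\norm{x - \theta_k(x_0, q_j)} \le 2^{-k} \le \eta/4$, so $\Psi_j(x_0)$ (carried across the modality by \irref{V}, since $x_0$ is a constant parameter) together with \irref{qear} yields $O(x)$ there; weakening the maintained domain back to $t \le T$ and the postcondition up to $O(x)$ (monotonicity of the diamond postcondition) completes the $j$-th case. All of $\eta$, the $q_j$, $k$, and the $R$'s are obtained by effective search, since \rref{thm: stone weierstrass}, \rref{lem: finite completeness for open balls} and decidability of $\folr$-validity are all effective.

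\emph{Main obstacle.} I expect the hard part to be the first reduction: distilling a \emph{single} rational margin $\eta$ and \emph{finitely many} rational hitting times that work \emph{uniformly} over $\eval{C}$, and choosing the slacks $\eta, \eta/2, \eta/4, 2^{-k}$ so that the topological facts about the non-semialgebraic flow $\phi$ survive its replacement by the semialgebraic approximant $\theta_k$. Combining openness of $\eval{O}$, compactness of $\eval{C}$, and continuity (yet only computability, not semialgebraicity) of $\phi$ while tracking these slacks is where the real care is needed; the remaining modal bookkeeping --- differentially cutting the approximant box into a diamond, weakening diamond domain constraints, and pinning the endpoint time with \irref{dualRight} --- is routine given \rref{thm:axioms for step liveness} and \rref{lem: finite completeness for open balls}.
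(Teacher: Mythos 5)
Your proof is correct, but it takes a genuinely different modal route from the paper's. The paper searches for a single index $n$ making the semialgebraic sentence $\forall x_0 \in \eval{C}\, \exists t \in [t_0, T]\, B[\theta_n(x_0, t), 2^{-n}] \subseteq \eval{O}$ valid (its existence is established by a convergent-subsequence contradiction on hypothetical witnesses $z_n \in \eval{C}$), converts that $\exists t$ into a diamond for the clock $t' = 1$ alone via \irref{thereAndBack_d}, \irref{Kd}, \irref{evolved} and \irref{qear}, then attaches the ODE $x' = f(x)$ with \irref{bdgd}, using the Stone--Weierstra{\ss} box of \rref{thm: stone weierstrass} as the bounded-ghost premise, and closes with \irref{drd}, \irref{dWd}, \irref{Kd}. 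You instead distill a uniform rational margin $\eta$, extract a finite subcover $U_{q_1}, \dots, U_{q_m}$ of $\eval{C}$ indexed by rational hitting times, split propositionally over $\bigvee_j \Psi_j(x_0)$, and close each branch by steering the trajectory to $t = q_j$ with \irref{dualRight} over a bounded tube, reading off $O(x)$ from the maintained error bound and the constant $\Psi_j(x_0)$. What your route buys: the $\exists t$ becomes a concrete finite disjunction, the only BDG-type reasoning is hidden inside \irref{dualRight} (which you already derived for the existence theorem), and the modal bookkeeping is uniform with \rref{thm: finite completeness - existence}. What the paper's route buys: a single derivation with no propositional split, and the time quantifier is resolved entirely inside real arithmetic rather than discretized to rationals. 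Both hinge on the same compactness of $\eval{C}$ for the uniform margin; the finite-subcover argument and the convergent-subsequence argument are interchangeable packagings. Your "main obstacle" paragraph pinpoints exactly where the paper's proof also does the hard topological work, and your slack-tracking $\eta, \eta/2, \eta/4, 2^{-k}$ plays the role of the paper's $2^{-n}, 2^{-n+1}, 2^{-n+2}$ chain.
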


\begin{proof}
    As $\impliedby$ is soundness, we only handle $\implies$, so suppose that the formula is valid. Similar to the proof of \rref{thm: finite completeness - safety}, denote by $\theta_n$ the (vectorial) polynomial of error at most $2^{-n}$ as computed by \rref{thm: stone weierstrass} for each $n \in \N$. Now (computably) search for some $n \in \N$ such that the following formulas are valid (note that the first formula is always valid by construction of $\theta_n$)
    \begin{align*}
        & C(x) \land x = x_0 \land t = t_0 \rightarrow \dbox{x' = f(x), t' = 1 \& t \leq T}{\norm{\theta_n(x_0, t) - x}^2 \leq 2^{-2n}}\\
        &\forall x_0 \in \eval{C}\exists t \left(t_0 \leq t \land t \leq T \land B[\theta_n(x_0, t), 2^{-n}] \subseteq \eval{O} \right)
    \end{align*}
    For this to be a well-defined procedure, we prove that such an $n \in \N$ necessarily exists. Suppose for the sake of contradiction that this is false, then for all $n \in \N$, the following hold:
    \begin{enumerate}
        \item $\norm{\theta_n - \phi}_{C^0(\eval{C} \times [t_0, T])} \leq 2^{-n}$
        \item There exists some $z_n \in \eval{C}$ such that for all $t \in [t_0, T]$, $B[\theta_n(z_n, t), 2^{-n}] \nsubseteq \eval{O}$.
    \end{enumerate}
    Since $\eval{C}$ is compact, we may assume without loss of generality (by re-indexing if necessary), that the sequence $z_n \to z \in \eval{C}$ converges to some $z$. To achieve a contradiction, it suffices to show that $\phi(z, t) \notin \eval{O}$ for all $t \in [t_0, T]$. Let $t \in [t_0, T]$ be arbitrary and denote $d : \R^n \to \R$ as the distance function associated to the closed set $\eval{O}^C$. For all $n \in \N$, we have
    \begin{align*}
        d(\phi(z, t)) &\leq d(\theta_n(z_n, t)) + \norm{\theta_n(z_n, t) - \phi(z, t)}\\
        &\leq 2^{-n} + \norm{\theta_n(z_n, t) - \phi(z, t)}  \text{\qquad\qquad(by choice of $z_n$ in (2))}\\ 
        &\leq 2^{-n} + \norm{\theta_n(z, t) - \phi(z, t)} + \norm{\theta_n(z_n, t) - \theta_n(z, t)}\\
        &\leq 2^{-n} + \norm{\theta_n - \phi}_{C^0(\eval{C} \times [t_0, T])} + \norm{\theta_n(z_n, t) - \phi(z_n, t)} + \norm{\phi(z_n, t) - \theta_n(z, t)}\\
        &\leq 2^{-n + 1} + \norm{\theta_n - \phi}_{C^0(\eval{C} \times [t_0, T])} + \norm{\phi(z_n, t) - \phi(z, t)} + \norm{\phi(z, t) - \theta_n(z, t)}\\
        &\leq 2^{-n + 2} + \norm{\phi(z_n, t) - \phi(z, t)} \xrightarrow{n \to \infty} 0
    \end{align*}
    where the final convergence uses the fact that $\phi$ is continuous. Since the argument above holds for all $t \in [t_0, T]$, this shows $\phi(z, [t_0, T]) \cap \eval{O} = \emptyset$, a contradiction. 
    Thus, there necessarily exists some $n$ such that both formulas are valid and therefore provable via \rref{thm: stone weierstrass}
    and \irref{qear}. To continue, first note that the following is provable
    \[C(x) \land x = x_0 \land t = t_0 \rightarrow \ddiamond{t' = 1 \& t \leq T}{B[\theta_n(x_0, t), 2^{-n}] \subseteq \eval{O}}\]
    with derivation
    \begin{sequentdeduction}
        \linfer[implyr]
            {\linfer[thereAndBack_d+existsr]
                {\linfer[Kd]
                    {\linfer[evolved+qear]
                        {\linfer[qear]
                            {\lclose}
                        {\lsequent{C(x), x = x_0}{\exists t\left( t_0 \leq t \land t \leq T \land B[\theta_n(x_0, t), 2^{-n}] \subseteq \eval{O}\right)}}
                        }
                    {\lsequent{C(x), x = x_0, t = t_0}{\ddiamond{t' = 1}{\left(B[\theta_n(x_0, t), 2^{-n}] \subseteq \eval{O} \land t \leq T\right)}}}
                    }
                {\lsequent{C(x), x = x_0, t = t_0}{\ddiamond{t' = 1}{\left(B[\theta_n(x_0, t), 2^{-n}] \subseteq \eval{O} \land \dbox{t' = -1}{(t \geq t_0 \rightarrow t \leq T)}\right)}}}
                }
            {\lsequent{C(x), x = x_0, t = t_0}{\ddiamond{t' = 1 \& t \leq T}{B[\theta_n(x_0, t), 2^{-n}] \subseteq \eval{O}}}}
            }
        {\lsequent{}{C(x) \land x = x_0 \land t = t_0 \rightarrow \ddiamond{t' = 1 \& t \leq T}{B[\theta_n(x_0, t), 2^{-n}] \subseteq \eval{O}}}}
    \end{sequentdeduction}
    where the final application of \irref{qear} is sound by the construction of $n$, and axiom \irref{Kd} was applied assuming $t \leq T \rightarrow \dbox{t' = -1}{t \leq T}$, which is a valid invariant and can be proven by \irref{dinv}. Next, the following formula can be derived with a direct application of axiom \irref{bdgd}
    \[C(x) \land x = x_0 \land t = t_0 \rightarrow \ddiamond{x' = f(x), t' = 1 \& t \leq T}{B[\theta_n(x_0, t), 2^{-n}] \subseteq \eval{O}}\]
    which uses the (provable) formulas 
    \begin{align*}\
        &C(x) \land x = x_0 \land t = t_0 \rightarrow \dbox{x' = f(x), t' = 1 \& t \leq T}{\norm{\theta_n(x_0, t) - x}^2 \leq 2^{-2n}}\\
        &C(x) \land x = x_0 \land t = t_0 \rightarrow \ddiamond{t' = 1 \& t \leq T}{B[\theta_n(x_0, t), 2^{-n}] \subseteq \eval{O}}
    \end{align*}
    Finally, applying axioms \irref{drd+dWd} with the (provable) formulas
    \begin{align*}
        &C(x) \land x = x_0 \land t = t_0 \rightarrow \dbox{x' = f(x), t' = 1 \& t \leq T}{\norm{\theta_n(x_0, t) - x}^2 \leq 2^{-2n}}\\
        &C(x) \land x = x_0 \land t = t_0 \rightarrow \ddiamond{x' = f(x), t' = 1 \& t \leq T}{B[\theta_n(x_0, t), 2^{-n}] \subseteq \eval{O}}
    \end{align*}
    proves
    \[C(x) \land x = x_0 \land t = t_0 \rightarrow \ddiamond{x' = f(x), t' = 1 \& t \leq T}{(B[\theta_n(x_0, t), 2^{-n}] \subseteq \eval{O} \land \norm{\theta_n(x_0, t) - x}^2 \leq 2^{-2n})}\]
    another application of \irref{Kd} gives
    \[C(x) \land t = t_0 \rightarrow \ddiamond{x' = f(x), t' = 1 \& t \leq T} O(x)\]
    completing the proof of completeness for open properties. 
\end{proof}

\section{Conclusion}

By unifying both deductive and numerical techniques, this article establishes several completeness properties of compact IVPs. On a theoretical level, this proves complete reasoning principles for compact IVPs from purely qualitative properties. On a practical level, these results show that it is possible both to enjoy the capabilities of numerical methods, whilst retaining the rigorous level of trust provided by deductive, symbolic proofs. Alternatively, one could view such completeness results as a strengthening in the uniformity of such numerical algorithms. Standard numerical algorithms take in a single input and compute a corresponding output. As such, a different certifying proof of the output is needed for each individual input. This article improves on the level of uniformity for compact IVPs and establishes that there exists a single, symbolic proof in $\dL$ which proves the desired properties of the given compact IVP for \emph{all initial conditions} from the compact domain.

To achieve these completeness results, the article crucially establishes that rigorous error bounds can be proved in $\dL$ by reducing them down to differential invariance questions, providing a modular, rigorous way of verifying error bounds for numerical approximations. Utilizing this result, we then prove that $\dL$ is complete for (open and bounded) safety and liveness properties, as well as convergence for compact IVPs. This proof-theoretic result shows that not only is $\dL$ expressive enough, its axiomatization is also powerful enough to prove all such true properties of compact IVPs. The article also presented derivations of several classical theorems in $\dL$ along the way to establishing completeness, which are of independent interest. Notably including the Weierstrass approximation theorem, intermediate value theorem and the correspondence of global existence of solutions to IVPs and absence of finite time blow-up. 

For future work, it would be interesting to establish specific classes of LDAs that are general enough to preserve the completeness results while having a more tractable complexity in proving their error bounds in the sense of \rref{thm: effective proofs without upper bounds and constraints}. 

\textit{Acknowledgment.}
Funding has been provided by an Alexander von Humboldt Professorship and the National Science Foundation under Grant No. CCF 2220311.

\newpage
\section*{Appendix}
\appendix

\section{$\dL$ axiomatization}
\label{app: dL axiomatization}
This section provides a complete record of $\dL$'s axiomatization that is needed for the article.

\begin{theorem}[\cite{DBLP:conf/lics/Platzer12b, DBLP:journals/jacm/PlatzerT20, DBLP:journals/fac/TanP21}]
    \label{thm: base axiomatization of dL}
    The following are sound axioms of $\dL$. In axioms \irref{cont}, \irref{dadj}, \irref{bdg}, the variables $y$ is fresh. In axiom \irref{bdg}, $Q(x)$ is required to be a formula of real arithmetic. 

    \begin{calculus}
        \cinferenceRule[qear|\usebox{\Rval}]{quantifier elimination real arithmetic}
        {\linferenceRule[sequent]
          {}
          {\lsequent[g]{\Gamma}{\Delta}}
        }{$\text{if}~\landfold_{\ausfml\in\Gamma} \ausfml \limply \lorfold_{\busfml\in\Delta} \busfml ~\text{is valid in \LOS[\reals]}$}%
    
        \cinferenceRule[diamond|$\didia{\cdot}$]{diamond axiom}
        {\linferenceRule[equiv]
          {\lnot\dbox{\ausprg}{\lnot \ausfml}}
          {\ddiamond{\ausprg}{\ausfml}}
        }
        {}

        \cinferenceRule[evolved|$\didia{'}$]{evolve}
        {\linferenceRule[equiv]
          {\lexists{t{\geq}0}{\ddiamond{\pupdate{\pumod{x}{y(t)}}}{p(x)}}\hspace{1cm}}
          {\ddiamond{\pevolve{\D{x}=\genDE{x}}}{p(x)}}
        }{$\m{\D{y}(t)=\genDE{y}}$}%

        \cinferenceRule[B|B$'$]{}
        {\linferenceRule[equiv]
          {\lexists{y}{\ddiamond{\pevolvein{x'=f(x)}{Q(x)}}{\rfvar(x,y)}}}
          {\ddiamond{\pevolvein{x'=f(x)}{Q(x)}}{\exists{y}\rfvar(x,y)}}
        }{\text{$y \not\in x$}}
        
        \cinferenceRule[K|K]{K axiom / modal modus ponens} %
        {\linferenceRule[impl]
          {\dbox{\alpha}{(\fvarA \limply \fvarB)}}
          {(\dbox{\alpha}{\fvarA}\limply\dbox{\alpha}{\fvarB})}
        }{}
        \cinferenceRule[V|V]{vacuous $\dbox{}{}$}
         {\linferenceRule[impl]
           {\fvarA}
           {\dbox{\alpha}{\fvarA}}
         }{\text{no free variable of $\fvarA$ is bound by $\alpha$}}
        \cinferenceRule[G|G]{$\dbox{}{}$ generalization} %
        {\linferenceRule[formula]
          {\lsequent{}{\fvarA}}
          {\lsequent{\Gamma}{\dbox{\alpha}{\fvarA}}}
        }{}
        
        \cinferenceRule[dW|dW]{}
        {\linferenceRule
          {\lsequent{\ivr}{P}}
          {\lsequent{\Gamma}{\dbox{\pevolvein{\D{x}=\genDE{x}}{\ivr}}{P}}}
        }{}
        
        \cinferenceRule[dC|dC]{differential cut}%
        {\linferenceRule[sequent]
          {\lsequent[L]{}{\dbox{\pevolvein{\D{x}=\genDE{x}}{\ivr}}{\cusfml}}
          &\lsequent[L]{}{\dbox{\pevolvein{\D{x}=\genDE{x}}{(\ivr\land \cusfml)}}{\ousfml[x]}}}
          {\lsequent[L]{}{\dbox{\pevolvein{\D{x}=\genDE{x}}{\ivr}}{\ousfml[x]}}}
        }{}
        
        \cinferenceRule[DG|DG]{differential ghost variables}
        {\linferenceRule[viuqe]
          {\dbox{\pevolvein{x'=\genDE{x}}{\ivr(x)}}{\ousfml[x](x)}}
          {\lexists{y}{\dbox{\pevolvein{x'=\genDE{x}\syssep y'=a(x)\cdot y+b(x)}{\ivr(x)}}{\ousfml[x](x)}}}
        }
        {}

        \cinferenceRule[DGi|DGi]{differential ghost variables}
        {\linferenceRule[impl]
          {\dbox{\pevolvein{x'=\genDE{x}}{\ivr(x)}}{\ousfml[x](x)}}
          {\forall{y}{\dbox{\pevolvein{x'=\genDE{x}\syssep y'= g(x, y)}{\ivr(x)}}{\ousfml[x](x)}}}
        }
        {}
        
        \cinferenceRule[thereAndBack_orig|{[$\&$]}]{there and back quantification}
       {
       \linferenceRule[equiv]
       {\forall t_0 {=} c_0 \dbox{x' = \theta}{\left(\dbox{x' = -\theta}{\left(c_0 \geq t_0 \rightarrow \chi\right) \rightarrow \phi}\right)}}
        {\dbox{x' = \theta \& \chi}{\phi}}
       }
       {}

        \cinferenceRule[dx|DX]{differential skip}
        {
            \linferenceRule[equiv]
            {\left(Q \rightarrow P \land \dbox{x' = f(x) \& Q}{P}\right)}
            {\dbox{x'= f(x) \& Q}{P}}
        }
        {$x' \notin P, Q$}
        
        \cinferenceRule[uniq|Uniq]{vanilla uniqueness}
        {\linferenceRule[equiv]
        {\left(\ddiamond{x'= f(x) \& Q_1}{P}\right) \land \left(\ddiamond{x'= f(x) \& Q_2}{P}\right)}
        {\ddiamond{x'= f(x) \& Q_1 \land Q_2}{P}}
        }{}

        \cinferenceRule[cont|Cont]{continuity of ODE}
        {\linferenceRule[impl]
        {x = y}
        {\left(\ddiamond{x'= f(x) \& e > 0}{x \neq y} \leftrightarrow e > 0\right)}
        }{$f(x) \neq 0$}

        \cinferenceRule[dadj|Dadj]{reverse flow of ODEs}
        {\linferenceRule[equiv]
        {\ddiamond{y' = -f(y) \& Q(y)}{y = x}}
        {\ddiamond{x' = f(x) \& Q(x)}{x = y}}
        }{}

        \cinferenceRule[ri|RI]{real induction axiom}
        {\linferenceRule[equiv]
        {\forall y\dbox{x' = f(x) \& P \lor x = y}{\left(x = y \rightarrow P \land \ddiamond{x' = f(x) \& P \lor x = y}{x \neq y}\right)}}
        {\dbox{x' = f(x)}{P}}
        }{}
        
        \cinferenceRule[bdg|BDG]{bounded differential ghost}
       {
       \linferenceRule[impll]
       {\dbox{x' = f(x), y' = g(x, y) \& Q(x)}{\norm{y}^2 \leq p(x)}}
        {\left(\dbox{x' = f(x) \& Q(x)}{P(x)} \leftrightarrow \dbox{x' = f(x), y' = g(x, y) \& Q(x)}{P(x)}\right)}
       }
       {}
       
    \end{calculus}
\end{theorem}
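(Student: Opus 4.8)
The plan is to verify soundness one axiom schema at a time, since \rref{thm: base axiomatization of dL} is a consolidation of results already in the literature. I would first sort the schemata by provenance. The propositional, first-order and modal base axioms \irref{qear}, \irref{diamond}, \irref{K}, \irref{V}, \irref{G} are \dL's base calculus and are sound exactly as in \cite{DBLP:conf/lics/Platzer12b, DBLP:journals/jar/Platzer17}, with \irref{qear} additionally invoking decidability of $\folr$ \cite{Tarski_1948}. The generic box-modality ODE axioms \irref{dW}, \irref{dC}, \irref{DG}, \irref{DGi}, \irref{evolved}, \irref{thereAndBack_orig}, \irref{dx}, the diamond axiom \irref{B}, and the uniqueness/continuity/real-induction block \irref{uniq}, \irref{cont}, \irref{dadj}, \irref{ri} are established in the complete axiomatizations of differential equation invariants and liveness \cite{DBLP:journals/jar/Platzer17, DBLP:journals/jacm/PlatzerT20}, and \irref{bdg} in the bounded differential ghost work \cite{DBLP:journals/fac/TanP21}; for these the proof is by citation.

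For any schema whose statement here deviates only cosmetically from its source --- typically by carrying an explicit clock variable $t$ with $t' = 1$, or by being phrased over the rich-test class of domain constraints rather than only $\folr$ domains --- I would re-derive soundness directly from the semantics of \rref{sec: prelim_dL}, using that for polynomial $f$ the maximal right solution $\Phi_\omega$ with $\Phi_\omega(0) = \omega$ exists and is unique. The recurring patterns are: \irref{diamond} is the duality of the two displayed semantic clauses; \irref{dx} is the box clause read at $\tau = 0$ (vacuous when $\omega \not\models Q$, forcing $\omega \models P$ when $\omega \models Q$); \irref{uniq} and \irref{cont} come from uniqueness of maximal solutions together with $f(x) \neq 0$ forcing the flow to actually move while $e$ remains positive for a short time; \irref{dadj} and \irref{thereAndBack_orig} come from the time-reversal symmetry $s \mapsto -s$ of the flow; \irref{B} holds because a variable not occurring in the ODE is constant along the flow, so the existential may be pulled out of the modality; \irref{ri} is sound by the real-induction principle applied to the trajectory; and \irref{DG}, \irref{DGi}, \irref{bdg} are obtained by lifting a solution of the $x$-subsystem to the extended system, the ghost component being affine (resp.\ arbitrary smooth, resp.\ $\norm{y}^2$-bounded by $p(x)$) in $y$ so that it cannot blow up before the $x$-component does. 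In parallel I would check that the side conditions listed in the theorem are exactly what these arguments consume: $y$ fresh in \irref{cont}, \irref{dadj}, \irref{bdg} guarantees the ghost neither occurs in nor perturbs the original dynamics; ``no free variable of $\phi$ bound by $\alpha$'' in \irref{V} guarantees the postcondition's truth value is unchanged by running $\alpha$; and restricting $Q(x)$ to real arithmetic in \irref{bdg} is what makes $\norm{y}^2 \leq p(x)$ an admissible postcondition of the differential-ghost argument.

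The step I expect to need the most care --- though I do not anticipate a genuine mathematical obstacle --- is confirming that passing to rich-test $\dL$, where arbitrary $\dL$ formulas are allowed as domain constraints in $x' = f(x)\,\&\,Q$ rather than only $\folr$ formulas as in parts of \cite{DBLP:journals/jacm/PlatzerT20, DBLP:journals/fac/TanP21}, breaks none of these soundness arguments. The semantic clauses for $\dbox{\alpha}{\cdot}$ and $\ddiamond{\alpha}{\cdot}$ only ever test whether a point $\Phi_\omega(\xi)$ lies in the set $\eval{Q}$, and none of the arguments above uses that $\eval{Q}$ is semialgebraic, so each proof transfers verbatim; the single place where semialgebraicity does matter, \irref{bdg}, is precisely where the theorem keeps the $\folr$ restriction on $Q(x)$, so nothing extra is needed there either. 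Hence, once the cosmetic re-derivations are done and all side conditions are accounted for, the claim reduces to the already-published soundness results, and the remaining work is bookkeeping rather than new mathematics.
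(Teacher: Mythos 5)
Your proposal is correct and takes essentially the same approach as the paper: the paper itself offers no proof body for \rref{thm: base axiomatization of dL} — it cites \cite{DBLP:conf/lics/Platzer12b, DBLP:journals/jacm/PlatzerT20, DBLP:journals/fac/TanP21} and then flags in a remark immediately after \rref{thm: bounded differential ghost + differential refinement} that the one genuine care point is the passage from poor-test to rich-test domain constraints, noting that the original soundness arguments go through unchanged. Your per-axiom sketches agree with the standard soundness proofs in those sources, and you independently identify the rich-test generalization as the only substantive item to re-check, which is precisely the content of the paper's remark.
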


\begin{remark}
    In axioms \irref{thereAndBack_orig} and \irref{cont}, it is assumed that the ODE $x' = f(x)$ includes a clock variable $c_0' = 1$. This assumption can be made without loss of generality since a clock variable can always be added using \irref{DG}. The variable $t_0$ is also assumed to be fresh in \irref{thereAndBack_orig}.
\end{remark}
The following derivable axioms will also be used.

\begin{theorem}[\cite{DBLP:journals/jacm/PlatzerT20, DBLP:journals/fac/TanP21, DBLP:conf/lics/Platzer12b}]
    \label{thm: bounded differential ghost + differential refinement}
    The following axioms are derivable in $\dL$, where $e$ is a term. In axiom \irref{bdgd}, $Q(x)$ is required to be a formula of real arithmetic.

    \begin{calculus}
        \cinferenceRule[drd|DR$\didia{\cdot}$]{differential refinement}
        {
        \linferenceRule[impl]
        {\dbox{x' = f(x) \& R}{Q}}
        {\left(\ddiamond{x' = f(x) \& R}{P} \rightarrow \ddiamond{x' = f(x) \& Q}{P}\right)}
        }{}

        \cinferenceRule[drw|dRW$\didia{\cdot}$]{differential refinement}
        {
        \linferenceRule[sequent]
        {\lsequent{R}{Q} & \lsequent{\Gamma}{\ddiamond{x' = f(x) \& R}{P}}}
        {\lsequent{\Gamma}{\ddiamond{x' = f(x) \& Q}{P}}}
        }{}
        
        \cinferenceRule[bdgd|{BDG$\langle\cdot\rangle$}]{bounded differential ghost for diamond}
        {
        \linferenceRule[impll]
        {\dbox{x' = f(x), y' = g(x, y) \& Q(x)}{\norm{y}^2 \leq p(x)}}
        {\left(\ddiamond{x' = f(x) \& Q(x)}{P(x)} \rightarrow \ddiamond{x' = f(x), y' = g(x, y) \& Q(x)}{P(x)}\right)}
        }{}
        
        \cinferenceRule[Kd|$K\didia{\cdot}$]{Diamond Kripke axiom}
        {\linferenceRule[impl]
        {\dbox{\alpha}{\left(\phi \rightarrow \psi\right)}}
        {\left(\ddiamond{\alpha}{\phi} \rightarrow \ddiamond{\alpha}{\psi}\right)}
        }
        {}

        \cinferenceRule[dor|$\didia{}\lor$]{diamond or axiom}
        {\linferenceRule[equiv]
        {\ddiamond{\alpha}{\phi} \lor \ddiamond{\alpha}{\psi}}
        {\ddiamond{\alpha}{\left(\phi \lor \psi\right)}}
        }
        {}

        \cinferenceRule[band|${[]\land}$]{$\dbox{\cdot}{\land}$}
        {\linferenceRule[equiv]
          {\dbox{\alpha}{\phi} \land \dbox{\alpha}{\psi}}
          {\dbox{\alpha}{(\phi \land \psi)}}
        }{}%
        
        \cinferenceRule[enclosure|Enc]{topological enclosure}
        {
        \linferenceRule[sequent]
        {\lsequent{\Gamma}{e \geq 0} & \lsequent{\Gamma}{\dbox{x' = f(x) \& Q \land e \geq 0}{e > 0}}}
        {\lsequent{\Gamma}{\dbox{x' = f(x) \& Q}{e > 0}}}
        }
        {}
    \end{calculus}
\end{theorem}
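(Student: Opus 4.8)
The plan is to derive the seven axioms and rules in an order that postpones all the topological content to the very end. First I would dispatch the three purely modal items \irref{band}, \irref{Kd} and \irref{dor}, each of which is a routine consequence of \irref{K}, \irref{G} and the duality axiom \irref{diamond}. For \irref{band} one proves both directions from the propositional tautologies $\phi \rightarrow (\psi \rightarrow \phi \land \psi)$ and $\phi \land \psi \rightarrow \phi$, $\phi \land \psi \rightarrow \psi$ via \irref{G} followed by \irref{K}. Then \irref{Kd} follows by rewriting each $\ddiamond{\alpha}{\cdot}$ as $\lnot\dbox{\alpha}{\lnot\cdot}$ using \irref{diamond}, so that under the hypothesis $\dbox{\alpha}{(\phi \rightarrow \psi)}$ the goal becomes $\dbox{\alpha}{\lnot\psi} \rightarrow \dbox{\alpha}{\lnot\phi}$, which closes by \irref{band}/\irref{K} and the tautology $(\phi \rightarrow \psi) \land \lnot\psi \rightarrow \lnot\phi$. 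And \irref{dor} follows because, again by \irref{diamond} and De Morgan, $\ddiamond{\alpha}{\phi} \lor \ddiamond{\alpha}{\psi} \leftrightarrow \lnot(\dbox{\alpha}{\lnot\phi} \land \dbox{\alpha}{\lnot\psi}) \leftrightarrow \lnot\dbox{\alpha}{(\lnot\phi \land \lnot\psi)} \leftrightarrow \lnot\dbox{\alpha}{\lnot(\phi \lor \psi)}$, using \irref{band} and the fact that $\dbox{\alpha}{\cdot}$ respects provable equivalence (itself a consequence of \irref{G} and \irref{K}).

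Next I would obtain \irref{bdgd} from \irref{bdg} by the same dualisation trick: rewrite both diamonds with \irref{diamond} and instantiate \irref{bdg} at postcondition $\lnot P(x)$ (still a formula not mentioning $y$), turning the box equivalence it supplies into the required diamond equivalence, of which the stated implication is one half. For \irref{drd} and \irref{drw} I would route through the box version of differential refinement --- the rule deriving $\dbox{x'=f(x)\&Q}{P} \rightarrow \dbox{x'=f(x)\&R}{P}$ from $\dbox{x'=f(x)\&R}{Q}$ --- which is standard: from $\dbox{x'=f(x)\&R}{Q}$ and $\dbox{x'=f(x)\&Q}{P}$ one gets $\dbox{x'=f(x)\&(R\land Q)}{P}$ by basic domain monotonicity of $\dbox{\cdot}{\cdot}$, and then \irref{dC} (cutting $Q$ into the domain $R$) yields $\dbox{x'=f(x)\&R}{P}$. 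Dualising this box rule with \irref{diamond} gives \irref{drd}, and \irref{drw} is just its sequent-calculus presentation, where the arithmetic side condition $R \vdash Q$ is first promoted to $\dbox{x'=f(x)\&R}{Q}$ by \irref{dW}/\irref{G}.

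The substantive case, and the step I expect to be the main obstacle, is \irref{enclosure}. Semantically it asserts that the closed set $\eval{e \geq 0}$ stays invariant along the $Q$-restricted flow as soon as the flow can never sit on its boundary $\eval{e = 0}$ while still inside $Q$ --- for there the strict inequality $e > 0$ would be forced, contradicting $e = 0$. Syntactically I would first apply \irref{dC} to the goal $\dbox{x'=f(x)\&Q}{e>0}$ with cut formula $e \geq 0$: one branch is then exactly the hypothesis $\dbox{x'=f(x)\&(Q\land e\geq 0)}{e>0}$, and the remaining obligation is the closed invariance $\dbox{x'=f(x)\&Q}{e\geq 0}$ started from the hypothesis $e \geq 0$. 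This closed invariance is where the topology really enters, and I would establish it from the real-induction axiom \irref{ri}, with the domain constraint $Q$ folded in via the standard reduction of domain-constrained box properties (e.g. through \irref{thereAndBack_orig}); the strict-positivity hypothesis $\dbox{x'=f(x)\&(Q\land e\geq 0)}{e>0}$ is precisely what discharges the real-induction obligation at the boundary, since it rules out the flow remaining on $\eval{e=0}$ inside $Q$. Making this last reduction go through cleanly --- reconciling the domain bookkeeping built into \irref{ri} with the $Q$-restricted flow --- is the delicate part; everything else is bookkeeping with \irref{K}, \irref{G}, \irref{diamond}, \irref{dC} and \irref{dW}.
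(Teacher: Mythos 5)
The theorem you are proving is stated in the paper's Appendix~A without any accompanying derivation: it is cited to prior work \cite{DBLP:journals/jacm/PlatzerT20, DBLP:journals/fac/TanP21, DBLP:conf/lics/Platzer12b} and used as a black box throughout. So there is no ``paper's own proof'' to compare against, and your proposal is evaluated as a free-standing reconstruction.

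The purely modal derivations you give for \irref{band}, \irref{Kd} and \irref{dor} via \irref{K}, \irref{G} and \irref{diamond} are standard and correct, and the derivation of \irref{bdgd} by instantiating \irref{bdg} at $\lnot P(x)$ and dualising is clean. For \irref{drd}/\irref{drw}, your route through the box refinement rule and \irref{dC} is essentially right, but note that the ``basic domain monotonicity'' step $\dbox{\&Q}{P} \rightarrow \dbox{\&(R\land Q)}{P}$ is \emph{not} a primitive in the axiom list of Appendix~A: it itself must be derived (e.g.\ unfold both sides with \irref{thereAndBack_orig} and apply \irref{K}/\irref{G} with the tautology $(c_0\geq t_0\rightarrow R\land Q)\rightarrow(c_0\geq t_0\rightarrow Q)$ inside the reverse-flow box). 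Treating it as given is a small but real omission.

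The genuine gap is \irref{enclosure}, and you correctly anticipate where it lies. The \irref{dC} decomposition reducing the goal to $\dbox{\&Q}{e\geq 0}$ is the right first step, and \irref{ri} is the right tool for the residual closed invariance. But the plan ``fold $Q$ into the postcondition via \irref{thereAndBack_orig} and then apply \irref{ri}'' creates a postcondition of the form $\dbox{x'=-f(x),c_0'=-1}{(c_0\geq t_0\rightarrow Q)}\rightarrow e\geq 0$, which nests a box modality; the \irref{ri} obligations then themselves contain nested modalities and a fresh universal over endpoints $y$, and it is not at all routine to discharge them from the hypothesis $\dbox{\&(Q\land e\geq 0)}{e>0}$. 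You flag this as ``the delicate part'' but offer no concrete plan for it, and for this rule that reconciliation \emph{is} essentially the whole derivation; a workable route is likelier to avoid the \irref{thereAndBack_orig}-then-\irref{ri} composition entirely and argue via \irref{uniqp}/\irref{cont}-style reasoning on the reversed flow, or to use a domain-constrained variant of real induction directly. As written, the enclosure case remains an acknowledged but unresolved gap.
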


\begin{remark}
    This article adopts ``rich-test'' $\dL$ which allows domain constraints $Q$ to be general $\dL$ formulas with modalities rather than just first order formulas of arithmetic unless explicitly restricted otherwise. Thus one should be cautious when employing previous axiomatization \cite{DBLP:journals/jacm/PlatzerT20, DBLP:journals/fac/TanP21} and ensure that they are still sound. Indeed, while earlier works stated the soundness of such axioms under the assumption of ``poor-test'' $\dL$, the proofs are more general and extend to ``rich-test'' $\dL$. 
\end{remark}

This concludes the brief overview of $\dL$'s proof calculus that will be needed for this paper. The usual FOL proof rules are listed below for completeness \cite{DBLP:journals/jar/Platzer08}.\\
\quad{}
\begin{calculuscollection}
\begin{calculus}
    \cinferenceRule[notl|$\lnot$\leftrule]{$\lnot$ left}
    {\linferenceRule[sequent]
      {\lsequent[L]{}{\asfml}}
      {\lsequent[L]{\lnot \asfml}{}}
    }{}%
    \cinferenceRule[andl|$\land$\leftrule]{$\land$ left}
    {\linferenceRule[sequent]
      {\lsequent[L]{\asfml , \bsfml}{}}
      {\lsequent[L]{\asfml \land \bsfml}{}}
    }{}%
    \cinferenceRule[orl|$\lor$\leftrule]{$\lor$ left}
    {\linferenceRule[sequent]
      {\lsequent[L]{\asfml}{}
        & \lsequent[L]{\bsfml}{}}
      {\lsequent[L]{\asfml \lor \bsfml}{}}
    }{}%
    \cinferenceRule[notr|$\lnot$\rightrule]{$\lnot$ right}
    {\linferenceRule[sequent]
      {\lsequent[L]{\asfml}{}}
      {\lsequent[L]{}{\lnot \asfml}}
    }{}%
    \cinferenceRule[andr|$\land$\rightrule]{$\land$ right}
    {\linferenceRule[sequent]
      {\lsequent[L]{}{\asfml}
        & \lsequent[L]{}{\bsfml}}
      {\lsequent[L]{}{\asfml \land \bsfml}}
    }{}%
    \cinferenceRule[cut|cut]{cut}
    {\linferenceRule[sequent]
      {\lsequent[L]{}{\cusfml}
      &\lsequent[L]{\cusfml}{}}
      {\lsequent[L]{}{}}
    }{}%
    \cinferenceRule[orr|$\lor$\rightrule]{$\lor$ right}
    {\linferenceRule[sequent]
      {\lsequent[L]{}{\asfml, \bsfml}}
      {\lsequent[L]{}{\asfml \lor \bsfml}}
    }{}%
\end{calculus}
\qquad
\begin{calculus}
    \cinferenceRule[implyl|$\limply$\leftrule]{$\limply$ left}
    {\linferenceRule[sequent]
      {\lsequent[L]{}{\asfml}
        & \lsequent[L]{\bsfml}{}}
      {\lsequent[L]{\asfml \limply \bsfml}{}}
    }{}%
    \cinferenceRule[alll|$\forall$\leftrule]{$\lforall{}{}$ left instantiation}
    {\linferenceRule[sequent]
      {\lsequent[L]{p(\astrm)}{}}
      {\lsequent[L]{\lforall{x}{p(x)}}{}}
        \qquad{}\qquad{}
    }{arbitrary term $\astrm$}%
    \cinferenceRule[existsl|$\exists$\leftrule]{$\lexists{}{}$ left}
    {\linferenceRule[sequent]
      {\lsequent[L]{p(y)} {}}
      {\lsequent[L]{\lexists{x}{p(x)}} {}}
    }{\m{y\not\in\Gamma{,}\Delta{,}\lexists{x}{p(x)}}}%
    \cinferenceRule[implyr|$\limply$\rightrule]{$\limply$ right}
    {\linferenceRule[sequent]
      {\lsequent[L]{\asfml}{\bsfml}}
      {\lsequent[L]{}{\asfml \limply \bsfml}}
    }{}%
    \cinferenceRule[allr|$\forall$\rightrule]{$\lforall{}{}$ right}
    {\linferenceRule[sequent]
      {\lsequent[L]{}{p(y)}}
      {\lsequent[L]{}{\lforall{x}{p(x)}}}
    }{\m{y\not\in\Gamma{,}\Delta{,}\lforall{x}{p(x)}}}%
    \cinferenceRule[existsr|$\exists$\rightrule]{$\lexists{}{}$ right}
    {\linferenceRule[sequent]
      {\lsequent[L]{}{p(\astrm)}}
      {\lsequent[L]{}{\lexists{x}{p(x)}}}
    }{arbitrary term $\astrm$}%

    \cinferenceRule[id|id]{identity}
    {\linferenceRule[sequent]
      {\lclose}
      {\lsequent[L]{\asfml}{\asfml}}
    }{}%
\end{calculus}
\end{calculuscollection}

\section{Derived axioms and proof rules}
\label{app: derivations}
This section proves \rref{lem:lemma axioms} and subsequently \rref{thm:axioms for step liveness}. The following lemma proves useful properties of constant assumptions and a diamond analog of axiom \irref{thereAndBack_orig}. 

\begin{lemma}[{\cite[Appendix~A.2]{DBLP:journals/jacm/PlatzerT20}}]
    \label{lem: vacuous axioms}
    The following axioms/proof rules are derivable and thus sound, where $R(y)$ is a $\dL$ formula only depending on its free variables $y$ which has no differential equation in $x' = f(x)$.
    
    \begin{calculus}
    \cinferenceRule[dWd|{dW$\didia{\cdot}$}]{Diamond differential weakening}
        {
        \linferenceRule[impl]
        {\ddiamond{x' = f(x) \& Q}{P}}
        {\ddiamond{x' = f(x) \& Q}{\left(P \land Q\right)}}
        }{}
    \cinferenceRule[V_1|V]{vacuous}
    {\linferenceRule[sequent]
        {\lsequent{\Gamma}{\dbox{x' = f(x) \& Q \land R(y)}{P}}}
        {\lsequent{\Gamma, R(y)}{\dbox{x' = f(x) \& Q}{P}}}
    }{}
    \cinferenceRule[V_2|V]{vacuous}
    {\linferenceRule[sequent]
        {\lclose}
        {\lsequent{\Gamma, \ddiamond{x' = f(x) \& Q}{(P \land R(y))}}{R(y)}}
    }{}
    \cinferenceRule[V_3|V]{vacuous}
    {\linferenceRule[sequent]
        {\lclose}
        {\lsequent{R(y), \ddiamond{x' = f(x) \& Q}{P}}{\ddiamond{x' = f(x) \& Q}{(P \land R(y))}}}
    }{}
    \cinferenceRule[thereAndBack_d|{$\langle\&\rangle$}]{there and back diamond}
       {
       \linferenceRule[equiv]
       {\exists t_0 {=} c_0 \ddiamond{x' = \theta}{\left(\phi \land \dbox{x' = -\theta}{\left(c_0 \geq t_0 \rightarrow \chi\right)}\right)}}
        {\ddiamond{x' = \theta \& \chi}{\phi}}
        \qquad
       }
       {}
\end{calculus}
\end{lemma}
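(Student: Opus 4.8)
The plan is to derive each of the five items from $\dL$'s base axiomatization (\rref{thm: base axiomatization of dL}, \rref{thm: bounded differential ghost + differential refinement}) by trading diamond modalities for boxes through the \irref{diamond} axiom $\ddiamond{\alpha}{P}\leftrightarrow\lnot\dbox{\alpha}{\lnot P}$ and then using the box principles \irref{G}, \irref{K}, \irref{band}, \irref{dW}, \irref{dC}, the vacuity axiom \irref{V}, and for the last item the box ``there-and-back'' axiom \irref{thereAndBack_orig}. The recurring observation is that a formula $R(y)$ whose free variables $y$ are not bound by $x'=f(x)$ is automatically an ODE invariant: the \irref{V} instance $R(y)\rightarrow\dbox{x'=f(x)\&Q}{R(y)}$ is sound, and dually, by contraposition through \irref{diamond}, $\ddiamond{x'=f(x)\&Q}{R(y)}\rightarrow R(y)$.

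For \irref{dWd}, by \irref{diamond} it suffices to prove $\dbox{x'=f(x)\&Q}{\lnot(P\land Q)}\rightarrow\dbox{x'=f(x)\&Q}{\lnot P}$; as \irref{dW} yields $\dbox{x'=f(x)\&Q}{Q}$, conjoining with the antecedent box by \irref{band} gives $\dbox{x'=f(x)\&Q}{(Q\land\lnot(P\land Q))}$, and $Q\land\lnot(P\land Q)\rightarrow\lnot P$ is propositionally valid so \irref{G} and \irref{K} close it. For \irref{V_1}, apply \irref{dC} with cut formula $R(y)$: the first premise $\Gamma,R(y)\vdash\dbox{x'=f(x)\&Q}{R(y)}$ is the \irref{V} instance above, and the second premise $\Gamma,R(y)\vdash\dbox{x'=f(x)\&(Q\land R(y))}{P}$ is the hypothesis weakened by $R(y)$. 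For \irref{V_2}, \irref{Kd} (using the valid $P\land R(y)\rightarrow R(y)$) turns the antecedent diamond into $\ddiamond{x'=f(x)\&Q}{R(y)}$, which collapses to $R(y)$ by the dual of \irref{V} noted above. For \irref{V_3}, \irref{V} gives $\dbox{x'=f(x)\&Q}{R(y)}$ from $R(y)$, and the general modal step $\dbox{\alpha}{R}\land\ddiamond{\alpha}{P}\rightarrow\ddiamond{\alpha}{(P\land R)}$ — established by \irref{diamond}, \irref{band}, \irref{K}, \irref{G} just as in \irref{dWd} — produces the conclusion.

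For \irref{thereAndBack_d}, instantiate \irref{thereAndBack_orig} with postcondition $\lnot\phi$ and negate both sides. Pushing the negation inward with \irref{diamond} turns $\lnot\forall t_0{=}c_0$ into $\exists t_0{=}c_0$, the outer $\lnot\dbox{x'=\theta}{}$ into $\ddiamond{x'=\theta}{}$, and $\lnot\big(\dbox{x'=-\theta}{(c_0\geq t_0\rightarrow\chi)}\rightarrow\lnot\phi\big)$ into $\dbox{x'=-\theta}{(c_0\geq t_0\rightarrow\chi)}\land\phi$, which is exactly the right-hand side of \irref{thereAndBack_d}; the left-hand side is $\ddiamond{x'=\theta\&\chi}{\phi}$ by \irref{diamond}. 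One must honour the clock conventions of \irref{thereAndBack_orig}: the forward ODE carries $c_0'=1$, its reversal $x'=-\theta$ carries $c_0'=-1$, and the guard $c_0\geq t_0$ (with $t_0$ a ghost recording the initial clock value) is what certifies that the reversed trajectory has not passed the original state, so that $\dbox{x'=-\theta}{(c_0\geq t_0\rightarrow\chi)}$ indeed says ``the forward trajectory stayed in $\chi$''.

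I expect the main obstacle to be this last derivation: one must read the scoping in \irref{thereAndBack_orig} correctly (the reversal box governs only the guarded $\chi$, not $\phi$) so that a single pass of \irref{diamond} dualizes it into the claimed diamond form, and one must track the ghost clock $t_0$ precisely across the forward/backward composition. Everything else reduces to routine propositional and modal bookkeeping with the axioms of \rref{thm: base axiomatization of dL} and \rref{thm: bounded differential ghost + differential refinement}.
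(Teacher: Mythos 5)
Your proposal is correct and, for \irref{dWd} and \irref{thereAndBack_d}, essentially matches the paper's derivation (the paper uses the already-derived \irref{Kd} for \irref{dWd} where you unfold the same reasoning through \irref{diamond}, \irref{dW}, \irref{band}, \irref{G}, \irref{K}; the \irref{thereAndBack_d} derivation by negating \irref{thereAndBack_orig} is the same in both, and your reading of the scoping — the reversal box governs only the guarded $\chi$, with $\phi$ outside — is the correct one and is what makes the dualization line up). For the three \irref{V}-named rules your route differs somewhat: the paper only gives an explicit derivation for the last one, \irref{V_3}, and cites earlier work for \irref{V_1} and \irref{V_2}, whereas you supply all three. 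Moreover your \irref{V_3} derivation goes through the purely modal principle $\dbox{\alpha}{R}\land\ddiamond{\alpha}{P}\rightarrow\ddiamond{\alpha}{(P\land R)}$ (derived from \irref{diamond}, \irref{band}, \irref{K}, \irref{G}), while the paper instead refines the domain constraint to $Q\land R(y)$ using \irref{drd} with \irref{V}, then applies \irref{dWd} to surface $R(y)$ in the postcondition, and finally strips the added constraint with \irref{drw}. Both are sound; your modal route is more elementary and makes no detour through the ODE-specific refinement axioms, while the paper's phrasing exhibits $R(y)$ directly as a valid evolution-domain strengthening, which fits the uniform treatment of domain constraints elsewhere in the development. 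Your \irref{V_1} derivation via \irref{dC} with cut formula $R(y)$ and your \irref{V_2} derivation via \irref{Kd} plus the contrapositive of \irref{V} are both standard and correct.
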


\begin{proof}
    Axiom \irref{dWd} can be derived as follows
    \begin{sequentdeduction}
    \linfer[Kd]
    {\linfer[K]
        {\linfer[dW]
        {\lclose}
        {\lsequent{}{\dbox{x' = f(x) \& Q}{Q}}}
        }
        {\lsequent{}{\dbox{x' = f(x) \& Q}{\left(P \rightarrow P \land Q\right)}}}
    }
    {\lsequent{}{\ddiamond{x' = f(x) \& Q}{P} \rightarrow \ddiamond{x' = f(x) \& Q}{\left(P \land Q\right)}}}
    \end{sequentdeduction}
    The last proof rule labeled as \irref{V} can be derived using \irref{dWd}
    \begin{sequentdeduction}
    \linfer[drd+V]
        {
            \linfer[dWd]
                {   
                    \linfer[drw]
                        {\lclose}
                    {\lsequent{\ddiamond{x' = f(x) \& Q \land R(y)}{(P \land R(y))}}{\ddiamond{x' = f(x) \& Q}{(P \land R(y))}}}
                }
            {\lsequent{\ddiamond{x' = f(x) \& Q \land R(y)}{P}}{\ddiamond{x' = f(x) \& Q}{(P \land R(y))}}}
        }
    {\lsequent{R(y), \ddiamond{x' = f(x) \& Q}{P}}{\ddiamond{x' = f(x) \& Q}{(P \land R(y))}}}
\end{sequentdeduction}
    Axiom \irref{thereAndBack_d} can be derived directly from axioms \irref{thereAndBack_orig+diamond}, and the remaining proof-rules have been derived in earlier works \cite[Appendix~A.2]{DBLP:journals/jacm/PlatzerT20}.
\end{proof}

Axiom \irref{dWd} asserts that domain constraints are always satisfied along the flow, the next three proof rules assert that the truth of constant properties remain unchanged along the ODE flows, all special cases of axiom \irref{V} \cite[Appendix~A.2]{DBLP:journals/jacm/PlatzerT20} and thus have the same name. Similar to earlier works \cite{DBLP:journals/jacm/PlatzerT20}, manipulations of constant properties in derivations will be abbreviated with \irref{V}. Axiom \irref{thereAndBack_d} is the diamond analog of \irref{thereAndBack_orig}, similar to \irref{thereAndBack_orig}, $c_0' = 1$ is a clock variable in $x' = f(x)$ and $t_0$ is fresh. 
\begin{proof}[Proof of \rref{lem:lemma axioms}]
    \irref{thereAndBack}: Suppose for the sake of contradiction that the claim was false, there would be some state along the flow of $x' = f(x)$ such that reversing the flow does not return to the original state where $P$ is true. But this directly contradicts axiom \irref{dadj}, which says that it is always possible to reach the initial state by following the reverse flow.

\begin{sequentdeduction}
    \linfer[implyr+diamond+notr]
    {\linfer[Kd+G]
      {\linfer[B]{
        \linfer[existsl]{
            \linfer[cut+Kd]{
          \linfer[V]
            {  
             \linfer[dadj]{
                \linfer[V]
                 {\linfer[diamond+notl+id]
                 {\lclose}
                 {\lsequent{P(x), \ddiamond{y' = -f(y) \& Q}{P(y)},  \dbox{y' = -f(y) \& Q}{\neg P(y)}}{\bot}}
                 }
                 {\lsequent{P(x), \ddiamond{y' = -f(y) \& Q}{x = y},  \dbox{y' = -f(y) \& Q}{\neg P(y)}}{\bot}}
             }
             {\lsequent{P(x), \ddiamond{x' = f(x) \& Q}{x = y},  \dbox{y' = -f(y) \& Q}{\neg P(y)}}{\bot}}
            }
            {\lsequent{P(x), \ddiamond{x' = f(x) \& Q}{(x = y \land \dbox{y' = -f(y) \& Q}{\neg P(y)}})} {\bot}}
            &
            {\text{\textcircled{1}}}
       }
       {\lsequent{P(x), \ddiamond{x' = f(x) \& Q}{(x = y \land \dbox{x' = -f(x) \& Q}{\neg P(x)}})} {\bot}}
        }
        {\lsequent{P(x), \exists y\ddiamond{x' = f(x) \& Q}{(x = y \land \dbox{x' = -f(x) \& Q}{\neg P(x)}})} {\bot}}
      }
      {\lsequent{P(x), \ddiamond{x' = f(x) \& Q}{(\exists y (x = y) \land \dbox{x' = -f(x) \& Q}{\neg P(x)}})} {\bot}}   
      }
    {\lsequent{P(x), \ddiamond{x' = f(x) \& Q}{\dbox{x' = -f(x) \& Q}{\neg P(x)}}} {\bot}}
    }
    {\lsequent{} {P(x) \limply \dbox{x' = f(x) \& Q}{\ddiamond{x' = -f(x) \& Q}{P(x)}}}}
\end{sequentdeduction}
The open premise resulting from a cut with $x = y \land \dbox{x' = f(x) \& Q}{\neg P(x)} \rightarrow\dbox{y' = f(y) \& Q}{\neg P(y)}$ is
\[\text{\textcircled{1}} \equiv \lsequent{x = y, \dbox{x' = f(x) \& Q}{\neg P(x)}} {\dbox{y' = f(y) \& Q}{\neg P(y)}}\]

To complete the proof of \irref{thereAndBack}, premise \textcircled{1} needs to be resolved.
\begin{sequentdeduction}
\linfer[diamond+notr]
    {\linfer[G+Kd]
        {\linfer[B]
            {\linfer[existsl]
                {\linfer[Kd]
                {\linfer[V]{
                    \linfer[dadj]{
                        \linfer[V+Kd]{
                            \linfer[dadj]{
                                \linfer[V]{
                                    \linfer[Kd]{
                                        \linfer[diamond+notl+id]{
                                            \lclose
                                        }
                                        {\lsequent{x = y, \dbox{x' = f(x) \& Q}{\neg P(x)}, \ddiamond{x' = f(x) \& Q}{P(x)}} {\bot}}
                                    }
                                    {\lsequent{x = y, \dbox{x' = f(x) \& Q}{\neg P(x)}, \ddiamond{x' = f(x) \& Q}{(x = z \land P(z))}} {\bot}}
                                }
                            {\lsequent{x = y, \dbox{x' = f(x) \& Q}{\neg P(x)}, \ddiamond{x' = f(x) \& Q}{x = z}, P(z)} {\bot}}
                            }
                            {\lsequent{x = y, \dbox{x' = f(x) \& Q}{\neg P(x)}, \ddiamond{z' = -f(z) \& Q}{z = x}, P(z)} {\bot}}
                        }
                        {\lsequent{x = y, \dbox{x' = f(x) \& Q}{\neg P(x)}, \ddiamond{z' = -f(z) \& Q}{z = y}, P(z)} {\bot}}
                    }
                    {\lsequent{x = y, \dbox{x' = f(x) \& Q}{\neg P(x)}, \ddiamond{y' = f(y) \& Q}{y = z}, P(z)} {\bot}}
                    }
                    {\lsequent{x = y, \dbox{x' = f(x) \& Q}{\neg P(x)}, \ddiamond{y' = f(y) \& Q}{(y = z \land P(z))}} {\bot}}
                }    
                {\lsequent{x = y, \dbox{x' = f(x) \& Q}{\neg P(x)}, \ddiamond{y' = f(y) \& Q}{(y = z \land P(y))}} {\bot}}
                }
                {\lsequent{x = y, \dbox{x' = f(x) \& Q}{\neg P(x)}, \exists z\ddiamond{y' = f(y) \& Q}{(y = z \land P(y))}} {\bot}}
            }
        {\lsequent{x = y, \dbox{x' = f(x) \& Q}{\neg P(x)}, \ddiamond{y' = f(y) \& Q}{(\exists z (y = z) \land P(y))}} {\bot}}
        }
    {\lsequent{x = y, \dbox{x' = f(x) \& Q}{\neg P(x)}, \ddiamond{y' = f(y) \& Q}{P(y)}} {\bot}}
    }
{\lsequent {x = y, \dbox{x' = f(x) \& Q}{\neg P(x)}} {\dbox{y' = f(y) \& Q}{\neg P(y)}}}
\end{sequentdeduction}

This completes the proof of axiom \irref{thereAndBack}.\\
\\
\irref{timeFixpoint}: While there might be easier ways to prove this, the completeness axiom \irref{dinv} for differential invariants gives the difficult direction immediately.

\begin{sequentdeduction}
\linfer[implyr]
    {\linfer[implyr]
        {\linfer
            {\text{\textcircled{1}}}
        {\lsequent{t = t_0, P} {\dbox{x' = f(x), t' = 1 \& t \leq t_0}{P}}}
        &\linfer[dx]
        {\linfer[qear]
            {\lclose}
            {\lsequent{t = t_0, t \leq t_0 \rightarrow P}{P}}
        }
        {\lsequent{t = t_0, \dbox{x' = f(x), t' = 1 \& t \leq t_0}{P}} {P}}
        }
    {\lsequent{t = t_0} {{\dbox{x' = f(x), t' = 1 \& t \leq t_0}{P} \leftrightarrow P}}}
    }
{\lsequent {} {t = t_0 \rightarrow \left({\dbox{x' = f(x), t' = 1 \& t \leq t_0}{P} \leftrightarrow P}\right)}}
\end{sequentdeduction}

Premise $\text{\textcircled{1}}$ is easily proven by noting that $I \equiv t = t_0 \land P$ is a valid differential invariant of the ODE $x' = f(x), t' = 1 \& t \leq t_0$, and can therefore be proven via \irref{dinv}.

\begin{sequentdeduction}
\linfer[cut]
    {\linfer[implyl+K]
        {\lclose}
    {\lsequent{I, I \rightarrow [x' = f(x), t' = 1 \& t \leq t_0]I} {\dbox{x' = f(x), t' = 1 \& t \leq t_0}{P}}}
    &
    \linfer[dinv]
    {\lclose}
    {\lsequent{} {I \rightarrow [x' = f(x), t' = 1 \& t \leq t_0]I}}
    }
{\lsequent{t = t_0, P} {\dbox{x' = f(x), t' = 1 \& t \leq t_0}{P}}}
\end{sequentdeduction}

This completes the proof of \irref{timeFixpoint}. The following useful corollary of \irref{timeFixpoint} will be used in future derivations. 
\[t = t_0 \land \neg P \land \ddiamond{x' = f(x), t' = 1 \& Q}{P} \rightarrow \ddiamond{x' = f(x), t' = 1 \& Q}{\left(P \land t > t_0\right)}\]
Semantically this is not surprising, if $P$ is not satisfied at the initial state, then there must be some evolution along the ODE to reach a state where $P$ is true, and since $t' = 1$ is strictly increasing, such a state must also satisfy $t > t_0$. However, axiom \irref{timeFixpoint} provides a syntactic derivation of the axiom. The derivation begins with axiom \irref{B} to quantify the final time value reached, cutting in an appropriate domain constraint that captures the monotonicity of $t' = 1$ then completes the derivation with an application of axiom \irref{timeFixpoint}.
\begin{sequentdeduction}
    \linfer[implyr+Kd]{
        \linfer[B+existsl]{
           \linfer[Kd]
        {\linfer[K]
            {\linfer[V]
                {\linfer[cut]
                    {\linfer[drw+Kd]{
                        \linfer[notr]{
                            \linfer[drd+V]{
                                \linfer[diamond+notl]{
                                    \linfer[timeFixpoint]
                                        {
                                            \lclose
                                        }
                                        {\lsequent{t = t_0, \neg P}{\dbox{x' = f(x), t' = 1 \& t \leq t_0}{\neg P }}}
                                }
                                {\lsequent{t = t_0, \neg P, \ddiamond{x' = f(x), t' = 1 \& t \leq t_0}{P}}{\bot}}
                            }
                            {\lsequent{t = t_0, s \leq t_0, \neg P, \ddiamond{x' = f(x), t' = 1 \& t \leq s}{P}}{\bot}}
                        }
                        {\lsequent{t = t_0, \neg P, \ddiamond{x' = f(x), t' = 1 \& t \leq s}{P}}{s > t_0}}
                    }
                    {\lsequent{t = t_0, \neg P, \ddiamond{x' = f(x), t' = 1 \& Q \land t \leq s}{\left(P \land t = s\right)}}{s > t_0}}
                    &
                    {\text{\textcircled{2}}}
                    }
                {\lsequent{t = t_0, \neg P, \ddiamond{x' = f(x), t' = 1 \& Q}{\left(P \land t = s\right)}}{s > t_0}}
                }
            {\lsequent{t = t_0, \neg P, \ddiamond{x' = f(x), t' = 1 \& Q}{\left(P \land t = s\right)}}{\dbox{x' = f(x), t' = 1 \& Q}{s > t_0}}}
            }
        {\lsequent{t = t_0, \neg P, \ddiamond{x' = f(x), t' = 1 \& Q}{\left(P \land t = s\right)}}{\dbox{x' = f(x), t' = 1 \& Q}{(P \land t = s \rightarrow P \land t > t_0)}}}
        }
    {\lsequent{t = t_0, \neg P, \ddiamond{x' = f(x), t' = 1 \& Q}{\left(P \land t = s\right)}}{\ddiamond{x' = f(x), t' = 1 \& Q}{\left(P \land t > t_0\right)}}}
        }
    {\lsequent{t = t_0, \neg P, \ddiamond{x' = f(x), t' = 1 \& Q}{\left(P \land \exists s (t = s)\right)}}{\ddiamond{x' = f(x), t' = 1 \& Q}{\left(P \land t > t_0\right)}}}
    }
    {\lsequent{}{t = t_0 \land \neg P \land \ddiamond{x' = f(x), t' = 1 \& Q}{P} \rightarrow \ddiamond{x' = f(x), t' = 1 \& Q}{\left(P \land t > t_0\right)}}}
\end{sequentdeduction}

The open premise \textcircled{2} arising from cutting in $\ddiamond{x' = f(x), t' = 1 \& Q \land t \leq s}{(P \land t = s)}$ is:
\[\text{\textcircled{2}} \equiv \lsequent{\ddiamond{x' = f(x), t' = 1 \& Q}{\left(P \land t = s\right)}}{\ddiamond{x' = f(x), t' = 1 \& Q \land t \leq s}{\left(P \land t = s\right)}}\]
It therefore remains to prove \textcircled{2}, which follows directly from axiom \irref{thereAndBack_d} with clock variable $t' = 1$ and noting that $t \leq s \rightarrow \dbox{x' = -f(x), t' = -1}{t \leq s}$ is a valid differential invariant, we also make the following abbreviation for clarity.
\[A \equiv \ddiamond{x' = f(x), t' = 1 \& Q \land t \leq s}{\left(P \land t = s\right)}\]
\begin{sequentdeduction}
    \linfer[thereAndBack_d+existsl]
    {\linfer[dinv+Kd]
        {\linfer[thereAndBack_d+id]
            {\lclose}
        {\lsequent{t_0 = t, \ddiamond{x' = f(x), t' = 1}{\left(P \land t = s \land \dbox{x' = -f(x), t' = -1}{\left(t \geq t_0 \rightarrow Q \land t \leq s\right)}\right)}}{A}}
        }
    {\lsequent{t_0 = t, \ddiamond{x' = f(x), t' = 1}{\left(P \land t = s \land \dbox{x' = -f(x), t' = -1}{\left(t \geq t_0 \rightarrow Q\right)}\right)}}{A}}
    }
    {\lsequent{\ddiamond{x' = f(x), t' = 1 \& Q}{\left(P \land t = s\right)}}{\ddiamond{x' = f(x), t' = 1 \& Q \land t \leq s}{\left(P \land t = s\right)}}}
\end{sequentdeduction}
This completes the proof. The corollaries of \irref{timeFixpoint} are recorded below as axioms, the positive time versions (i.e. \irref{timeFixpoint_f+mont_f}) have been derived above, and the negative time versions can be derived in exactly the same fashion with $t' = -1$ instead of $t' = 1$. 

\begin{calculus}
    \cinferenceRule[timeFixpoint_f|Stuck$^+$]{time fix forward}
   {
   \linferenceRule[impl]
   {{t = t_0 \land \neg P \land \ddiamond{x' = f(x), t' = 1 \& Q}{P}}}
   {\ddiamond{x' = f(x), t' = 1 \& Q}{\left(P \land t > t_0\right)}}
   }{}

   \cinferenceRule[timeFixpoint_b|Stuck$^-$]{time fix backward}
   {
   \linferenceRule[impl]
   {{t = t_0 \land \neg P \land \ddiamond{x' = f(x), t' = -1 \& Q}{P}}}
   {\ddiamond{x' = f(x), t' = -1 \& Q}{\left(P \land t < t_0\right)}}
   }{}
\end{calculus}

Premise \textcircled{2} and its negative time version will also be useful.

\begin{calculus}
    \cinferenceRule[mont_f|Mont$^+$]{<x' = f(x), t' = 1 \& Q>(P & t = s) -> <x' = f(x), t'= 1 \& Q & t <= s>(P & t = s)}
   {
   \linferenceRule[impl]
   {{\ddiamond{x' = f(x), t' = 1 \& Q}{\left(P \land t = s\right)}}}
   {{\ddiamond{x' = f(x), t' = 1 \& Q \land t \leq s}{\left(P \land t = s \right)}}}
   }{}

   \cinferenceRule[mont_b|Mont$^-$]{<x' = f(x), t' = 1 \& Q>(P & t = s) -> <x' = f(x), t'= 1 \& Q & t <= s>(P & t = s)}
   {
   \linferenceRule[impl]
   {{\ddiamond{x' = f(x), t' = -1 \& Q}{\left(P \land t = s\right)}}}
   {{\ddiamond{x' = f(x), t' = -1 \& Q \land t \geq s}{\left(P \land t = s \right)}}}
   }{}
\end{calculus}

It is useful to note that by utilizing axiom \irref{B}, the condition $t = s$ in axioms \irref{mont_f+mont_b} can also be substituted by $t \leq s$ and $t \geq s$ respectively. 

\irref{idempotence}: The derivation of axiom \irref{idempotence} heavily relies upon axiom \irref{thereAndBack_d} to repeatedly remove domain constrains within modalities. The following abbreviation will be useful in its derivation.
\[A \equiv \dbox{x' = -f(x)}{\left(c_0 \geq t_0 \rightarrow Q\right)}\]
The proof first applies \irref{thereAndBack_d} with the clock variable $c_0$ to the antecedent followed by Skolemizing the initial time value with \irref{existsl} to the witness $t_0$. Similarly, the second application of \irref{thereAndBack_d} is applied to the succedent followed by Skolemizing the clock variable to the same witness $t_0$. Axiom \irref{Kd} then reduces the open premise to proving an implication between the inner box-modalities that arised from \irref{thereAndBack_d}.
\begin{sequentdeduction}
    \linfer[thereAndBack_d]
    {
        \linfer[existsl]{
            \linfer[thereAndBack_d]{
                \linfer[existsr]{
                    \linfer[band+Kd]{
                        \linfer[Kd]{
                            \linfer[andr+id]{
                                \linfer[]{
                                    {\text{\textcircled{1}}}
                                }
                                {\lsequent {P, A} {\dbox{x' = -f(x)}{\left(c_0 \geq t_0 \rightarrow \ddiamond{x' = f(x) \& Q}{P}\right)}}}
                            }
                            {\lsequent {P, A} {P \land A \land \dbox{x' = -f(x)}{\left(c_0 \geq t_0 \rightarrow \ddiamond{x' = f(x) \& Q}{P}\right)}}}
                        }
                        {\lsequent {t_0 = c_0, \ddiamond{x' = f(x)}{\left(P \land A\right)}} {\ddiamond{x' = f(x)}{\left(P \land A \land \dbox{x' = -f(x)}{\left(c_0 \geq t_0 \rightarrow \ddiamond{x' = f(x) \& Q}{P}\right)}\right)}}}
                    }
                    {\lsequent {t_0 = c_0, \ddiamond{x' = f(x)}{\left(P \land A\right)}} {\ddiamond{x' = f(x)}{\left(P \land \dbox{x' = -f(x)}{\left(c_0 \geq t_0 \rightarrow Q \land \ddiamond{x' = f(x) \& Q}{P}\right)}\right)}}}
                }
                {\lsequent {t_0 = c_0, \ddiamond{x' = f(x)}{\left(P \land A\right)}} {\exists s_0 = c_0\ddiamond{x' = f(x)}{\left(P \land \dbox{x' = -f(x)}{\left(c_0 \geq s_0 \rightarrow Q \land \ddiamond{x' = f(x) \& Q}{P}\right)}\right)}}}
            }
            {\lsequent {t_0 = c_0, \ddiamond{x' = f(x)}{\left(P \land A\right)}} {\ddiamond{x' = f(x) \& Q \land \ddiamond{x' = f(x) \& Q}{P}}{P}}}
        }
        {\lsequent {\exists t_0 = c_0\ddiamond{x' = f(x)}{\left(P \land \dbox{x' = -f(x)}{\left(c_0 \geq t_0 \rightarrow Q\right)}\right)}} {\ddiamond{x' = f(x) \& Q \land \ddiamond{x' = f(x) \& Q}{P}}{P}}}
    }
    {\lsequent{\ddiamond{x' = f(x) \& Q}{P}} {\ddiamond{x' = f(x) \& Q \land \ddiamond{x' = f(x) \& Q}{P}}{P}}}
\end{sequentdeduction}

The open premise \textcircled{1} can now be proven by first negating the succedent and then applying axiom \irref{mont_b}. Recall that $x' = -f(x)$ is assumed to contain the clock variable $c_0' = -1$ and therefore \irref{mont_b} is applicable with the time variable $t$ being $c_0$. 

\begin{sequentdeduction}
    \linfer[diamond+notr]
        {\linfer[mont_b]
            {\linfer[drd]
                {\linfer[thereAndBack]
                    {\linfer[diamond+notl+id]
                        {\lclose}
                    {\lsequent{\dbox{x' = -f(x) \& Q}{\ddiamond{x' = f(x) \& Q}{P}}, \ddiamond{x' = -f(x) \& Q}{\dbox{x' = f(x) \& Q}{\neg P}}}{\bot}}
                    }
                {\lsequent{P, \ddiamond{x' = -f(x) \& Q}{\dbox{x' = f(x) \& Q}{\neg P}}}{\bot}}
                }
            {\lsequent{P, \dbox{x' = -f(x)}{\left(c_0 \geq t_0 \rightarrow Q\right)}, \ddiamond{x' = -f(x) \& c_0 \geq t_0}{\dbox{x' = f(x) \& Q}{\neg P}}}{\bot}}
            }
        {\lsequent{P, \dbox{x' = -f(x)}{\left(c_0 \geq t_0 \rightarrow Q\right)}, \ddiamond{x' = -f(x)}{\left(c_0 \geq t_0 \land \dbox{x' = f(x) \& Q}{\neg P}\right)}}{\bot}}
        }
    {\lsequent{P, \dbox{x' = -f(x)}{\left(c_0 \geq t_0 \rightarrow Q\right)}} {\dbox{x' = -f(x)}{\left(c_0 \geq t_0 \rightarrow \ddiamond{x' = f(x) \& Q}{P}\right)}}}
\end{sequentdeduction}

\irref{uniqp}: Before deriving this axiom, we make the following abbreviations for brevity:
\begin{align*}
    A &\equiv \ddiamond{x' = f(x) \& Q_1}{P_1}\\
    \A &\equiv \ddiamond{x' = f(x) \& Q_1 \land A}{P_1}\\
    B &\equiv \ddiamond{x' = f(x) \& Q_2}{P_2}\\
    \B &\equiv \ddiamond{x' = f(x) \& Q_2 \land B}{P_2}\\
    C &\equiv \ddiamond{x' = f(x) \& Q_1 \land Q_2}{\left(P_1 \land B\right)} \lor \ddiamond{x' = f(x) \& Q_1 \land Q_2}{\left(P_2 \land A\right)}
\end{align*}

\begin{sequentdeduction}
    \linfer[implyr+idempotence]
    {\linfer[Kd+uniq]
        {
            \linfer[dor]
            {
                \linfer[dWd+drw]
                {
                \linfer[id]
                {\lclose}
                {\lsequent{\ddiamond{x' = f(x) \& Q_1 \land Q_2}{\left(P_1 \land B\right)} \lor \ddiamond{x' = f(x) \& Q_1 \land Q_2}{\left(P_2 \land A\right)}}{C}}
                }
            {\lsequent{\ddiamond{x' = f(x) \& Q_1 \land Q_2 \land A \land B}{P_1} \lor \ddiamond{x' = f(x) \& Q_1 \land Q_2 \land A \land B}{P_2}}{C}}
            }
        {\lsequent{\ddiamond{x' = f(x) \& Q_1 \land Q_2 \land A \land B}{\left(P_1 \lor P_2\right)}} {C}}
        }
    {\lsequent{\A, \B}{\ddiamond{x' = f(x) \& Q_1 \land Q_2}{\left(P_1 \land B\right)} \lor \ddiamond{x' = f(x) \& Q_1 \land Q_2}{\left(P_2 \land A\right)}}}
    }
    {\lsequent{}{A \land B \rightarrow \ddiamond{x' = f(x) \& Q_1 \land Q_2}{\left(P_1 \land B\right)} \lor \ddiamond{x' = f(x) \& Q_1 \land Q_2}{\left(P_2 \land A\right)}}}
\end{sequentdeduction}

\irref{ivt}: Classically, the intermediate value theorem is usually proven directly from the completeness of $\R$ (and indeed they are equivalent), so it might be expected that axiom \irref{ri} is utilized. Indeed, the derivation relies upon the derived proof rule \irref{enclosure}, which itself relies on \irref{ri}. The derivation begins by contradiction, negating the succedent and applying \irref{enclosure} proves that $e < 0$ always holds along the flow under the domain constraint $t < \tau$. Note that $e \neq 0$ under the domain constraint of $e \leq 0$ reduces down to $e < 0$ by \irref{K}. 

\begin{sequentdeduction}
    \linfer[implyr+diamond]
    {\linfer[notr+K]{
        \linfer[enclosure]
        {\linfer
            {\text{\textcircled{1}}}
        {\lsequent{e \leq 0, \ddiamond{x' = f(x), t' = 1 \& Q}{\left(t = \tau \land e > 0\right)}, \dbox{x' = f(x), t' = 1 \& Q \land t < \tau}{e < 0}}{\bot}}
        }
        {\lsequent{e \leq 0, \ddiamond{x' = f(x), t' = 1 \& Q}{\left(t = \tau \land e > 0\right)}, \dbox{x' = f(x), t' = 1 \& Q \land t < \tau \land e \leq 0}{e < 0}}{\bot}}
        }
        {\lsequent{e \leq 0, \ddiamond{x' = f(x), t' = 1 \& Q}{\left(t = \tau \land e > 0\right)}}{\neg\dbox{x' = f(x), t' = 1 \& Q \land t < \tau \land e \leq 0}{e \neq 0}}}
    }
    {\lsequent{}{e \leq 0 \land \ddiamond{x' = f(x), t' = 1 \& Q}{\left(t = \tau \land e > 0\right)} \rightarrow \ddiamond{x' = f(x), t' = 1 \& Q \land t < \tau \land e \leq 0}{e = 0}}}
\end{sequentdeduction}

Continuing from \textcircled{1}, the derivation crucially relies on \irref{dadj} which flows along the reverse ODE $x' = -f(x), t' = -1$ to reach a state where $t_0 < t < \tau \land e > 0$, with $t_0$ being the initial time value. Semantically, we have found a flow along $x' = f(x), t' = 1$ to a state where both $t < \tau$ and $e > 0$ are true, contradicting the fact that $e < 0$ holds along the flow while $t < \tau$. Synthesizing the argument above within $\dL$ first requires extensive use of \irref{B} to instantiate extra variables which allows us to apply axiom \irref{cont}, flowing to a state where both $t < \tau$ and $e > 0$ hold. A final application of \irref{uniqp} then gives the desired contradictions. Again for brevity, we first make the following abbreviation.
\begin{align*}
    \alpha(x, t) &\equiv x' = f(x), t' = 1\\
    -\alpha(x, t) &\equiv x' = -f(x), t' = -1\\
    A(x, t, \tau) &\equiv \dbox{\alpha(x, t) \& Q(x, t) \land t < \tau}{e(x, t) < 0}\
\end{align*}
\begin{sequentdeduction}
    \linfer[Kd]
    {\linfer[B+existsl]
        {\linfer[V+dadj]
            {\linfer[V]
                {\linfer[cont+timeFixpoint_b]
                {\linfer[uniqp+orl]
                    {\text{\textcircled{1}}
                    &
                    \text{\textcircled{2}}
                    }
                {\lsequent{e(x, t) \leq 0, A(x, t, s), e(y, \tau) > 0, \ddiamond{-\alpha(y, \tau) \& Q(y, \tau)}{\left(x = y \land \tau = t\right)}, \ddiamond{-\alpha(y, \tau) \& e(y, \tau) > 0}{\left(\tau < s\right)}}{\bot}}
                }
                {\lsequent{e(x, t) \leq 0, A(x, t, s), e(y, \tau) > 0, y = z, \tau = s, \ddiamond{-\alpha(y, \tau) \& Q(y, \tau)}{\left(x = y \land \tau = t\right)}}{\bot}}
                }
            {\lsequent{e(x, t) \leq 0, A(x, t, \tau), e(y, \tau) > 0, y = z, \tau = s, \ddiamond{-\alpha(y, \tau) \& Q(y, \tau)}{\left(x = y \land \tau = t\right)}}{\bot}}
            }
        {\lsequent{e(x, t) \leq 0, \ddiamond{\alpha(x, t) \& Q(x, t)}{\left(x = y \land y = z \land t = \tau \land \tau = s \land e(x, t) > 0\right)}, A(x, t, \tau)}{\bot}}
        }
        {\lsequent{e(x, t) \leq 0, \ddiamond{\alpha(x, t) \& Q(x, t)}{\left(\exists y = x \land \exists z = x \land \exists s = \tau \land t = \tau \land e(x, t) > 0\right)}, A(x, t, \tau)}{\bot}}
    }
    {\lsequent{e(x, t) \leq 0, \ddiamond{x' = f(x), t' = 1 \& Q(x, t)}{\left(t = \tau \land e(x, t) > 0\right)}, A(x, t, \tau)}{\bot}}
\end{sequentdeduction}
Where the open premises arising from \irref{uniqp+orl} are
\begin{align*}
    \text{\textcircled{1}} &\equiv \lsequent{e(x, t) \leq 0, \ddiamond{-\alpha(y, \tau) \& Q(y, \tau)}{\left(x = y \land t = \tau \land \ddiamond{-\alpha(y, \tau) \& e(y, \tau) > 0}{\left(\tau < s\right)}\right)}}{\bot}\\
    \text{\textcircled{2}} &\equiv \lsequent{A(x, t, s), \ddiamond{-\alpha(y, \tau) \& e(y, \tau) > 0}{\left(\tau < s \land \ddiamond{-\alpha(y, \tau) \& Q(y, \tau)}{\left(x = y \land t = \tau\right)}\right)}}{\bot}
\end{align*}
Intuitively, \textcircled{1} yields a contradiction since the first diamond modality flows to a state where $e(y, \tau) = e(x, t) \leq 0$ is true, but the second diamond modality \emph{requires} $e(y, \tau) > 0$ as a domain constraint. Since the domain constraint is not satisfied, the overall formula is indeed false. For \textcircled{2}, another application of \irref{dadj} to the inner modality gives a flow that contradicts $A(x, t, s)$. We deal with \textcircled{1} first:

\begin{sequentdeduction}
    \linfer[V]
        {\linfer[Kd+V]
            {\linfer[diamond+notl]
                {\linfer[dx]
                {\linfer[implyr]
                    {\linfer[qear]
                        {\linfer
                            {\lclose}
                            {\lsequent{\bot}{\tau \geq s \land \dbox{-\alpha(y, \tau) \& e(y, \tau) > 0}{\left(\tau \geq s\right)}}}
                        }
                        {\lsequent{e(y, \tau) \leq 0, e(y, \tau) > 0}{\tau \geq s \land \dbox{-\alpha(y, \tau) \& e(y, \tau) > 0}{\left(\tau \geq s\right)}}}
                    }
                    {\lsequent{e(y, \tau) \leq 0}{e(y, \tau) > 0 \rightarrow \tau \geq s \land \dbox{-\alpha(y, \tau) \& e(y, \tau) > 0}{\left(\tau \geq s\right)}}}
                }
                {\lsequent{e(y, \tau) \leq 0}{\dbox{-\alpha(y, \tau) \& e(y, \tau) > 0}{\left(\tau \geq s\right)}}}
            }
            {\lsequent{e(y, \tau) \leq 0, \ddiamond{-\alpha(y, \tau) \& e(y, \tau) > 0}{\left(\tau < s\right)}}{\bot}}
        }
        {\lsequent{e(x, t) \leq 0, \ddiamond{-\alpha(y, \tau) \& Q(y, \tau)}{\left(e(y, \tau) \leq 0 \land \ddiamond{-\alpha(y, \tau) \& e(y, \tau) > 0}{\left(\tau < s\right)}\right)}}{\bot}}
        }
    {\lsequent{e(x, t) \leq 0, \ddiamond{-\alpha(y, \tau) \& Q(y, \tau)}{\left(x = y \land t = \tau \land \ddiamond{-\alpha(y, \tau) \& e(y, \tau) > 0}{\left(\tau < s\right)}\right)}}{\bot}}
\end{sequentdeduction}

Continuing with the proof of \textcircled{2}, we have:

\begin{sequentdeduction}
    \linfer[V+dWd]
    {\linfer[Kd+V]
        {\linfer[dadj]
            {\linfer[mont_f]
                {\linfer[drd+V]
                    {\linfer[diamond]
                        {\linfer[notl+id]
                            {\lclose}
                            {\lsequent{A(x, t, s), \neg A(x, t, s)}{\bot}}
                        }
                        {\lsequent{A(x, t, s), \ddiamond{\alpha(x, t) \& Q(x, t) \land t < s}{\left(e(x, t) > 0\right)}}{\bot}}
                    }
                    {\lsequent{A(x, t, s), e(y, \tau) > 0, \tau < s, \ddiamond{\alpha(x, t) \& Q(x, t) \land t \leq \tau}{\left(x = y \land t = \tau\right)}}{\bot}}
                }
                {\lsequent{A(x, t, s), e(y, \tau) > 0, \tau < s, \ddiamond{\alpha(x, t) \& Q(x, t)}{\left(x = y \land t = \tau\right)}}{\bot}}
            }
            {\lsequent{A(x, t, s), e(y, \tau) > 0, \tau < s, \ddiamond{-\alpha(y, \tau) \& Q(y, \tau)}{\left(x = y \land t = \tau\right)}}{\bot}}
        }
        {\lsequent{\ddiamond{-\alpha(y, \tau) \& e(y, \tau) > 0}{\left(A(x, t, s) \land e(y, \tau) > 0 \land \tau < s \land \ddiamond{-\alpha(y, \tau) \& Q(y, \tau)}{\left(x = y \land t = \tau\right)}\right)}}{\bot}}
    }
    {\lsequent{A(x, t, s), \ddiamond{-\alpha(y, \tau) \& e(y, \tau) > 0}{\left(\tau < s \land \ddiamond{-\alpha(y, \tau) \& Q(y, \tau)}{\left(x = y \land t = \tau\right)}\right)}}{\bot}}
\end{sequentdeduction}
This concludes the proof of \rref{lem:lemma axioms}. Note that for \irref{ivt}, axiom \irref{B} allows us to relax the condition of $t = \tau$ in the antecedent to $t \leq \tau$ without loss of provability. 
\end{proof}

\begin{proof}[Proof of \rref{thm:axioms for step liveness}]
    \irref{dualRight}: To derive this axiom, we first utilize \irref{bdg} to cut in the formula 
\[\ddiamond{x'= f(x), t' = 1 \& t \leq \tau}{t = \tau}\] 
after which an application of \irref{drd} will give the desired outcome.

\begin{sequentdeduction}
    \linfer[implyr+cut]
    {
    \linfer[]
        {\text{\textcircled{1}}}
    {\lsequent{t \leq \tau, \dbox{x' = f(x), t' = 1 \& t \leq \tau}{B(x)}} {\ddiamond{x' = f(x), t' = 1 \& t \leq \tau}{t = \tau}}}
    &
    {\text{\textcircled{2}}}
    }
    {\lsequent{}{t \leq \tau \land \dbox{x' = f(x), t' = 1 \& t \leq \tau}{B(x)} \rightarrow \ddiamond{x' = f(x), t' = 1\& B(x)}{t = \tau}}}
\end{sequentdeduction}
Where the premises arising from \irref{cut} are
\begin{align*}
    \text{\textcircled{1}} &\equiv \lsequent{t \leq \tau, \dbox{x' = f(x), t' = 1 \& t \leq \tau}{B(x)}} {\ddiamond{x' = f(x), t' = 1 \& t \leq \tau}{t = \tau}}\\
    \text{\textcircled{2}} &\equiv \lsequent{\dbox{x' = f(x), t' = 1 \& t \leq \tau}{B(x)}, \ddiamond{x' = f(x), t' = 1 \& t \leq \tau} {t = \tau}} {\ddiamond{x' = f(x), t' = 1\& B(x)}{t = \tau}}
\end{align*}
To prove \textcircled{1}, axiom \irref{bdg} reduces the problem to proving $\dbox{x' = f(x), t' = 1 \& t \leq \tau}{\norm{x}^2 \leq p(t)}$, where $p(t)$ is some polynomial in terms of $t$. Since $B(x)$ is a bounded set, the $\folr$ formula $\exists D \forall x \left(B(x) \rightarrow \norm{x}^2 \leq D\right)$ is valid, and therefore $p(t)$ can be simply be chosen to be $p(t) \equiv D$ for some $D \in \Q^+$ a witness of the $\folr$ formula. Expressing this argument in sequent form gives:

\begin{sequentdeduction}
    \linfer[cut+existsl+qear]
    {
        \linfer[bdgd]
            {
                {\text{\textcircled{3}}}
                &
                \linfer[evolved]
                    {\lclose}
                {\lsequent{t \leq \tau}{\ddiamond{t' = 1 \& t \leq \tau}{t = \tau}}}
            }
        {\lsequent{t \leq \tau, \forall x \left(B(x) \rightarrow \norm{x}^2 \leq D\right), \dbox{x' = f(x), t' = 1 \& t \leq \tau}{B(x)}} {\ddiamond{x' = f(x), t' = 1 \& t \leq \tau}{t = \tau}}}
    }
    {\lsequent{t \leq \tau, \dbox{x' = f(x), t' = 1 \& t \leq \tau}{B(x)}} {\ddiamond{x' = f(x), t' = 1 \& t \leq \tau}{t = \tau}}}
\end{sequentdeduction}

Where
\[\text{\textcircled{3}} \equiv \lsequent{\forall x \left(B(x) \rightarrow \norm{x}^2 \leq D\right), \dbox{x' = f(x), t' = 1 \& t \leq \tau}{B(x)}} {\dbox{x' = f(x), t' = 1 \& t \leq \tau}{\norm{x}^2 \leq D}}\]

\textcircled{3} is proven by first applying \irref{dC} to cut in the domain constraint $B(x)$ to the succedent, after which an application of \irref{dW} completes the proof since the formula $\forall x \left(B(x) \rightarrow \norm{x}^2 \leq D\right)$ is independent of the ODE $x' = f(x), t' = 1$.

\begin{sequentdeduction}
    \linfer[dC]
    {\linfer[dW]
    {\linfer[qear]
        {\lclose}
        {\lsequent{\forall x \left(B(x) \rightarrow \norm{x}^2 \leq D\right), t \leq \tau \land B(x)}{\norm{x}^2 \leq D}}
    }
    {\lsequent{\forall x \left(B(x) \rightarrow \norm{x}^2 \leq D\right)}{\dbox{x' = f(x), t' = 1 \& t \leq \tau \land B(x)}{\norm{x}^2 \leq D}}}
    }  
    {\lsequent{\forall x \left(B(x) \rightarrow \norm{x}^2 \leq D\right), \dbox{x' = f(x), t' = 1 \& t \leq \tau}{B(x)}} {\dbox{x' = f(x), t' = 1 \& t \leq \tau}{\norm{x}^2 \leq D}}}
\end{sequentdeduction}
The open premise \textcircled{3} has been proven and therefore so has \textcircled{1}. \textcircled{2} is now proved utilizing \irref{drd} to add in the domain constraint of $B(x)$.

\begin{sequentdeduction}
    \linfer[drd]
    {
    \linfer[drw]
    {\lclose}
    {\lsequent{\ddiamond{x' = f(x), t' = 1 \& t \leq \tau \land B(x)} {t = \tau}} {{\ddiamond{x' = f(x), t' = 1\& B(x)}{t = \tau}}}}
    }
    {\lsequent{\dbox{x' = f(x), t' = 1 \& t \leq \tau}{B(x)}, \ddiamond{x' = f(x), t' = 1 \& t \leq \tau} {t = \tau}} {\ddiamond{x' = f(x), t' = 1\& B(x)}{t = \tau}}}
\end{sequentdeduction}
This completes the proof of \irref{dualRight}.

\irref{dualLeft}: For brevity, first make the following abbreviations:
\begin{align*}
    A \equiv &\ddiamond{x' = f(x), t' = 1 \& Q}{t \geq \tau}\\
    B \equiv &\ddiamond{x' = f(x), t' = 1 \& t \leq \tau}{\neg Q}
\end{align*}
Axiom \irref{dualLeft} says that if there is some flow of the ODE $x' = f(x), t' = 1 \& Q$ where time surpasses $\tau$, then \emph{every} flow of this ODE before time $t = \tau$ will remain within the domain constraint $Q$. Alternatively, this axiom is precisely the uniqueness property of ODE flows. Consequently, our derivation will follow the classical soundness argument. If the implication is not valid, then there are two disjoint flows of the ODE, contradicting \irref{uniqp}. For brevity, we also make the following abbreviations:

\begin{sequentdeduction}
    \linfer[implyr+diamond+notr]
    {\linfer[uniqp]
        {\linfer[orl]
            {
            \linfer
            {\text{\textcircled{1}}}
            {\lsequent{\ddiamond{x' = f(x), t' = 1 \& Q \land t \leq \tau}{\left(t \geq \tau \land B\right)}} {\bot}}
            &
            \linfer[dWd]
            {
                \linfer[V]
                {\lclose}
                {\lsequent{\ddiamond{x' = f(x), t' = 1 \& Q \land t \leq \tau}{\left(\neg Q \land Q \land A\right)}} {\bot}}
            }
            {\lsequent{\ddiamond{x' = f(x), t' = 1 \& Q \land t \leq \tau}{\left(\neg Q \land A\right)}} {\bot}}
            }
        {\lsequent{\ddiamond{x' = f(x), t' = 1 \& Q \land t \leq \tau}{\left(t \geq \tau \land B\right)} \lor \ddiamond{x' = f(x), t' = 1 \& Q \land t \leq \tau}{\left(\neg Q \land A\right)}} {\bot}}
        }
    {\lsequent{\ddiamond{x' = f(x), t' = 1\& Q}{t \geq \tau}, \ddiamond{x' = f(x), t' = 1 \&t \leq \tau}{\neg Q}}{\bot}}    
    }
    {\lsequent{}{\ddiamond{x' = f(x), t' = 1\& Q}{t \geq \tau} \rightarrow \dbox{x' = f(x), t' = 1 \&t \leq \tau}{Q}}}
\end{sequentdeduction}
The right branch arising from \irref{orl} closes easily by noting $\neg Q \land Q \equiv \bot$ and applying axiom \irref{dWd}. For \textcircled{1}, $B$ says there is some flow along $x' = f(x), t' = 1 \& t \leq \tau$ reaching $\neg Q$. But since the first diamond modality already reaches a state where $t \geq \tau$, there cannot possibly be any non-trivial evolution, and therefore $Q$ must remain true, a contradiction.

\begin{sequentdeduction}
    \linfer[dWd]
    {\linfer[Kd+timeFixpoint]
        {
            \linfer[V]
            {\lclose}
            {\lsequent{\ddiamond{x' = f(x), t' = 1 \& Q \land t \leq \tau}{\left(t = \tau \land Q \land \neg Q\right)}} {\bot}}
        }
    {\lsequent{\ddiamond{x' = f(x), t' = 1 \& Q \land t \leq \tau}{\left(t = \tau \land Q \land \ddiamond{x' = f(x), t' = 1 \& t \leq \tau}{\neg Q}\right)}} {\bot}}
    }
    {\lsequent{\ddiamond{x' = f(x), t' = 1 \& Q \land t \leq \tau}{\left(t \geq \tau \land B\right)}} {\bot}}
\end{sequentdeduction}
where we used \irref{timeFixpoint} by negating the succedent, resulting in
\[t = \tau \rightarrow\left(\ddiamond{x' = f(x), t' = 1 \& t \leq \tau}{\neg Q} \leftrightarrow \neg Q\right)\]
this completes the derivation of \irref{dualLeft}.

\irref{stepEx}: Derivation of \irref{stepEx} mostly follows from axioms \irref{ivt} and \irref{dinv}. Mathematically, the boundedness requirement on $f(x)$ implies $\norm{x(t) - x_0} \leq M(t - t_0)$, essentially the multivariate mean value theorem where $x(t)$ is the flow of $x' = f(x), x(t_0) = x_0$ at time $t$. By \irref{dualRight} and \irref{ivt}, if the axiom does not hold, then there exists some point where the bound $\norm{x(t) - x_0} \leq M(t - t_0)$ is violated, resulting in a contradiction. The derivation is as follows ($\max_{y \in B[x_0, R]} \norm{f(y)}^2 \leq M^2$ abbreviates $\forall y \left(y \in B[x_0, R] \rightarrow \norm{f(y)}^2 \leq M^2\right)$).
\begin{sequentdeduction}
    \linfer[implyr+implyr]
    {\linfer[dualRight]
        {\linfer[diamond+notr]
            {
                \linfer[ivt]
                    {\linfer[cut]
                        {\text{\textcircled{1}}
                            &
                        \text{\textcircled{2}}
                        }
                    {\lsequent{\max_{y \in B[x_0, R]} \norm{f(y)}^2 \leq M^2, x = x_0, t = t_0, \ddiamond{x' = f(x), t' = 1 \& t < t_0 + \frac{R}{M} \land x \in B[x_0, R]}{\norm{x - x_0}^2 = R^2}} {\bot}}
                    }
                {\lsequent{\max_{y \in B[x_0, R]} \norm{f(y)}^2 \leq M^2, x = x_0, t = t_0, \ddiamond{x' = f(x), t' = 1 \& t \leq t_0 + \frac{R}{M}}{\norm{x - x_0}^2 > R^2}} {\bot}}
            }
        {\lsequent{\max_{y \in B[x_0, R]} \norm{f(y)}^2 \leq M^2, x = x_0, t = t_0} {\dbox{x' = f(x), t' = 1 \& t \leq t_0 + \frac{R}{M}}{x \in B[x_0, R]}}}
        }
    {\lsequent{\max_{y \in B[x_0, R]} \norm{f(y)}^2 \leq M^2, x = x_0, t = t_0} {\ddiamond{x' = f(x), t' = 1 \& x \in B[x_0, R]}{t \geq t_0 + \frac{R}{M}}}}
    }
    {\lsequent{}{\max_{y \in B[x_0, R]} \norm{f(y)}^2 \leq M^2 \rightarrow \left(x = x_0 \land t = t_0 \rightarrow \ddiamond{x' = f(x), t' = 1 \& x \in B[x_0, R]}{t \geq t_0 + \frac{R}{M}}\right)}}
\end{sequentdeduction}

Where we are cutting in the differential invariant representing the multivariate mean value theorem at the last step, giving:
\begin{align*}
    \alpha(x, t) &\equiv x' = f(x), t' = 1 \& t < t_0 + \frac{R}{M} \land x \in B[x_0, R]\\
    I(x, t) &\equiv D(x, t) \rightarrow \dbox{x' = f(x), t' = 1 \& x \in B[x_0, R] \land \max_{y \in B[x_0, R]} \norm{f(y)}^2 \leq M^2}{D(x, t)}\\
    D(x, t) &\equiv \norm{x - x_0}^2 \leq M^2(t - t_0)^2 \land t \geq t_0\\
    \text{\textcircled{1}} &\equiv \lsequent{x = x_0, t = t_0, \max_{y \in B[x_0, R]} \norm{f(y)}^2 \leq M^2, I(x, t), \ddiamond{\alpha(x, t)}{\norm{x - x_0}^2 = R^2}} {\bot}\\
    \text{\textcircled{2}} &\equiv \lsequent{} {I(x, t)}
\end{align*}

\textcircled{2} is derived first. By the completeness of differential invariants, it is suffices to establish the validity of $I(x, t)$ semantically. To show that \textcircled{2} holds semantically, let $\omega \in \S$ be some arbitrary state where $\omega \models D$. Let $\phi : [0, \tau] \to \S$ be any solution satisfying the ODE $x' = f(x), t' = 1$ with $\phi(0) = \omega$ and $\phi(t) \models x' = f(x), t' = 1 \land x \in B[x_0, R] \land \max_{y \in B[x_0, R]} \norm{f(y)}^2 \leq M^2$ for all $t \in [0, \tau]$. We want to show that $\phi(\tau) \models D$ as well. To this end, let us denote $x(t) = \phi(t)(x)$ as the trajectory of $x$ under the given ODE. The triangle inequality together with the multivariate mean value theorem gives
\[\norm{x(\tau) - x_0} \leq \norm{x(0) - x_0} + \norm{x(\tau) - x(0)} \leq M(\phi(0)(t) - t_0) + \max_{\zeta \in [t_0, \tau]}\norm{x'(\zeta)}\tau\]
Note that taking square roots implicitly used the assumption $\phi(0) \models D$ and therefore $\phi(0)(t) - t_0 \geq 0$. We will write $x_0$ (and similarly $t_0$) instead of $\phi(s)(x_0)$ for all $s \in [0, \tau]$, as $x_0, t_0$ are just constants along the given ODE, and will therefore not vary along $\phi$. Since $\phi \models x' = f(x) \land x \in B[x_0, R] \land \max_{y \in B[x_0, R]} \norm{f(y)}^2 \leq M^2$, this gives the bound $\max_{\zeta \in [t_0, \tau]}\norm{x'(\zeta)} \leq M$ and consequently the following bound on $\norm{x(\tau) - x_0}$
\begin{align*}
    \norm{x(\tau) - x_0} \leq M(\phi(0)(t) - t_0) + \max_{\zeta \in [t_0, \tau]}\norm{x'(\zeta)}\tau &\leq M(\phi(0)(t) - t_0) + M\tau\\
    &= M(\tau + \phi(0)(t) - t_0)
\end{align*}
Finally, $\phi \models t' = 1$ implies $\phi(s)(t) = s + \phi(0)(t)$ since the solution to $t' = 1$ is just $t(s) = s + t(0)$. In particular, this yields $\tau + \phi(0)(t) = \phi(\tau)(t)$, so we have
\[\norm{x(\tau) - x_0} \leq M(\tau + \phi(0)(t) - t_0) = M(\phi(\tau)(t) - t_0)\]
Squaring both sides then gives the desired claim of $\phi(\tau) \models D$, proving $I(x, t)$ to be valid. Consequently \textcircled{2} closes by a single application of axiom \irref{dinv}.

For premise \textcircled{1}:

\begin{sequentdeduction}
    \linfer[implyl]
    {\linfer[dWd]
        {\linfer[drw]
            {\linfer[Kd+qear]
            {\linfer[diamond+notl+id]
                {\lclose}
            {\lsequent{\dbox{x' = f(x), t' = 1 \& x \in B[x_0, R]}{D(x, t)}, \ddiamond{x' = f(x), t' = 1 \& x \in B[x_0, R]}{\neg D(x, t)}}{\bot}}
            }
        {\lsequent{\dbox{x' = f(x), t' = 1 \& x \in B[x_0, R]}{D(x, t)}, \ddiamond{x' = f(x), t' = 1 \& x \in B[x_0, R]}{\left(\norm{x - x_0}^2 = R^2 \land t - t_0 < \frac{R}{M}\right)}}{\bot}}}
            {\lsequent{\dbox{x' = f(x), t' = 1 \& x \in B[x_0, R]}{D(x, t)}, \ddiamond{\alpha(x, t)}{\left(\norm{x - x_0}^2 = R^2 \land t - t_0 < \frac{R}{M}\right)}}{\bot}}
        }
    {\lsequent{\dbox{x' = f(x), t' = 1 \& x \in B[x_0, R]}{D(x, t)}, \ddiamond{\alpha(x, t)}{\norm{x - x_0}^2 = R^2}}{\bot}}
    }
    {\lsequent{x = x_0, t = t_0,\max_{y \in B[x_0, R]} \norm{f(y)}^2 \leq M^2, I(x, t), \ddiamond{\alpha(x, t)}{\norm{x - x_0}^2 = R^2}} {\bot}}
\end{sequentdeduction}

This completes the derivation of \irref{stepEx}.

\irref{stepExt}: The main idea in deriving this axiom is to note that \irref{ivt} and \irref{uniqp} allows one to decompose diamond modalities into different time steps, from which the premises allow us to complete the proof. We denote the premises as \textcircled{a}, \textcircled{b}, \textcircled{c} and will indicate when they are used during the derivation, where:
\begin{align*}
    \text{\textcircled{a}} &\equiv \lsequent{\Gamma_1}{\dbox{x' = f(x), t' = 1 \& t \leq t_0}{P_1}}\\
    \text{\textcircled{b}} &\equiv \lsequent{\Gamma_2}{\dbox{x' = f(x), t' = 1 \& t \leq t_0+t_1}{P_2}}\\
    \text{\textcircled{c}} &\equiv \lsequent{t = t_0, P_1}{\Gamma_2}
\end{align*}
\begin{sequentdeduction}
    \linfer[diamond+notr]
    {\linfer[dor]
        {
        {\text{\textcircled{1}}}
        &
        {\text{\textcircled{2}}}
        }
    {\lsequent{t \leq t_0, \Gamma_1, \ddiamond{x' = f(x), t' = 1 \& t \leq t_0 + t_1}{\left((t \leq t_0 \land \neg P_1) \lor (t > t_0 \land \neg P_2)\right)}}{\bot}}
    }
    {\lsequent{t \leq t_0, \Gamma_1}{\dbox{x' = f(x), t' = 1 \& t \leq t_0 + t_1}{\left((t \leq t_0 \rightarrow P_1) \land (t > t_0 \rightarrow P_2)\right)}}}
\end{sequentdeduction}
With the open premises being
\[\text{\textcircled{1}} \equiv \lsequent{t \leq t_0, \Gamma_1, \ddiamond{x' = f(x), t' = 1 \& t \leq t_0 + t_1}{\left(t \leq t_0 \land \neg P_1\right)}}{\bot}\]
\[\text{\textcircled{2}} \equiv \lsequent{t \leq t_0, \Gamma_1, \ddiamond{x' = f(x), t' = 1 \& t \leq t_0 + t_1}{\left(t > t_0 \land \neg P_2\right)}}{\bot}\]
Premise \textcircled{1} is proven first, noting that the diamond modality directly contradicts \textcircled{a} after applying \irref{mont_f} to add in the domain constraint of $t \leq t_0$. 

\begin{sequentdeduction}
    \linfer[mont_f]
    {\linfer[drw]
        {\linfer[Kd]
            {\linfer[cut]
            {\linfer[diamond+notl+id]
                {\lclose}
            {\lsequent{t \leq t_0, \Gamma_1, \ddiamond{x' = f(x), t' = 1 \& t \leq t_0}{\neg P_1}, \dbox{x' = f(x), t' = 1 \& t \leq t_0}{P_1}}{\bot}}
            &
            {\text{\textcircled{3}}}
            }
        {\lsequent{t \leq t_0, \Gamma_1, \ddiamond{x' = f(x), t' = 1 \& t \leq t_0}{\neg P_1}}{\bot}}
            }
            {\lsequent{t \leq t_0, \Gamma_1, \ddiamond{x' = f(x), t' = 1 \& t \leq t_0}{(t \leq t_0 \land \neg P_1)}}{\bot}}
        }
    {\lsequent{t \leq t_0, \Gamma_1, \ddiamond{x' = f(x), t' = 1 \& t \leq t_0 + t_1 \land t \leq t_0}{\left(t \leq t_0 \land \neg P_1\right)}}{\bot}}
    }
    {\lsequent{t \leq t_0, \Gamma_1, \ddiamond{x' = f(x), t' = 1 \& t \leq t_0 + t_1}{\left(t \leq t_0 \land \neg P_1\right)}}{\bot}}
\end{sequentdeduction}
And \textcircled{3} is
\[\text{\textcircled{3}} \equiv \lsequent{\Gamma_1}{\dbox{x' = f(x), t' = 1 \& t \leq t_0}{P_1} \equiv \text{\textcircled{a}}}\]
In other words, \textcircled{1} derives with premise \textcircled{a}. We prove premise \textcircled{2} next, by first applying axiom \irref{ivt} on the term $e \equiv t - t_0$.

\begin{sequentdeduction}
    \linfer[ivt]
    {\linfer[uniqp+orl]
        {
            {\text{\textcircled{4}}}
            &
            {\text{\textcircled{5}}}
        }
    {\lsequent{t \leq t_0, \Gamma_1, \ddiamond{x' = f(x), t' = 1 \& t \leq t_0 + t_1}{\left(t > t_0 \land \neg P_2\right)}, \ddiamond{x' = f(x), t' = 1 \& t \leq t_0}{t = t_0}}{\bot}}
    }
    {\lsequent{t \leq t_0, \Gamma_1, \ddiamond{x' = f(x), t' = 1 \& t \leq t_0 + t_1}{\left(t > t_0 \land \neg P_2\right)}}{\bot}}
\end{sequentdeduction}
Where the open premises are the ones arising from the disjunction in \irref{uniqp}, we have:
\begin{align*}
    \text{\textcircled{4}} &\equiv \lsequent{\Gamma_1, \ddiamond{x' = f(x), t' = 1 \& t \leq t_0}{\left(t = t_0 \land \ddiamond{x' = f(x), t' = 1 \& t \leq t_0 + t_1}{\left(t > t_0 \land \neg P_2\right)}\right)}}{\bot}\\
    \text{\textcircled{5}} &\equiv \lsequent{\ddiamond{x' = f(x), t' = 1 \& t \leq t_0 + t_1}{\left(t > t_0 \land \neg P_2 \land \ddiamond{x' = f(x), t' = 1 \& t \leq t_0}{t = t_0}\right)}}{\bot}
\end{align*}
Premise \textcircled{5} is resolved first, noticing that the inequality $t > t_0$ contradicts with the domain constraint of the second diamond modality, so axiom \irref{dx} yields the desired derivation.
\begin{sequentdeduction}
    \linfer[V]
        {\linfer[diamond+notl]
            {\linfer[dx]
                {\linfer[implyr]
                    {\linfer[qear]
                        {\lclose}
                        {\lsequent{t > t_0, t \leq t_0,\neg P_2}{t = t_0 \land \dbox{x' = f(x), t' = 1 \& t \leq t_0}{t = t_0}}}
                    }
                {\lsequent{t > t_0, \neg P_2}{t \leq t_0 \rightarrow t = t_0 \land \dbox{x' = f(x), t' = 1 \& t \leq t_0}{t = t_0}}}
                }
                {\lsequent{t > t_0, \neg P_2}{\dbox{x' = f(x), t' = 1 \& t \leq t_0}{t = t_0}}}
            }
        {\lsequent{t > t_0, \neg P_2, \ddiamond{x' = f(x), t' = 1 \& t \leq t_0}{t = t_0}}{\bot}}
        }
    { \lsequent{\ddiamond{x' = f(x), t' = 1 \& t \leq t_0 + t_1}{\left(t > t_0 \land \neg P_2 \land \ddiamond{x' = f(x), t' = 1 \& t \leq t_0}{t = t_0}\right)}}{\bot}}
\end{sequentdeduction}

We now prove the remaining premise \textcircled{4}. Intuitively speaking, the first diamond modality reaches some state where $t = t_0, P_1$ are both true (by premise \textcircled{a}), from which premise \textcircled{c} implies the truth of $\Gamma_2$, and therefore premise \textcircled{b} gives a contradiction with the second modality. This gives the following derivation, where each of \textcircled{a}, \textcircled{b}, \textcircled{c} indicates the corresponding assumption being cut in:

\begin{sequentdeduction}
    \linfer[cut+drd]
    {\linfer[dWd]
        {\linfer[V]
            {\linfer[cut]
                {\linfer[cut+implyl]
                    {\linfer[diamond+notl]
                        {
                            \linfer[K]
                                {\lclose}
                            {\lsequent{\dbox{x' = f(x), t' = 1 \& t \leq t_0 + t_1}{P_2}}{\dbox{x' = f(x), t' = 1 \& t \leq t_0 + t_1}{\left(t \leq t_0 \lor P_2\right)}}}
                        }
                        {\lsequent{\dbox{x' = f(x), t' = 1 \& t \leq t_0 + t_1}{P_2}, \ddiamond{x' = f(x), t' = 1 \& t \leq t_0 + t_1}{\left(t > t_0 \land \neg P_2\right)}}{\bot}}
                        &
                        {\text{\textcircled{b}}}
                    }
                    {\lsequent{\Gamma_2, \ddiamond{x' = f(x), t' = 1 \& t \leq t_0 + t_1}{\left(t > t_0 \land \neg P_2\right)}}{\bot}}
                    &
                    {\text{\textcircled{c}}}
                }
                {\lsequent{t = t_0, P_1, \ddiamond{x' = f(x), t' = 1 \& t \leq t_0 + t_1}{\left(t > t_0 \land \neg P_2\right)}}{\bot}}
            }
        {\lsequent{\ddiamond{x' = f(x), t' = 1 \& t \leq t_0 \land P_1}{\left(t = t_0 \land P_1 \land \ddiamond{x' = f(x), t' = 1 \& t \leq t_0 + t_1}{\left(t > t_0 \land \neg P_2\right)}\right)}}{\bot}}
        }
    {\lsequent{\Gamma_1, \ddiamond{x' = f(x), t' = 1 \& t \leq t_0 \land P_1}{\left(t = t_0 \land \ddiamond{x' = f(x), t' = 1 \& t \leq t_0 + t_1}{\left(t > t_0 \land \neg P_2\right)}\right)}}{\bot}}
        &
    {\text{\textcircled{a}}}
    }
    {\lsequent{\Gamma_1, \ddiamond{x' = f(x), t' = 1 \& t \leq t_0}{\left(t = t_0 \land \ddiamond{x' = f(x), t' = 1 \& t \leq t_0 + t_1}{\left(t > t_0 \land \neg P_2\right)}\right)}}{\bot}}
\end{sequentdeduction}
This completes the derivation of \irref{stepExt} and thus also completing the proof of \rref{thm:axioms for step liveness}.
\end{proof}

Finally, we finish the proof of \rref{ex: symbolic existence}.

\begin{proof}[Proof of \rref{ex: symbolic existence}]
    Let $\eps \in \Q^+$ be arbitrary, and assume without loss of generality that $\eps < 1$. The main idea of the derivation is to iteratively apply axiom \irref{stepEx} to obtain (shrinking) iterates of existence intervals using $R$ = $\alpha \abs{x_0}$ for some suitably chosen $\alpha \in \Q^+$, these existence intervals can be chained together using \irref{stepExt}, giving the desired proof. First pick $\alpha \in \Q^+$ sufficiently small and $N \in \N$ sufficiently large such that the following hold
    \begin{itemize}
        \item $\frac{1}{\alpha + 1} - \frac{1}{\alpha N + 1} > 1 - \eps$
        \item $\frac{1}{\alpha + 1} - \frac{1}{\alpha N + 1} \geq \frac{1}{(\alpha + 1)^3}$. 
    \end{itemize}
    such choices are possible since $N \in \N$ is allowed to be arbitrarily large and dependent on $\alpha$. The derivation will use $N$ steps of \irref{stepEx} to show that $x(t)$ is bounded in $B[x_0, \alpha N x_0]$ for $t \in \left[0, (1 - \eps)\left(\frac{1}{x_0} - \frac{1}{3x_0^3}\right)\right)$ from which the desired claim concludes by axiom \irref{dualRight}. Note that the bound $\max_{y \in B(x_0, nx_0)} \abs{x^2 + 1} \leq (n + 1)^2x_0^2 + 1$ holds for all $n \in \N$. The derivation first begins by handling the trivial case where $\frac{1}{x_0} - \frac{1}{3x_0^3} < 0$ holds. 
    
    \begin{sequentdeduction}
        \linfer[implyr+cut+qear]
            {\linfer[orl]
                {\linfer[diamond+notr]
                    {\linfer[dx]
                        {\linfer[qear]
                            {\lclose}
                            {\lsequent{x = x_0, t = 0, x_0 > 0, \frac{1}{x_0} - \frac{1}{3x_0^3} < 0, t < (1 - \eps)\left(\frac{1}{x_0} - \frac{1}{3x_0^3}\right)}{\bot}}
                        }
                        {\lsequent{x = x_0, t = 0, x_0 > 0, \frac{1}{x_0} - \frac{1}{3x_0^3} < 0, \dbox{x' = x^2 + 1, t' = 1}{t < (1 - \eps)\left(\frac{1}{x_0} - \frac{1}{3x_0^3}\right)}}{\bot}}
                    }
                    {\lsequent{x = x_0, t = 0, x_0 > 0, \frac{1}{x_0} - \frac{1}{3x_0^3} < 0}{\ddiamond{x' = x^2 + 1, t' = 1}{t \geq (1 - \eps)\left(\frac{1}{x_0} - \frac{1}{3x_0^3}\right)}}}
                    &
                    {\text{\textcircled{1}}}
                }
            {\lsequent{x = x_0, t = 0, x_0 > 0, \frac{1}{x_0} - \frac{1}{3x_0^3} < 0 \lor \frac{1}{x_0} - \frac{1}{3x_0^3} \geq 0}{\ddiamond{x' = x^2 + 1, t' = 1}{t \geq (1 - \eps)\left(\frac{1}{x_0} - \frac{1}{3x_0^3}\right)}}}
            }
        {\lsequent{}{x = x_0 \land t = 0 \land x_0 > 0 \rightarrow \ddiamond{x' = x^2 + 1, t' = 1}{t \geq (1 - \eps)\left(\frac{1}{x_0} - \frac{1}{3x_0^3}\right)}}}
    \end{sequentdeduction}
    The remaining premise \textcircled{1} represents the case where $\frac{1}{x_0} - \frac{1}{3x_0^3} \geq 0$
    \[\text{\textcircled{1}} \equiv \lsequent{x = x_0, t = 0, x_0 > 0, \frac{1}{x_0} - \frac{1}{3x_0^3} \geq 0}{\ddiamond{x' = x^2 + 1, t' = 1}{t \geq (1 - \eps)\left(\frac{1}{x_0} - \frac{1}{3x_0^3}\right)}}\]
    The derivation of \textcircled{1} begins with axiom \irref{dualRight} and the bounded set $B[x_0, \alpha N x_0]$ (closed ball centered around $x_0$ with radius $\alpha N x_0$), followed by repeated applications of \irref{stepEx} and \irref{stepExt}. It uses the following constructs:
    \begin{itemize}
        \item Define the sequence $\{t_{n}\}_{0 \leq n \leq N}$ recursively with $t_0 = 0$, $t_{n} = t_{n - 1} + \frac{\alpha x_0}{(\alpha n + 1)^2x_0^2 + 1}$.
        \item For each $0 \leq n \leq N$, define the ODEs
        \begin{align*}
            \gamma_n &\equiv x' = x^2 + 1, t' = 1 \& t \leq t_n\\
            \beta_n &\equiv x' = x^2 + 1, t' = 1 \& x \in B[x_0, \alpha nx_0]
        \end{align*}
        \item For each $1 \leq n \leq N$, define the formula 
        \[\Gamma_n \equiv x_0 > 0 \land x \in B[x_0, \alpha nx_0] \land t = t_n\]
        where $\alpha nx_0$ denotes standard multiplication. Note that crucially the upper bound $N$ and the parameter $\alpha$ are constant, fixed values. 
    \end{itemize}
    Note that formulas of the form 
    \[\phi_n \equiv \Gamma_n \rightarrow \dbox{\gamma_{n + 1}}{(x_0 > 0 \land x \in B[x_0, \alpha(n + 1)x_0])}\]
    are valid and derivable from axiom \irref{stepEx} for every $0 \leq n \leq N - 1$, the proof is as follows. 

    \begin{sequentdeduction}
        \linfer[implyr+V]
            {\linfer[cut+stepEx+drw]
                {
                    \linfer[dualLeft]
                    {\linfer[id]
                        {\lclose}
                    {\lsequent{\dbox{\gamma_{n + 1}}{x \in B[x_0, \alpha(n + 1)x_0]}}{\dbox{\gamma_{n + 1}}{x \in B[x_0, \alpha(n + 1)x_0]}}}
                    }
                    {\lsequent{\ddiamond{\beta_{n + 1}}{t \geq t_{n + 1}}}{\dbox{\gamma_{n + 1}}{x \in B[x_0, \alpha(n + 1)x_0]}}}
                    &
                    {\text{\textcircled{2}}}
                }
                {\lsequent{\Gamma_n}{\dbox{\gamma_{n + 1}}{x \in B[x_0, \alpha(n + 1)x_0]}}}
            }
        {\lsequent{}{\Gamma_n \rightarrow \dbox{\gamma_{n + 1}}{(x_0 > 0 \land x \in B[x_0, \alpha(n + 1)x_0])}}}
    \end{sequentdeduction}
    Where the open premise \textcircled{2} arising from \irref{cut} is
    \[\lsequent{}{ \forall x \in B[x_0, \alpha n x_0] \forall y \in B[x, \alpha x_0]\norm{y^2 + 1} \leq (\alpha (n + 1) + 1)^2x_0^2 + 1}\]
    which is valid as $y^2 + 1$ is maximized when $\abs{y}$ is maximized, therefore the maximum is attained when $y = x_0 + \alpha n x_0 + \alpha x_0 = (\alpha (n + 1) + 1)x_0$ and the premise is proven by axiom \irref{qear}. We can now derive the example.
    
    \begin{sequentdeduction}
        \linfer[dualRight]
            {\linfer[dC+dW]
                {\linfer
                    {\text{\textcircled{3}}}
                    {\lsequent{x = x_0, t = 0, x_0 > 0}{\dbox{\gamma_{N}}{x \in B[x_0, \alpha N x_0]}}}
                    &
                    \linfer[qear]
                    {\lclose}
                    {\lsequent{t \leq (1 - \eps)\left(\frac{1}{x_0} - \frac{1}{3x_0^3}\right)}{t \leq t_{N}}}
                }
                {\lsequent{x = x_0, t = 0, x_0 > 0}{\dbox{x' = x^2 + 1, t' = 1 \& t \leq (1 - \eps)\left(\frac{1}{x_0} - \frac{1}{3x_0^3}\right)}{x \in B[x_0, \alpha N  x_0]}}}
            }
        {\lsequent{x = x_0, t = 0, x_0 > 0, \frac{1}{x_0} - \frac{1}{3x_0^3} \geq 0}{\ddiamond{x' = x^2 + 1, t' = 1}{t \geq (1 - \eps)\left(\frac{1}{x_0} - \frac{1}{3x_0^3}\right)}}}
    \end{sequentdeduction}
    The resolution of the right premise with axiom \irref{qear} requires justification, it is not trivial that the inequality $t_{N} \geq (1 - \eps)\left(\frac{1}{x_0} - \frac{1}{3x_0^3}\right)$ holds. Lower-bounding $t_N$ with the corresponding integral yields:
    \begin{align*}
        t_N = \sum_{n = 1}^N \frac{\alpha x_0}{(\alpha n + 1)^2x_0^2 + 1} \geq \int_{1}^{N} \frac{\alpha x_0}{(\alpha t + 1)^2x_0^2 + 1} dt = \arctan((\alpha N + 1)x_0) - \arctan((\alpha + 1)x_0)
    \end{align*}
    It is well-known that the bound 
    \[\frac{\pi}{2} - \frac{1}{x} \leq \arctan(x) \leq \frac{\pi}{2} - \frac{1}{x} + \frac{1}{3x^3}\]
    holds for all $x > 0$ (can be derived from standard Taylor bounds of $\arctan(x)$ and the identity $\arctan(x) + \arctan\left(\frac{1}{x}\right) = \frac{\pi}{2}$). Utilizing this, we have
    \begin{align*}
        t_N &\geq \arctan((\alpha N + 1)x_0) - \arctan((\alpha + 1)x_0) \\
        &\geq \arctan((\alpha N + 1)x_0) - \frac{\pi}{2} + \frac{1}{(\alpha + 1)x_0} - \frac{1}{3(\alpha + 1)^3x_0^3}\\
        &\geq \frac{\pi}{2} - \frac{1}{(\alpha N + 1)x_0} - \frac{\pi}{2} + \frac{1}{(\alpha + 1)x_0} - \frac{1}{3(\alpha + 1)^3x_0^3}\\
        &= \frac{1}{x_0}\left(\frac{1}{\alpha + 1} - \frac{1}{\alpha N + 1}\right) - \frac{1}{3x_0^3} \left(\frac{1}{(\alpha + 1)^3}\right)\\
        &\geq \left(\frac{1}{\alpha + 1} - \frac{1}{\alpha N + 1}\right)\left(\frac{1}{x_0} - \frac{1}{3x_0^3}\right)
    \end{align*}
    where the final inequality follows from the assumption that $\frac{1}{\alpha + 1} - \frac{1}{\alpha N + 1} \geq \frac{1}{(\alpha + 1)^3}$. Finally, since $\frac{1}{\alpha + 1} - \frac{1}{\alpha N + 1} \geq 1 - \eps$ by construction, the desired bound holds and the application of axiom \irref{qear} is justified. At last, the derivation of \textcircled{3} can be completed by iteratively applying axioms \irref{stepEx+stepExt}, note that by construction $\lsequent{t = t_n, x_0 > 0, x \in B[x_0, \alpha n x_0]}{\Gamma_{n}}$ is always valid. 
    
    \begin{sequentdeduction}
        \linfer[cut+stepEx]
            {\linfer[implyl]
                {\linfer[stepExt+implyl]
                    {\linfer[stepExt+implyl]
                        {\linfer[stepExt+implyl]
                            {\linfer[id]
                                {\lclose}
                                {\lsequent{x_0 > 0, t = 0, \dbox{\gamma_N}{x \in B[x_0, \alpha N x_0]}}{\dbox{\gamma_{N}}{x \in B[x_0, \alpha N x_0]}}}
                            }
                            {\cdots}
                        }
                        {\lsequent{x_0 > 0, t = 0, \dbox{\gamma_2}{x \in B[x_0, 2\alpha x_0]}}{\dbox{\gamma_{N}}{x \in B[x_0, \alpha N x_0]}}}
                        &
                        \linfer[qear]
                        {\lclose}
                        {\lsequent{t = t_1, x_0 > 0, x \in B[x_0, \alpha x_0]}{\Gamma_1}}
                        &
                        \linfer[stepEx]
                        {\lclose}
                        {\phi_1}
                    }
                    {\lsequent{x_0 > 0, t = 0, \dbox{\gamma_1}{(x_0 > 0 \land x \in B[x_0, \alpha N x_0])}}{\dbox{\gamma_{N}}{x \in B[x_0, \alpha N x_0]}}}
                }
                {\lsequent{x = x_0, t = 0, x_0 > 0, \phi_0}{\dbox{\gamma_{N}}{x \in B[x_0, \alpha N x_0]}}}
            }
        {\lsequent{x = x_0, t = 0, x_0 > 0}{\dbox{\gamma_{N}}{x \in B[x_0, \alpha N x_0]}}}
    \end{sequentdeduction}
Where the abbreviated derivation consists of $N$ levels, at the $n$-th level $\dbox{\gamma_n}{x \in B[x_0, \alpha n x_0]}$ is proven via applications of \irref{stepEx+stepExt}, this completes the derivation of the example. Note that when choosing the parameters $\alpha, N$, all sufficiently small $\alpha$ and all sufficiently large $N$ will suffice. As an example, suppose that the desired error threshold is $1 - 0.1 = 0.9$ with $\eps = 0.1$, then $\alpha = 0.01$ and $N = 10^4$ works. 
\end{proof}

\bibliographystyle{ACM-Reference-Format}
\bibliography{references}

\end{document}